\documentclass[11pt,a4paper]{article}

\usepackage[utf8]{inputenc}
\usepackage[T1]{fontenc}
\usepackage[margin=2.2cm]{geometry}
\usepackage{amsmath,amssymb,amsthm}
\usepackage{mathtools}
\usepackage{physics}
\usepackage{booktabs}
\usepackage{array}
\usepackage{enumitem}
\usepackage{hyperref}
\usepackage{cleveref}
\usepackage{float}
\usepackage{caption}
\usepackage{xcolor}
\usepackage{fancyhdr}
\usepackage{titlesec}
\usepackage{doi}
\usepackage{tikz}
\usepackage{mathrsfs}
\usetikzlibrary{decorations.pathmorphing, arrows.meta, patterns, calc, positioning}

\hypersetup{colorlinks=true,linkcolor=blue!70!black,citecolor=green!50!black,urlcolor=blue!60!black}

\newcommand{\kap}{\kappa}
\newcommand{\lam}{\lambda}
\newcommand{\rhot}{\tilde{\rho}}

\newtheorem{theorem}{Theorem}[section]
\newtheorem{proposition}[theorem]{Proposition}

\newtheorem{corollary}[theorem]{Corollary}

\theoremstyle{definition}

\newtheorem{remark}[theorem]{Remark}

\title{\LARGE\bfseries Weak-Coupling Limit of the Lattice Nonlinear Schr\"odinger Integral Equation}
\author{Felipe Taha Sant'Ana \\ \small{\emph{Center of Mathematics, Computing, and Cognition, Federal University of ABC}}}
\date{}

\begin{document}
\maketitle
\pagestyle{plain}

\begin{abstract}
We study the ground-state integral equation of the quantum lattice nonlinear
Schr\"odinger model, equivalently the isotropic Heisenberg XXX chain at spin
$s = -1$, whose solution is the root density on a Fermi interval.
In the weak-coupling limit the equation is doubly singular: the driving term and
the kernel collapse onto delta functions, and the solution separates
into three regions on three scales, an inner peak, the bulk of the Fermi sea,
and a boundary layer at the band edge. 
We treat the singular structure probabilistically. We identify both the kernel 
and the driving term as  the Cauchy density, in such a way that the root density corresponds to the 
Green kernel of a Cauchy random walk killed on leaving the Fermi sea. Consequently, the three
regions are then the potential kernel, the interval exit law, and the
ascending-ladder renewal function of that walk.  
This construction determines the logarithmic growth of the central density and the digamma form of the inner profile, whose Fourier-space solution is the Bose--Einstein distribution.  In the bulk, the density approaches the Green function of the half-Laplacian, while the Wiener--Hopf factorisation fixes the edge behaviour and identifies a candidate scale for non-perturbative corrections. The analysis also yields closed asymptotic expressions for the total density and the ground-state energy.  Nystr\"om solutions over a broad range of Fermi rapidities confirm the three-region asymptotic description.
\end{abstract}

\tableofcontents

%==========================================================================
\section{Introduction}\label{sec.intro}
%==========================================================================

The Bethe ansatz, introduced in 1931 for the spin-$\frac{1}{2}$ Heisenberg
chain~\cite{Bethe1931}, provides exact access to the spectrum and thermodynamics of a
wide class of one-dimensional quantum systems.
In the thermodynamic limit, the Bethe equations reduce to linear integral equations
for the density of quasi-momenta, and the extraction of physical observables---particle
density, energy, excitation spectrum---reduces to the analysis of these integral equations
in the relevant parameter regimes~\cite{Gaudin2014,KorepinBogoliubovIzergin1993,Takahashi1999}.

The prototypical example is the Lieb--Liniger model~\cite{LiebLiniger1963,Lieb1963},
describing a one-dimensional Bose gas with repulsive $\delta$-function interactions.
In the thermodynamic limit, its ground state is determined by the integral equation
\begin{equation}\label{eq.LL}
  2\pi\,\rho(\lambda)= 1 + \int_{-q}^{q} K(\lambda -\mu)\,\rho(\mu)\,d\mu\,,
  \qquad K(\lambda) = \frac{2\kappa}{\kappa^2+\lambda^2}\,,
\end{equation}
where $\kappa>0$ is a coupling parameter and $q$ is the Fermi boundary, determined
self-consistently by the particle density.
The Yang--Yang thermodynamic formalism~\cite{YangYang1969} extends it to finite
temperatures.

Analytical developments of iterative solutions of the integral equation were performed in early 
work~\cite{Popov1977} and more recently~\cite{KaminakaWadati2011}, as well as rigorous results for the ground-state
energy~\cite{TracyWidom2016}, while computations of higher-order terms by double
extrapolation have also been achieved~\cite{Prolhac2017}. 
Accurate analytical results and conjectures for the excitation spectrum~\cite{LangHekkingMinguzzi2017}, extensions of the expansion to high order 
and formulation of structural conjectures~\cite{Ristivojevic2014,Ristivojevic2019}, and the structure of both weak- and strong-coupling expansions
for several observables~\cite{Lang2019} have also been accomplished.
These works revealed that the weak-coupling expansion proceeds in half-integer powers
of $\gamma=\kappa/D$, with coefficients involving Riemann zeta values at odd arguments.
Subsequently, exact perturbative
results and demonstration that the series is not Borel summable were obtained~\cite{MarinoReis2019}, establishing a
connection to non-perturbative effects in the dual Gaudin--Yang
model~\cite{Yang1967,Gaudin1967}.

A key mathematical insight~\cite{Gaudin1971} is that the
Lieb--Liniger integral equation belongs to the same family as the Love integral equation
governing the electrostatic potential of two coaxial circular conducting
discs~\cite{Love1949}.
The capacitor problem has a long history in mathematical physics, dating back to
Kirchhoff~\cite{Kirchhoff1877} and Maxwell~\cite{Maxwell1873}, with important
contributions in the early twentieth century~\cite{Nicholson1924,Ignatowsky1932} and thereafter~\cite{Cooke1958,Hutson1963,Sneddon1966}.
The small-separation asymptotics of the capacitor, involving logarithmic terms and
Wiener--Hopf boundary-layer corrections, are directly relevant to the weak-coupling
analysis of the Bethe ansatz.
Recently,~\cite{FarinaLangMartin2022} provided a
comprehensive survey of Love--Lieb integral equations, unifying the electrostatic and
quantum-mechanical perspectives. Moreover, analytical results for the capacitance have been obtained~\cite{ReichertRistivojevic2020}, as well as extension of the Wiener--Hopf analysis to
study non-perturbative corrections for the disc capacitor and the $O(N)$ free energy~\cite{BajnokBalogHegedusVona2022}. 

The discovery that the Lieb--Liniger expansion
is non-Borel summable~\cite{MarinoReis2019} sparked a programme of resurgent analysis in integrable quantum field theories. 
The key tool is the Wiener--Hopf factorisation of the integral equation's symbol
(developed in this context in~\cite{Volin2010,Volin2011}), which gives
systematic access to both perturbative and non-perturbative sectors.
Extending such an analysis to renormalons in integrable field
theories has also been the subject of investigation~\cite{MarinoReis2020a,MarinoReis2020b}, not to mention the Hubbard model~\cite{MarinoReis2022Hubbard}, and the energy gap
problem~\cite{MarinoReis2022gap}.
Important developments were carried out on the $1/N$ expansion~\cite{DiPietroEtAl2021} and detailed
analyses of the $O(4)$ and $O(3)$ sigma models~\cite{AbbottEtAl2021,BajnokBalogVona2022,BajnokBalogVona2022instanton}.
More recently,~\cite{BajnokBalogVona2025} obtained the complete
transseries for conserved charges. 
Such developments are reviewed in~\cite{AnicetoBasarSchiappa2019,Dorigoni2019}.
From this programme the present paper takes only the symbol and the location of
its zeros, which fix the scale of the corrections lying beyond the $1/Q$
expansion.

The quantum inverse scattering method (QISM)~\cite{SklyaninTakhtajanFaddeev1979,Sklyanin1982} provides the algebraic framework
for constructing integrable quantum models on the lattice (see
also~\cite{Faddeev1981,Faddeev1984,Faddeev1995,Faddeev1996} for reviews).
The quantum nonlinear Schr\"odinger (NLS) model on a lattice was constructed
within this framework~\cite{IzerginKorepin1981,IzerginKorepin1982,Kulish1981,Sklyanin1982}, 
with local Hamiltonians worked out in~\cite{TarasovTakhtajanFaddeev1983}.
The lattice NLS provides an integrable discretisation of the continuous NLS field
theory, preserving the full integrability structure at the quantum
level~\cite{KorepinBogoliubovIzergin1993,BogoliubovKorepin1986}.
The model is closely related to the $q$-boson hopping model~\cite{BogoliubovBulloughTimonen1993} 
and it was diagonalised in~\cite{vanDiejen2006}, while~\cite{Dorlas1993} established completeness of the Bethe ansatz eigenstates for the NLS model.

The lattice NLS model is equivalent to the isotropic Heisenberg XXX spin chain
with spin $s=-1$, in the sense of the negative-spin representations recently
introduced~\cite{HaoEtAl2019}. The thermodynamics of the model was also studied recently~\cite{ZhongEtAl2025}.
The higher-spin XXX chain has been long
solved~\cite{KulishReshetikhin1981,KirillovReshetikhin1986}. 
The negative-spin case extends this to non-compact representations, connecting to the
SL(2,$\mathbb{R}$) spin chains that appear in high-energy quantum chromodynamics (QCD)~\cite{KirchManashov2004,DerkachovKorchemskyManashov2001,
DerkachovKorchemskyManashov2002,DerkachovKorchemskyManashov2003}.
Non-compact spin chains have also been studied as integrable stochastic
processes~\cite{FrassekGiardinaKurchan2020},
and the thermodynamic limit of SL(2,$\mathbb{C}$) chains has also been analysed~\cite{GranetJacobsenSaleur2020}.
Building up further its connection to QCD, the $\kappa\to 0$ limit of the XXX$_{s=-1}$ chain corresponds to the
weak-coupling/high-density regime of reggeized gluon dynamics. Furthermore, it was discovered~\cite{Lipatov1993,Lipatov1994,Lipatov1995,Lipatov1997} that the
BFKL (Balitsky-Fadin-Kuraev-Lipatov) pomeron equation---governing small-$x$ parton
evolution~\cite{FadinKuraevLipatov1975,KuraevLipatovFadin1976,KuraevLipatovFadin1977,
BalitskyLipatov1978}---possesses a hidden integrable structure equivalent to a
non-compact Heisenberg spin chain. Ref.~\cite{FaddeevKorchemsky1995} proved the complete integrability of
the BFKL Hamiltonian and identified it with the XXX$_{s=0}$ spin chain, while~\cite{Korchemsky1995,Korchemsky1996,Korchemsky1997} developed the Bethe
ansatz for the QCD pomeron and studied its quasiclassical limit (see
also~\cite{BelitskyBraunGorskyKorchemsky2004,Korchemsky2012} for reviews).
Finally,~\cite{deVegaLipatov2001} extended the analysis to multi-reggeon
interactions.

The ground-state integral equation of the lattice NLS model reads
\begin{equation}\label{eq.latticeNLS}
  2\pi\,\rho(\lambda)
  = K(\lambda)
  + \int_{-q}^{q} K(\lambda -\mu)\,\rho(\mu)\,d\mu\,.
\end{equation}
The crucial difference from~\eqref{eq.LL} is that the driving term is the
Lorentzian $K(\lambda)$---itself degenerating into
$2\pi\delta(\lambda)$ as $\kappa\to 0$---rather than the constant 1.
This makes the weak-coupling limit doubly singular: both the driving term and
the kernel collapse into $\delta$-functions simultaneously, and the techniques
developed for the Lieb--Liniger case---which exploit the regularity of the constant
driving term---do not directly apply.

In this paper we study equation~\eqref{eq.latticeNLS} as $\kappa\to 0$.
We tackle the problem by a rescaling of the spectral variable, a procedure that allows us to  
reduce the problem to an exploitation of three distinct regions. 
The methods draw on classical techniques of singular perturbation
theory~\cite{VanDyke1975,BenderOrszag1999,Hinch1991,Lagerstrom1988} and on the
Wiener--Hopf factorisation technique~\cite{Noble1958,Krein1958,GohbergKrein1958}
as applied to convolution integral equations on finite
intervals~\cite{Sakhnovich2015}. 
Our main findings are as follows.  After rescaling the rapidity as
$\xi=\lambda/\kappa$, the thermodynamic equation depends on the microscopic
parameters only through $Q=q/\kappa$.  Thus, the nontrivial asymptotic regime is
$Q\to\infty$, equivalently the high-filling limit; at fixed spin, the coupling
$\kappa$  sets the rapidity scale. Then, we realise that the rescaled equation admits a
probabilistic representation as both its kernel and its driving term correspond to 
the standard Cauchy density: $\tilde\rho$ is the Green kernel of a Cauchy random
walk killed on leaving $[-Q,Q]$.  The potential kernel of the walk is expressed
through the digamma function, and a stopped-walk identity yields, from a
single construction, the inner profile, the central density, and the outer
limit. 
In addition, we find an identity for the inner profile 
whose full-line Fourier multiplier is the Bose--Einstein distribution
$1/(e^{|p|}-1)$.  On the outer scale $\xi=Qt$, $0<|t|<1$, the same Green
representation yields  the Green function of the Dirichlet half-Laplacian on the
interval with a source at the origin.  Its integral gives the leading density
$D(Q)\sim Q$, while its mean over the Fermi sea is $1/2$.
The subleading density is obtained from the duality with Love's integral
equation and its probabilistic interpretation as the mean exit time of the same
Cauchy walk. Interestingly, the classical small-separation expansion of the circular-disc
capacitor yields the same result. Near
$\xi=\pm Q$, the scaled density is controlled by the ascending-ladder renewal
function $V$ of the Cauchy walk, which, as we will discuss, is connected to the edge limit. 
The corresponding Wiener--Hopf symbol is $1-e^{-|p|}$, and its factorisation
supplies the two-term ladder-renewal expansion; the same expansion, applied at
both barriers, fixes the constant in the total density.
The half-Laplacian limit also determines the low-lying spectrum of the
truncated kernel.
Finally, the correspondence of the driving term with the energy kernel
results in the energy profile \eqref{eq.E_inner_full}.
The complex zeros of the Wiener--Hopf symbol at $p=\pm2\pi i$ further identify
$e^{-2\pi Q}$ as the natural candidate scale for exponentially small
corrections. Nystr\"om solutions of the original
integral equation corroborate the inner, outer, edge, density, and spectral
asymptotics.

The paper is organised as follows.
Section~\ref{sec.model} introduces the model, its Bethe ansatz solution, and the resulting integral equation.
Section~\ref{sec.inner} develops the inner-region analysis, deriving the Bose--Einstein
distribution and the logarithmic divergence.
Section~\ref{sec.outer} treats the outer region and derives the density
expansion; Subsection~\ref{sec.constant} determines the constant $C$ from the
killed Cauchy walk.
Section~\ref{sec.edge} describes the edge boundary layer and the Wiener--Hopf
factorisation that controls it.
Section~\ref{sec.observables} derives the ground-state energy and discusses the physical predictions, including the comparison with the Lieb--Liniger model.
Section~\ref{sec.conclusions} summarises the results and discusses open problems.
The appendices contain further analytical and numerical details: eigenvalue analysis of the truncated kernel
(Appendix~\ref{app.eigenvalues}), the logarithmic moment of the Cauchy exit law
(Appendix~\ref{app.exit-moment}), the next-order outer equation
(Appendix~\ref{app.density}), and the Wiener--Hopf factorisation
(Appendix~\ref{app.wienerhopf}).

%==========================================================================
\section{Lattice nonlinear Schr\"odinger model}
\label{sec.model}
%==========================================================================
We provide the fundamentals of the lattice nonlinear Schr\"odinger model, explain its equivalence to the isotropic Heisenberg
XXX spin chain with spin $s = -1$, derive the Bethe ansatz equations,
and take the thermodynamic limit to obtain the integral
equation~\eqref{eq.latticeNLS}, which is the central object of
the paper.

\subsection{Lattice nonlinear Schr\"odinger model}
\label{sec.lattice-NLS}
The algebraic Bethe ansatz, or quantum inverse scattering method (QISM),
constructs integrable quantum lattice models from
solutions of the Yang--Baxter equation.  The starting point is an
R-matrix $R(\lambda) \in \mathrm{End}(\mathbb{C}^2 \otimes
\mathbb{C}^2)$ satisfying the Yang--Baxter equation
\begin{equation}\label{eq.YBE}
  R_{12}(\lambda - \mu)\,R_{13}(\lambda)\,R_{23}(\mu)
  = R_{23}(\mu)\,R_{13}(\lambda)\,R_{12}(\lambda - \mu)
\end{equation}
in $\mathrm{End}(\mathbb{C}^2 \otimes \mathbb{C}^2 \otimes
\mathbb{C}^2)$.  For the models of interest here, the R-matrix is the
rational (XXX-type) solution
\begin{equation}\label{eq.R-matrix}
  R(\lambda) = \lambda\,\mathbf{I} + i\,\mathbf{P}\,,
\end{equation}
where $\mathbf{I}$ is the identity and $\mathbf{P}$ is the
permutation operator on $\mathbb{C}^2 \otimes \mathbb{C}^2$.
Explicitly, in the standard basis,
\begin{equation}\label{eq.R-explicit}
  R(\lambda) =
  \begin{pmatrix}
    \lambda + i & 0 & 0 & 0 \\
    0 & \lambda & i & 0 \\
    0 & i & \lambda & 0 \\
    0 & 0 & 0 & \lambda + i
  \end{pmatrix}\,.
\end{equation}

The construction of an integrable model on a lattice of $M$ sites
proceeds by choosing, at each site $n$, a local Lax operator
$L_n(\lambda) \in \mathrm{End}(\mathbb{C}^2 \otimes
\mathcal{H}_n)$, where $\mathbb{C}^2$ is the auxiliary space and
$\mathcal{H}_n$ is the local quantum space at site $n$.  The Lax
operator must satisfy the intertwining relation
\begin{equation}\label{eq.RLL}
  R_{12}(\lambda - \mu)\,L_{1n}(\lambda)\,
  L_{2n}(\mu)
  = L_{2n}(\mu)\,L_{1n}(\lambda)\,
  R_{12}(\lambda - \mu)\,,
\end{equation}
which ensures that the monodromy matrix
\begin{equation}\label{eq.monodromy}
  T(\lambda) = L_M(\lambda)\,L_{M-1}(\lambda)
  \cdots L_1(\lambda)
  =:
  \begin{pmatrix}
    A(\lambda) & B(\lambda) \\ C(\lambda) & D(\lambda)
  \end{pmatrix}
\end{equation}
satisfies the RTT relation
\begin{equation}\label{eq.RTT}
  R_{12}(\lambda - \mu)\,T_1(\lambda)\,T_2(\mu)
  = T_2(\mu)\,T_1(\lambda)\,R_{12}(\lambda - \mu)\,.
\end{equation}
The transfer matrix $\tau(\lambda) = \mathrm{tr}_{\mathbb{C}^2}
T(\lambda) = A(\lambda) + D(\lambda)$ then generates a commuting
family: $[\tau(\lambda),\,\tau(\mu)] = 0$ for all $\lambda, \mu$.
The Hamiltonian and higher conserved charges are extracted from the
logarithmic derivative of $\tau(\lambda)$ at a reference
point~\cite{TarasovTakhtajanFaddeev1983, KorepinBogoliubovIzergin1993}.

The local quantum space at each site of the lattice NLS model is a bosonic Fock space
$\mathcal{H}_n = \mathrm{span}\{|0\rangle_n, |1\rangle_n,
|2\rangle_n, \ldots\}$, with creation and annihilation operators
$\psi_n^\dagger$, $\psi_n$ satisfying
$[\psi_n,\,\psi_m^\dagger] = \delta_{nm}$ and number operator
$N_n = \psi_n^\dagger \psi_n$.
The isotropic Heisenberg XXX spin chain with spin
$s$~\cite{KulishReshetikhin1981, KirillovReshetikhin1986} is built from the
local Lax operator
\begin{equation}\label{eq.Lax-spin-s}
  L_n^{(s)}(\lambda) = \lambda\,\mathbf{I}
  + i\sum_{\alpha=1}^{3}\sigma_\alpha \otimes S_n^\alpha
  =
  \begin{pmatrix}
    \lambda + i S_n^z & i S_n^- \\
    i S_n^+ & \lambda - i S_n^z
  \end{pmatrix}\,,
\end{equation}
where $\sigma_\alpha$ are the Pauli matrices (acting on the auxiliary
$\mathbb{C}^2$), $S_n^\alpha$ generate $\mathfrak{sl}(2)$ at site $n$, and
$S^\pm = S^1 \pm iS^2$.  The intertwining relation~\eqref{eq.RLL} with the
rational R-matrix~\eqref{eq.R-matrix} is verified using only the commutation
relations, so it holds for any representation.  For $s = 1/2$ and
$\mathcal{H}_n = \mathbb{C}^2$ this recovers the original Heisenberg
model~\cite{Bethe1931}.
Representation independence is what allows $s$ to be continued to negative
values, provided one works with infinite-dimensional representations of
$\mathfrak{sl}(2)$~\cite{HaoEtAl2019}.  For $s = -1$ the representation is
realised on the bosonic Fock space
$\mathcal{H}_n = \mathrm{span}\{|k\rangle_n : k = 0, 1, 2, \ldots\}$
by the generators
\begin{equation}\label{eq.sl2-bosonic}
  S_n^+ = -(N_n + 2)\,\psi_n\,,\qquad
  S_n^- = \psi_n^\dagger\,,\qquad
  S_n^z = -(N_n + 1)\,.
\end{equation}
A direct computation gives $[S^z, S^\pm] = \pm S^\pm$ and
$[S^+, S^-] = -2(N+1) = 2S^z$, while
$\mathbf{S}^2 = (S^z)^2 + \frac{1}{2}(S^+S^- + S^-S^+) = 0 = s(s+1)$
at $s = -1$.  The spectrum of $S^z$ is $\{-1,-2,-3,\ldots\}$, unbounded below
and bounded above by $-|s|$.  The local Hilbert space is infinite-dimensional,
as it is for every non-compact ($s<0$) representation, whereas the standard
higher-spin chains have finite-dimensional local spaces.  Note that it is
$S^+$, not $S^-$, that annihilates the Fock vacuum, so $|0\rangle_n$ is the
highest-weight vector of the module.

Up to a gauge transformation and a shift of the spectral parameter,
\eqref{eq.Lax-spin-s} with~\eqref{eq.sl2-bosonic} is the L-operator of the
lattice NLS model of Izergin and
Korepin~\cite{IzerginKorepin1981, IzerginKorepin1982, KorepinBogoliubovIzergin1993}. Therefore, in this sense the two models coincide~\cite{HaoEtAl2019, ZhongEtAl2025},
\begin{equation}\label{eq.equivalence}
\text{Lattice NLS model}
  \;\equiv\;
  \text{XXX spin chain with spin $s = -1$}\,.
\end{equation}
In the lattice NLS presentation the model carries an interaction scale
$\kappa$ and a lattice spacing $\Delta$.  These are not independent of the spin:
in the convention of~\cite{HaoEtAl2019} the representation label obeys
\begin{equation}\label{eq.spin-kappa-delta}
  s = -\frac{2}{\kappa\Delta}\,,
\end{equation}
so fixing $s = -1$ imposes $\kappa\Delta = 2$.  Changing
$\kappa$ at fixed $s$ therefore changes $\Delta$ and the rapidity unit together.  
In the spin-chain presentation the same statement reads: the Lax
operator~\eqref{eq.Lax-spin-s} is defined at $\kappa = 1$, and arbitrary
$\kappa$ is restored by the rescaling $\lambda \to \lambda/\kappa$ of the
spectral parameter, which leaves the Yang--Baxter equation invariant and
rescales the width of the kernel in the Bethe equations~\eqref{eq.BAE} by
$\kappa$.  Section~\ref{sec.rescaling} draws the consequence for the
thermodynamic equation.

The local Hamiltonian is obtained from the logarithmic derivative of
$\tau(\lambda)$ at the point where the Lax operator
degenerates~\cite{TarasovTakhtajanFaddeev1983, KorepinBogoliubovIzergin1993},
\begin{equation}\label{eq.H-from-transfer}
  H = c_H\,\frac{d}{d\lambda}\log\tau(\lambda)\Big|_{\lambda = i\kappa}
  + \text{const}\,,
\end{equation}
with the normalisation $c_H$ fixed below by~\eqref{eq.energy-eigenvalue}.
For the bosonic realisation~\eqref{eq.sl2-bosonic} the result is a
nearest-neighbour hopping model with density-dependent
interactions~\cite{TarasovTakhtajanFaddeev1983}: besides the hopping
$-\sum_n(\psi^\dagger_{n+1}\psi_n + \psi^\dagger_n\psi_{n+1})$ and the on-site
repulsion $\frac{\kappa}{2}\sum_n N_n(N_n-1)$, it contains further terms
required by integrability, and it reduces to the continuous quantum NLS field
theory in the continuum
limit~\cite{KorepinBogoliubovIzergin1993, BogoliubovKorepin1986}.  

These are the same non-compact $\mathfrak{sl}(2)$ representations that carry the
integrable structure of high-energy QCD, in the identification reviewed in the
introduction.  Our integral equation~\eqref{eq.latticeNLS} governs the ground state
in the high-density regime, and the logarithmic scaling $\tilde\rho(0;Q) \sim (\log Q)/\pi$ that we find
is qualitatively consistent with the $\log(1/x)$ dependence characteristic of BFKL dynamics.
However, since the BFKL pomeron corresponds to spin $s = 0$ in the
Faddeev--Korchemsky identification~\cite{FaddeevKorchemsky1995} whereas our model has $s = -1$, a precise quantitative
mapping requires a careful continuation in the spin parameter, which is left for future work.

\subsection{Bethe ansatz and thermodynamic limit}
\label{sec.bethe-ansatz}
The algebraic Bethe ansatz diagonalises the transfer matrix
$\tau(\lambda)$ by constructing eigenstates of the form
\begin{equation}\label{eq.Bethe-state}
  |\{\lambda_j\}\rangle = \prod_{j=1}^{N}B(\lambda_j)\,|0\rangle\,,
\end{equation}
where $|0\rangle = |0\rangle_1 \otimes \cdots \otimes |0\rangle_M$ is
the Fock vacuum (all sites empty), $B(\lambda)$ is the off-diagonal
element of monodromy matrix \eqref{eq.monodromy}, and $N$ is the
number of particles.
Since $S^+_n|0\rangle_n = 0$ for the realisation~\eqref{eq.sl2-bosonic}, the
lower-left entry of each $L_n$ in~\eqref{eq.Lax-spin-s} annihilates the
pseudo-vacuum, so $C(\lambda)|0\rangle = 0$, and the diagonal entries act
diagonally with
\begin{equation}\label{eq.vacuum-eigenvalues}
  a(\lambda) = (\lambda - i\kappa)^{M}\,,
  \qquad
  d(\lambda) = (\lambda + i\kappa)^{M}\,,
\end{equation}
where the rescaling $\lambda \to \lambda/\kappa$ has been used to restore
$\kappa$ and $S^z|0\rangle_n = -|0\rangle_n$ has been inserted
in~\eqref{eq.Lax-spin-s}.
The state~\eqref{eq.Bethe-state} is an eigenstate of $\tau(\lambda)$ provided
$a(\lambda_j)/d(\lambda_j) = \prod_{k\neq j}
(\lambda_j-\lambda_k+i\kappa)/(\lambda_j-\lambda_k-i\kappa)$, that is, provided
the rapidities $\{\lambda_j\}_{j=1}^N$ satisfy the Bethe ansatz equations
(BAE)~\cite{KorepinBogoliubovIzergin1993, HaoEtAl2019, ZhongEtAl2025}
\begin{equation}\label{eq.BAE}
  \left(\frac{\lambda_j + i\kappa}{\lambda_j - i\kappa}\right)^M
  = \prod_{\substack{k=1 \\ k \neq j}}^{N}
  \frac{\lambda_j - \lambda_k - i\kappa}
       {\lambda_j - \lambda_k + i\kappa}\,,
  \qquad j = 1, \ldots, N\,.
\end{equation}
The half-width of the vacuum factor on the left is $\kappa$, the same as that
of the scattering factor on the right.  For general spin the two differ---the
vacuum factor has half-width $|s|\kappa$---and the correspondence at $s = -1$ is
the origin of every structural peculiarity discussed in this paper.

The energy eigenvalue follows from~\eqref{eq.H-from-transfer}.  As for all
XXX-type chains, the bare energy of a rapidity is minus the derivative of its
bare momentum~\cite{TarasovTakhtajanFaddeev1983, KorepinBogoliubovIzergin1993};
fixing the normalisation $c_H$ so that the proportionality constant is unity,
\begin{equation}\label{eq.energy-eigenvalue}
  E = -\sum_{j=1}^{N}\frac{2\kappa}{\lambda_j^2 + \kappa^2}
  + \text{const}
  = -\sum_{j=1}^{N} K(\lambda_j;\,\kappa) + \text{const}\,,
\end{equation}
with $K$ the Lorentzian defined in~\eqref{eq.LL}.  This same
function controls the bare momentum, the two-body scattering phase
and the bare energy.

We now take the thermodynamic limit $M, N \to \infty$ with fixed
density $D = N/M$ and fixed coupling $\kappa$.  In this limit, the
Bethe roots $\{\lambda_j\}$ become dense on an interval $[-q, q]$
(the Fermi sea), with a distribution function $\rho(\lambda)$
normalised so that
\begin{equation}\label{eq.density-normalisation}
  \int_{-q}^{q}\rho(\lambda)\,d\lambda = D = \frac{N}{M}\,.
\end{equation}
The Fermi boundary $q$ is determined self-consistently by the particle
density.
The passage to the continuum proceeds as follows.  Taking the logarithm
of~\eqref{eq.BAE} and introducing the bare momentum and the two-body phase
\begin{equation}\label{eq.bare-phase}
  p_0(\lambda) = i\log\frac{\lambda + i\kappa}{\lambda - i\kappa}\,,
  \qquad
  \theta(\lambda) = p_0(\lambda)\,,
  \qquad
  p_0'(\lambda) = \theta'(\lambda) = \frac{2\kappa}{\kappa^2+\lambda^2}\,,
\end{equation}
(the two functions coincide for $s=-1$, by the remark following~\eqref{eq.BAE}),
the Bethe equations read
\begin{equation}\label{eq.BAE-log}
  M\,p_0(\lambda_j) + \sum_{k \neq j}^{N}\theta(\lambda_j - \lambda_k)
  = 2\pi I_j\,,
\end{equation}
with distinct quantum numbers $I_j$, consecutive in the ground state.  The
counting function
\begin{equation}\label{eq.counting-function}
  z(\lambda) = \frac{1}{2\pi}\left[p_0(\lambda)
  + \frac{1}{M}\sum_{k=1}^{N}\theta(\lambda - \lambda_k)\right]\,,
  \qquad z(\lambda_j) = \frac{I_j}{M}\,,
\end{equation}
is monotonic and interpolates the quantum numbers.  As $M, N \to \infty$ the
roots become dense and the sums converge to integrals against $\rho$,
$M^{-1}\sum_k F(\lambda_k) \to \int_{-q}^{q}F(\mu)\rho(\mu)\,d\mu$ for
continuous $F$; the spacing of consecutive $I_j$ is unity, so
$z'(\lambda) = \rho(\lambda) + \rho_h(\lambda)$, and in the ground state every
quantum number inside the Fermi sea is occupied, $\rho_h \equiv 0$ on $[-q,q]$.
Differentiating~\eqref{eq.counting-function} and
using~\eqref{eq.bare-phase} gives the ground-state integral equation
\eqref{eq.latticeNLS}.  Its Lorentzian kernel, defined in~\eqref{eq.LL}, is
normalised so that
\begin{equation}\label{eq.K_normalisation}
  \int_{-\infty}^{\infty}K(\lambda;\,\kappa)\,d\lambda = 2\pi
\end{equation}
for all $\kappa > 0$.
This is a Fredholm integral equation of the second kind on the symmetric interval $[-q, q]$, with
a Lorentzian kernel that is symmetric, positive, and of convolution type.

The crucial structural difference from the Lieb--Liniger integral equation~\eqref{eq.LL}
lies in the driving term: in the Lieb--Liniger equation, the driving
term is the constant $1$, arising from the unrestricted quadratic dispersion of the continuous model; in the lattice NLS equation~\eqref{eq.latticeNLS}, the
driving term is the Lorentzian $K(\lambda;\,\kappa)$, reflecting the bounded bandwidth of the lattice momentum.
In the limit $\kappa \to 0$, $K(\lambda;\,\kappa) \to 2\pi\,\delta(\lambda)$ in the distributional sense, so the driving
term degenerates into a $\delta$-function.  This is the source of the
doubly singular structure that makes the weak-coupling analysis
of~\eqref{eq.latticeNLS} qualitatively different from that
of~\eqref{eq.LL}. Such a singular behaviour necessitates the methods developed in the following sections\footnote{
The kernel $K(\lambda - \mu;\,\kappa)$ and the driving term
$K(\lambda;\,\kappa)$ are both even functions, so the
solution $\rho(\lambda)$ is even: $\rho(-\lambda) = \rho(\lambda)$.
Since $K \geq 0$ and the driving term $K > 0$, the Neumann
series converges and $\rho(\lambda) > 0$ for all
$\lambda \in [-q,q]$~\cite{KorepinBogoliubovIzergin1993}.
The Fourier transform of the Lorentzian kernel is
\begin{equation}\label{eq.K-fourier}
  \hat K(p;\,\kappa) = \int_{-\infty}^{\infty}e^{-ip\lambda}\,
  \frac{2\kappa}{\kappa^2 + \lambda^2}\,d\lambda
  = 2\pi\,e^{-\kappa|p|}\,.
\end{equation}
This exponential decay in momentum space is crucial: the kernel is
almost critical in the sense that $\hat K(0;\,\kappa) = 2\pi$
equals the prefactor on the left-hand side
of~\eqref{eq.latticeNLS}, making the homogeneous equation
$2\pi\rho = K * \rho$ have eigenvalue~$1$ on the full line.  On the
finite interval $[-q,q]$, the largest eigenvalue is strictly less
than $2\pi$, but approaches it as $q/\kappa \to \infty$---which is
precisely the weak-coupling limit.}.
 
Once the density $\rho(\lambda)$ is known, the physical observables of the
ground state follow.  The number of particles per site
is~\eqref{eq.density-normalisation}, and the energy per site in the
thermodynamic limit follows from~\eqref{eq.energy-eigenvalue}:
\begin{equation}\label{eq.energy-per-site}
  e(\kap) = -\int_{-q}^{q}\frac{2\kappa}{\kappa^2 + \lambda^2}\,
  \rho(\lambda)\,d\lambda
  = -\int_{-q}^{q}K(\lambda;\,\kappa)\,\rho(\lambda)\,d\lambda\,.
\end{equation}
The kernel appearing in the energy integral is the same function
$K(\lambda;\,\kappa)$ that serves as both the driving term and the
scattering kernel in~\eqref{eq.latticeNLS}---a correspondence
specific to the lattice NLS model that will have important
consequences\footnote{See the identity
\eqref{eq.E_inner} in
Section~\ref{sec.observables}.}.
The total momentum per site is $P = 0$ for the ground state, by parity.
Excited states are described by particle--hole excitations above the
Fermi sea, with dressed energy and dressed momentum satisfying linear integral equations of the same
type as~\eqref{eq.latticeNLS}, with different driving
terms~\cite{KorepinBogoliubovIzergin1993, YangYang1969}.  The excitation spectrum and thermodynamics of
the XXX$_{s=-1}$ chain were recently studied~\cite{ZhongEtAl2025}. In this manuscript, we restrict ourselves to the ground-state.

%==========================================================================
\section{Inner region}
\label{sec.inner}
%==========================================================================
\subsection{Rescaled inner equation}
\label{sec.rescaling}

To resolve the singularity, we introduce the rescaled variables
\begin{equation}\label{eq.rescaling}
  \xi = \frac{\lambda}{\kappa}\,,\qquad
  \tilde\rho(\xi) = \kappa\,\rho(\kappa\xi)\,,\qquad
  Q = \frac{q}{\kappa}\,,
\end{equation}
which zoom in on the region $|\lambda| \sim \kappa$ where the
driving term is concentrated.  Since
$\tilde\rho(\xi)\,d\xi = \kappa\rho(\kappa\xi)\,d\lambda/\kappa = \rho(\lambda)\,d\lambda$,
the rescaling preserves the particle density,
\begin{equation}\label{eq.density-rescaled}
  \int_{-Q}^{Q}\tilde\rho(\xi)\,d\xi = \int_{-q}^{q}\rho(\lambda)\,d\lambda = D\,.
\end{equation}
Under this change of variables,
equation~\eqref{eq.latticeNLS} becomes
\begin{equation}\label{eq.inner-eq}
2\pi\,\tilde\rho(\xi) = \frac{2}{1 + \xi^2}
  + \int_{-Q}^{Q}\frac{2}{1 + (\xi - \eta)^2}\,
  \tilde\rho(\eta)\,d\eta\,.
\end{equation}
This is a Lieb--Liniger-type equation on the
expanding domain $[-Q,Q]$, but with the constant driving term
 replaced by the Lorentzian 
\begin{equation}\label{eq.rescaled_kernel}
    K(\xi) = \frac{2}{1+\xi^2}
\end{equation}
and the coupling
parameter set to $\kappa = 1$.  The limit $\kappa \to 0$ at fixed $q$
corresponds to $Q = q/\kappa \to \infty$.
Together with the density relation~\eqref{eq.density-rescaled}, the energy per
site transforms as
\begin{equation}\label{eq.observables-rescaled}
  e(\kap) = -\frac{1}{\kappa}\int_{-Q}^{Q}\frac{2}{1+\xi^2}\,
  \tilde\rho(\xi)\,d\xi\,.
\end{equation}
Equation~\eqref{eq.inner-eq} depends on $\kappa$ and $q$ only through $Q$.  Two
consequences are worth stating at the outset.
First, the particle density $D = D(Q)$ is a function of $Q$ alone, so holding
$D$ fixed holds $Q$ fixed; the limit $\kappa\to0$ at fixed density therefore
does not act on the integral equation at all, and the residual $1/\kappa$
in~\eqref{eq.observables-rescaled} is the overall normalisation of $H$ inherited
from~\eqref{eq.H-from-transfer}.
Second, the limit that does act on~\eqref{eq.inner-eq} is $Q\to\infty$, reached
either by $\kappa\to0$ at fixed Fermi rapidity $q$ or by increasing the filling
at fixed $\kappa$; in both cases $D(Q)\to\infty$.
Every expansion below is an expansion in $1/Q$, and statements phrased as
$\kappa\to0$ are to be read at fixed $q$.
The analysis of~\eqref{eq.inner-eq} as $Q \to \infty$ is the subject
of the remainder of this paper.

In the limit $Q\to\infty$ the integration
domain can be extended to the whole real line, and the inner equation~\eqref{eq.inner-eq}
takes the convolution form
\begin{equation}
2\pi\,\tilde{\rho}(\xi)
=
g(\xi)+(K*\tilde{\rho})(\xi),
\end{equation}
where
\begin{equation}
    g(\xi)=K(\xi), \qquad    (K*\tilde{\rho})(\xi) = \int_{-\infty}^{\infty}
K(\xi-\eta)\tilde{\rho}(\eta)\,d\eta\, .
\end{equation}

\subsection{Fourier analysis and logarithmic divergence}
We use the Fourier transform convention
\begin{equation}
\widehat{f}(p)
=
\int_{-\infty}^{\infty} e^{-ip\xi}f(\xi)\,d\xi,
\qquad
f(\xi)
=
\frac{1}{2\pi}
\int_{-\infty}^{\infty}e^{ip\xi}\widehat{f}(p)\,dp\,,
\end{equation}
so that the Fourier transform of a convolution is the product of Fourier transforms.
Equation~\eqref{eq.K-fourier} at $\kappa=1$, together with $g=K$, gives
$\widehat{K}(p)=\widehat{g}(p)=2\pi e^{-|p|}$.
Applying the Fourier transform to the full-line equation yields
\begin{equation}
2\pi\,\hat{\tilde{\rho}}(p)
=
2\pi e^{-|p|}
+
2\pi e^{-|p|}
\hat{\tilde{\rho}}(p),
\end{equation}
where the convolution theorem has been used.
Solving algebraically:
\begin{equation}\label{eq.BE-distribution}
\hat{\tilde{\rho}}(p)
=
\frac{e^{-|p|}}{1-e^{-|p|}}
=
\frac{1}{e^{|p|}-1}\,.
\end{equation}
This is the Bose--Einstein distribution.
It presents a $1/|p|$ singularity at $p=0$:
\begin{equation}\label{eq.BE-smallp}
\frac{1}{e^{|p|}-1}
=
\frac{1}{|p|}
-\frac{1}{2}
+\frac{|p|}{12}
+\mathcal O\left(|p|^3\right)\,.
\end{equation}
Consequently, the inverse Fourier transform
\begin{equation}\label{eq.rho_xi_cos}
\rhot^{(\infty)}(\xi) = \frac{1}{\pi}\int_0^{\infty}\frac{\cos(\xi p)}{e^p-1}\,dp
\end{equation}
diverges logarithmically at $\xi=0$.
This divergence is regularised by the finite domain $[-Q,Q]$, producing the logarithmic dependence on~$Q$.

To isolate the divergence we introduce a lower cutoff $\varepsilon>0$ and
consider 
\begin{equation}
    I(\varepsilon) = \int_{\varepsilon}^{\infty}dp  \, \frac{1}{e^{p}-1}\, .
\end{equation}
Using the geometric series
\begin{equation}\label{eq.geometric-series}
    \frac{1}{e^{p}-1}=\sum_{n=1}^{\infty}e^{-np}
\end{equation}
and integrating term by term, we obtain
\begin{equation}
    I(\varepsilon) = \sum_{n=1}^{\infty}e^{-n\varepsilon}/n = -\log(1-e^{-\varepsilon})\, .
\end{equation}
For small $\varepsilon$,
\begin{equation}\label{eq.Bose-log}
  I(\varepsilon)=-\log\varepsilon + 
  \mathcal O (1)\,,
  \qquad \varepsilon\to 0\,.
\end{equation}

\begin{figure}[htbp]
\centering
\includegraphics[width=\textwidth]{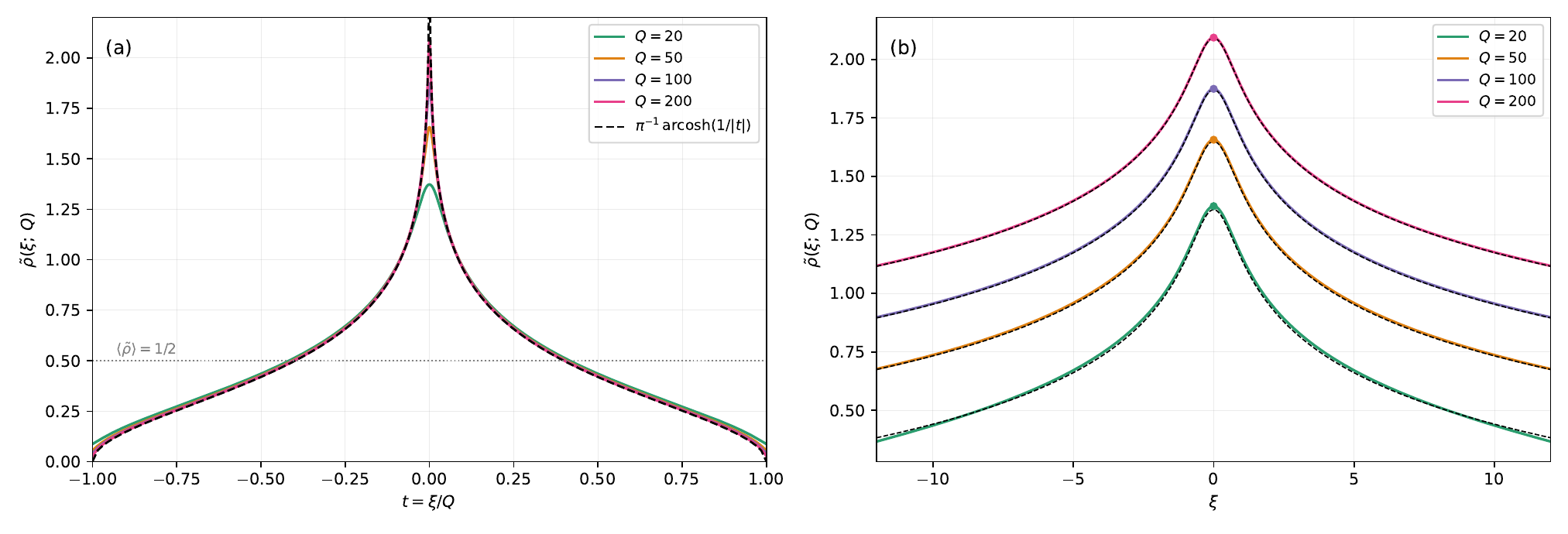}
\caption{Numerical solution of the rescaled integral equation~\eqref{eq.inner-eq}
for $Q = 20$, $50$, $100$, and $200$.
\textbf{(a)}~Rescaled density $\tilde\rho(\xi;\,Q)$ against $t = \xi/Q$, showing
the full domain.  Outside the inner region the curves collapse onto the outer
profile~\eqref{eq.outer-solution} (dashed), which is nowhere constant: it falls monotonically from the inner peak
and vanishes as $\sqrt{1-|t|}$ at the band edge.  Its mean over the Fermi sea is
$1/2$ (dotted line), and it crosses that value at $|t| = 1/\cosh(\pi/2)$.
The residual deviation from the dashed curve is $\mathcal O(\log Q/Q)$ and
accounts for the subleading terms in~\eqref{eq.D-expansion}.
\textbf{(b)}~Zoom into the inner region $|\xi| \lesssim \mathcal O(1)$, plotted
against the rescaled variable~$\xi$.  The peak height at $\xi = 0$
increases as $\tilde\rho(0;\,Q) \sim (\log Q)/\pi + C$.
The cusp-like shape reflects the $1/|p|$ infrared singularity of the
Bose--Einstein distribution~\eqref{eq.BE-distribution} in Fourier space, and the
profile is controlled by the digamma function $\operatorname{Re}\psi(1+i\xi)$:
the dashed curves show the inner approximation
$\tilde\rho(\xi;Q) \approx [\log(2Q) - \operatorname{Re}\psi(1+i\xi)]/\pi$
from~\eqref{eq.inner-profile-relative}, which matches the numerical solution over the range shown.}
\label{fig.solution-profile}
\end{figure}

\subsection{Eigenvalue analysis of the truncated kernel}
\label{sec.eigenvalue-summary}
The integral operator $\mathcal{K}_Q$ on $L^2([-Q,Q])$ defined by
\begin{equation}\label{eq.KQ-def}
  (\mathcal{K}_Q f)(\xi) = \int_{-Q}^{Q}\frac{2}{1+(\xi-\eta)^2}\,f(\eta)\,d\eta
\end{equation}
has a discrete spectrum $\{\lam_n(Q)\}_{n=0}^{\infty}$ with $\lam_0(Q)>\lam_1(Q)>\cdots>0$.

On the full line, $\hat{\mathcal{K}}(p)=2\pi e^{-|p|}$, meaning that the operator norm is $2\pi$. For finite $Q$, the largest eigenvalue $\lam_0(Q)$ approaches $2\pi$ from below:
$\lam_0(Q) = 2\pi - \Delta_0(Q)$, with $\Delta_0(Q)\to 0$ as $Q\to\infty$.
The rate and the constant both follow from the outer limit of
Section~\ref{sec.outer}.  Since the normalized  symbol $\sigma(p):=1-e^{-|p|}$ satisfies $\sigma(p) \to |p|$ as $p \to 0$ and the
low-lying eigenfunctions live at $|p| = \mathcal O(1/Q)$, the eigenvalue problem
for $2\pi I - \mathcal{K}_Q$ reduces to that of the half-Laplacian
$\mathcal{A}$ on $(-Q,Q)$, whose eigenvalues are $\lambda_n/Q$ by homogeneity,
with $\lambda_n$ those of $\mathcal{A}$ on $(-1,1)$.  Hence
\begin{equation}\label{eq.gap-identified}
  \Delta_n(Q) = \frac{2\pi\lambda_{n+1}}{Q} + o\left(Q^{-1}\right)\,.
\end{equation}
The $\lambda_n$ are known~\cite{Kwasnicki2012}; Appendix~\ref{app.eigenvalues}
collects them, together with the computed spectrum of $\mathcal K_Q$ against
which~\eqref{eq.gap-identified} is checked.

The closing of the lowest spectral gap should not, however, be identified by
itself with the logarithmic growth of the central density.  Indeed,
\eqref{eq.gap-identified} shows that the first eigenvalue is not asymptotically
isolated: for every fixed $n$ the corresponding deficit is of order $Q^{-1}$,
and in particular $\frac{\Delta_1(Q)}{\Delta_0(Q)} \longrightarrow \frac{\lambda_2}{\lambda_1}.$
Thus an increasing family of low-lying modes approaches the critical value
$2\pi$ as $Q\to\infty$.  This many-mode accumulation is consistent with the
logarithmic growth of $\tilde\rho(0;Q)$, which is derived from the
killed-Cauchy-walk representation in Section~\ref{sec.constant}; the deficits
themselves follow from the form convergence proved in
Appendix~\ref{app.eigenvalues}.  Further details on the spectrum
of $\mathcal K_Q$ are given in Appendix~\ref{app.eigenvalues}, and the closing
of the first two deficits is illustrated in Figure~\ref{fig.spectral-gap}.

\begin{figure}[htbp]
\centering
\includegraphics[width=\textwidth]{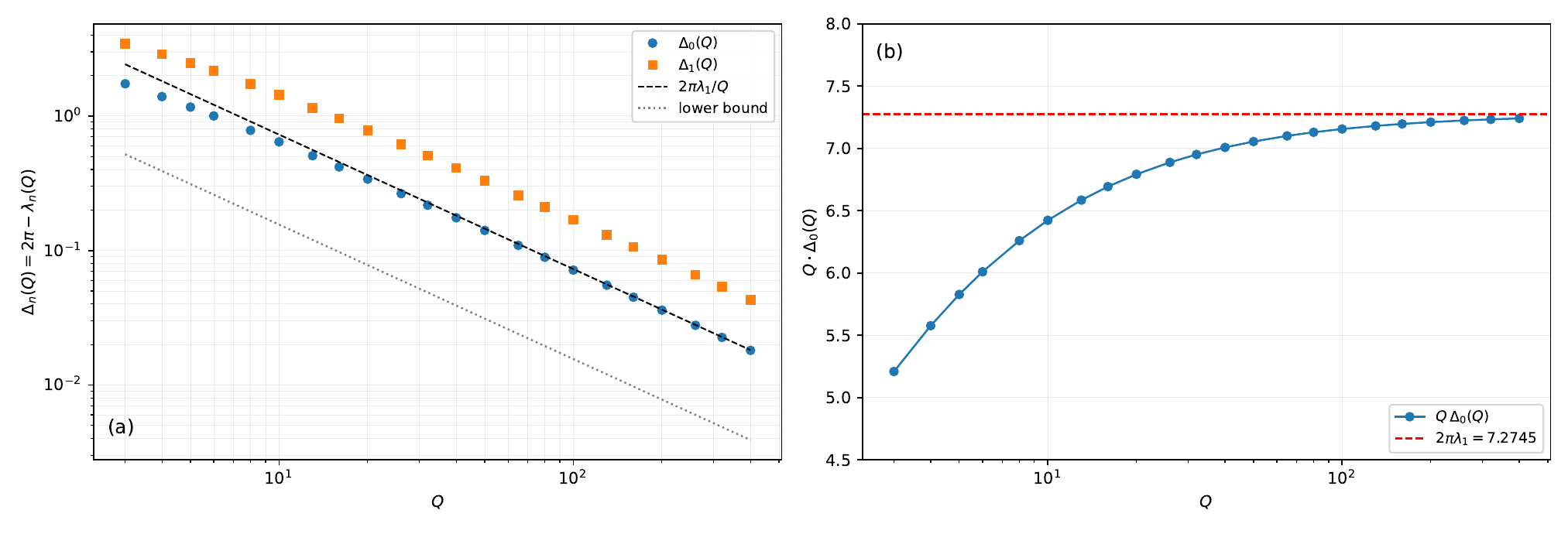}
\caption{Spectral gap of the truncated kernel $\mathcal{K}_Q$.
\textbf{(a)}~Log--log plot of $\Delta_n(Q) = 2\pi - \lambda_n(Q)$
for $n=0$ (circles) and $n=1$ (squares) versus~$Q$.  Both gaps close
algebraically, with $\Delta_0$ below $\Delta_1$ for all~$Q$.  The dotted line is
the lower bound of~\eqref{eq.gap-upper}; the dashed line is the sharp upper
bound $2\pi\lambda_1/Q$, with $\lambda_1$ the first Dirichlet eigenvalue of the
half-Laplacian on $(-1,1)$.
\textbf{(b)}~Compensated gap $Q\cdot\Delta_0(Q)$ versus~$Q$ on a
semi-logarithmic scale.  The product approaches
$2\pi\lambda_1 = 7.2745$ (dashed) from below, so the gap closes as
$2\pi\lambda_1/Q$ with no logarithmic modulation; the apparent slow growth over
a restricted range of $Q$ is a pre-asymptotic effect.}
\label{fig.spectral-gap}
\end{figure}

%==========================================================================
\section{Outer region and density decomposition}
\label{sec.outer}
%==========================================================================

\subsection{Outer solution}
To resolve the density on the scale of the whole Fermi sea, we introduce the
outer variable
\begin{equation}
    t=\frac{\xi}{Q},\qquad -1<t<1,
\end{equation}
and consider $Q\to\infty$ with $t\neq0$ fixed.  The overlap with the inner
region is recovered subsequently by taking $|t|\ll 1$, or equivalently
$1\ll|\xi|\ll Q$.
The limiting operator follows directly from the low-momentum behaviour of the
convolution kernel.  On the full line, the operator $2\pi I-\mathcal K$ has
Fourier symbol
\begin{equation}\label{eq:outer-symbol}
2\pi\left(1-e^{-|p|}\right)
=2\pi|p|-\pi p^2+\mathcal O(|p|^3),
\qquad p\to0.
\end{equation}
On the outer scale the momentum conjugate to $t$ is $k=Qp$, and hence
$  2\pi\left(1-e^{-|k|/Q}\right)
  =\frac{2\pi}{Q}|k|+\mathcal O \left(Q^{-2}k^2\right)$.
Thus, the leading operator is the  half-Laplacian $\mathcal A=(-d^2/dt^2)^{1/2}$,
whose Fourier symbol is $|k|$ .
The driving term has a nontrivial limit on the same scale.  Although
$g(Qt)=2/(1+Q^2t^2)$ tends pointwise to zero for $t\neq0$, its mass remains
concentrated at the origin:
\begin{equation}
      Q\,g(Qt)
  =\frac{2Q}{1+Q^2t^2}
  \;\longrightarrow\;
  2\pi\,\delta(t) \, .
\end{equation}
Consequently the leading outer equation is
\begin{equation}\label{eq:outer-problem-scaled}
\mathcal A \, \tilde\rho_{\rm out}(t)=\delta(t),
\qquad -1<t<1,
\end{equation}
with the exterior Dirichlet condition associated with the finite Fermi
interval.  The outer density is therefore the Green function of the
Dirichlet half-Laplacian on $(-1,1)$ with a source at the origin.

Writing $\mathcal{A}$ for the operator with symbol $|p|$, equivalently
\begin{equation}\label{eq.A-operator}
  (\mathcal{A}f)(\xi)
  = -\frac{1}{\pi}\,\mathrm{p.v.}\int \frac{f'(\eta)}{\xi-\eta}\,d\eta\,,
\end{equation}
the solution of~\eqref{eq:outer-problem-scaled}, written in the original
rescaled coordinate $\xi=Qt$, is
\begin{equation}\label{eq.outer-solution}
  \tilde\rho_{\mathrm{out}}(\xi)
  = \frac{1}{\pi}\operatorname{arcosh}\frac{Q}{|\xi|}
  = \frac{1}{\pi}\log\frac{Q + \sqrt{Q^2-\xi^2}}{|\xi|}\,.
\end{equation}

Note that the symbol $|p|$ is the half-Laplacian
$\mathcal{A} = (-d^2/d\xi^2)^{1/2}$, taken with the exterior Dirichlet condition
$\tilde\rho = 0$ outside $[-Q,Q]$. Equivalently, it is the generator of the
Cauchy process killed upon leaving the Fermi interval~\cite{BanuelosKulczycki2004, KulczyckiKwasnickiMaleckiStos2010}.
Its Green function on $(-Q,Q)$ is 
\begin{equation}\label{eq.green}
  G(\xi,\eta) = \frac{1}{\pi}\,
  \log\left|\frac{Q^2 - \xi\eta
    + \sqrt{(Q^2-\xi^2)(Q^2-\eta^2)}}{Q\,(\xi-\eta)}\right|\,,
\end{equation}
and~\eqref{eq.outer-solution} is nothing but $G(\xi,0)$: the outer solution is
the Green function of the half-Laplacian with the source at the origin, which is
where the driving term deposits its weight.   

Here it convenient to write explicitly two properties of~\eqref{eq.outer-solution} that will be used in the following.  Its
integral is the leading density,
\begin{equation}\label{eq.outer-integral}
  \int_{-Q}^{Q}\tilde\rho_{\mathrm{out}}(\xi)\,d\xi
  = \frac{2Q}{\pi}\int_0^{1}\operatorname{arcosh}\frac{1}{t}\,dt
   = Q\,,
\end{equation}
so that the mean of the profile across the Fermi sea is
\begin{equation}\label{eq.bulk-value}
  \bar{\tilde\rho}_{\mathrm{bulk}} = \frac{1}{2}\,.
\end{equation}

We next compare the outer solution with the inner behaviour obtained in
Section~\ref{sec.inner}.  The full-line Fourier analysis determines the inner
profile relative to its value at the origin,
\begin{equation}\label{eq.inner-profile-relative}
\tilde\rho(\xi;Q)-\tilde\rho(0;Q)
=
-\frac{\operatorname{Re}\psi(1+i\xi)+\gamma_E}{\pi}
+o(1),
\end{equation}
for fixed $\xi$ as $Q\to\infty$.  Since
$\operatorname{Re}\psi(1+i\xi)=\log|\xi|+\mathcal O(\xi^{-2})$, the inner
solution develops the logarithmic dependence $-\pi^{-1}\log|\xi|$ in the overlap
region $1\ll|\xi|\ll Q$.  The outer solution~\eqref{eq.outer-solution} has, for
$|\xi|\ll Q$,
\begin{equation}\label{eq:outer-local}
\tilde\rho_{\mathrm{out}}(\xi)
=
\frac{1}{\pi}\log\frac{2Q}{|\xi|}
+\mathcal O\left(\frac{\xi^2}{Q^2}\right).
\end{equation}
The two descriptions therefore carry the same logarithmic
dependence on $|\xi|$ in their common range of validity. 
The finite part of the correspondence is tied to the constant term in $\tilde\rho(0;Q)$, which we derive  directly from the finite-interval problem in Section~\ref{sec.constant}.

Figure~\ref{fig.solution-profile}(a) shows the convergence of the numerical
solutions to the outer profile~\eqref{eq.outer-solution} in the variable
$t=\xi/Q$.  The remaining finite-$Q$ correction is spread across the Fermi sea
and supplies the subleading terms in the total-density expansion discussed
below.

\subsection{Duality with the Love integral equation}
\label{sec.duality-Love}

The integral equation~\eqref{eq.inner-eq} is closely related to the
Love integral equation~\cite{Love1949} for the electrostatic potential
of two coaxial circular conducting discs.  Rescaling $\xi = Qx$ maps
$[-Q,Q]$ onto $[-1,1]$ and turns the kernel below into
$(\kappa_L/\pi)\left[\kappa_L^2+(x-y)^2\right]^{-1}$ with $\kappa_L = 1/Q$, so
$Q^{-1}$ is the plate separation in units of the disc radius.  In the variable
$\xi$ the Love equation reads
\begin{equation}\label{eq.Love}
  f(\xi) = 1 + \frac{1}{\pi}\int_{-Q}^{Q}
  \frac{f(\eta)}{1 + (\xi - \eta)^2}\,d\eta\,.
\end{equation}
We define the Love operator $\mathcal{L}_Q$ by
\begin{equation}\label{eq.LQ-def}
    (\mathcal{L}_Q h)(\xi) = \frac{1}{\pi}\int_{-Q}^Q
\frac{h(\eta)}{1+(\xi-\eta)^2}\,d\eta \, ,
\end{equation}
so that our rescaled equation~\eqref{eq.inner-eq} takes the form
\begin{equation}\label{eq.lattice-love-form}
    (I - \mathcal{L}_Q)\tilde\rho = \frac{g}{2\pi}, \qquad 
g(\xi) = 2/(1+\xi^2)\, ,
\end{equation}
while the Love equation is $(I - \mathcal{L}_Q)f = 1$.
The two equations share the same operator $I - \mathcal{L}_Q$ but
have different right-hand sides: the Lorentzian $g/(2\pi)$
in the lattice NLS case, versus the constant 1 
in the Love case.

Since $\mathcal{L}_Q$ is self-adjoint on $L^2([-Q,Q])$ (the kernel
$K(\xi-\eta)/(2\pi)$ is real and symmetric in $\xi,\eta$), the
resolvent $(I - \mathcal{L}_Q)^{-1}$ is also self-adjoint.
Writing $\tilde\rho = (I - \mathcal{L}_Q)^{-1}[g/(2\pi)]$ and
$f = (I - \mathcal{L}_Q)^{-1}\mathbf{1}$, the total density becomes
\begin{equation}\label{eq.duality}
  D(Q)
  = \langle \mathbf{1}, \tilde\rho\rangle
  = \left\langle (I - \mathcal{L}_Q)^{-1}\mathbf{1},\,
      g/(2\pi)\right\rangle
  = \left\langle f,\,\frac{g}{2\pi}\right\rangle
  = \frac{1}{\pi}\int_{-Q}^{Q} \frac{f(\xi)}{1+\xi^2}\,d\xi\,,
\end{equation}
where the third equality uses the self-adjointness of the resolvent.
This identity reduces the lattice NLS density to a weighted integral of
the Love solution---the same function that determines the capacitance of
the circular disc capacitor~\cite{Love1949,FarinaLangMartin2022}.

A further identity follows from evaluating the Love equation
at $\xi = 0$.  Since $K(0-\eta) = K(\eta) = 2/(1+\eta^2)$, the
convolution integral at the origin becomes
\begin{equation}\label{eq.love-peak}
  f(0)
  = 1 + \frac{1}{\pi}\int_{-Q}^{Q}\frac{f(\eta)}{1+\eta^2}\,d\eta
  = 1 + D(Q)\,,
\end{equation}
where the last step uses~\eqref{eq.duality}. The Love function has a
probabilistic reading of its own, which we use in
Section~\ref{sec.edge}.  Expanding
$f = (I-\mathcal{L}_Q)^{-1}1$ in a Neumann series and recognising
$(\mathcal{L}_Q^n 1)(\xi) = \mathbb{P}_\xi(\tau_Q > n)$, the probability that
the walk of Section~\ref{sec.matching-C} survives $n$ steps inside the Fermi
sea,
\begin{equation}\label{eq.love-exit-time}
  f(\xi) = \sum_{n\geq0}\mathbb{P}_\xi(\tau_Q > n) = \mathbb{E}_\xi\,\tau_Q\,,
  \qquad\text{so}\qquad
  D(Q) = \mathbb{E}_0\,\tau_Q - 1\,.
\end{equation}
The total density is the mean number of steps the Cauchy walk takes to escape
between two absorbing barriers, and the constant $b$ is the constant term of
that mean duration---a two-barrier fluctuation problem, first connected to the
capacitor by Reich~\cite{Reich1953}.  This is an exact,
non-asymptotic identity valid for all $Q > 0$, linking the peak of the
Love solution directly to the lattice NLS density.

The parameter $Q$ is the inverse plate separation, so the limit
$Q \to \infty$ studied here is the limit of closely spaced plates.  That
is the regime in which the capacitance develops the logarithmic corrections
first computed by Kirchhoff~\cite{Kirchhoff1877} and
Maxwell~\cite{Maxwell1873}, with Wiener--Hopf corrections analysed
in~\cite{Hutson1963}.

\subsection{Determination of \texorpdfstring{$C$}{C}}
\label{sec.constant}
%==========================================================================
We now proceed to determining  the constant in the asymptotic expansion
\begin{equation}\label{eq.rho0-expansion}
  \tilde\rho(0;\,Q) = \frac{\log Q}{\pi} + C + o(1)\,.
\end{equation}
The derivation is probabilistic.  The kernel of~\eqref{eq.inner-eq} is a
probability density, so $\tilde\rho$ is a Green kernel for a random walk killed
on leaving $[-Q,Q]$, and the constant is the logarithmic moment of the exit
position.

\subsubsection{Killed Cauchy walk}
\label{sec.matching-C}

Set $P(\xi):=g(\xi)/(2\pi)=1/[\pi(1+\xi^2)]$.  Then
\eqref{eq.lattice-love-form}, with the operator~\eqref{eq.LQ-def}, is a
resolvent equation driven by $P$.
The observation that organises everything below is that $P$ is a probability
density---the standard Cauchy density---and that the driving term is that same
density.  Let $\xi_1,\xi_2,\ldots$ be independent with density $P$, let
$S_n = \xi + \xi_1 + \cdots + \xi_n$ be the walk started at $\xi$, and let
\begin{equation}\label{eq.exit-time}
  \tau_Q = \inf\{n \geq 1 : |S_n| > Q\}
\end{equation}
be the first exit time from the Fermi sea.  Then $(I-\mathcal{L}_Q)^{-1}P$ is
the sum over $n$ of the transition densities of the walk killed at $\tau_Q$, so
$\tilde\rho$ is a killed-walk Green kernel.

Cauchy stability makes this tractable, because the $n$-step density is again
Cauchy,
\begin{equation}\label{eq.n-step}
  P^{*n}(\xi) = \frac{n}{\pi(n^2+\xi^2)}\,.
\end{equation}
The walk is recurrent, so it has no Green function on the whole line; the
object that replaces it is the potential kernel
\begin{equation}\label{eq.potential-kernel}
  a(\xi)
  = \sum_{n=1}^{\infty}\left[P^{*n}(0) - P^{*n}(\xi)\right]
  = \frac{1}{\pi}\sum_{n=1}^{\infty}\frac{\xi^2}{n\,(n^2+\xi^2)}
  = \frac{\operatorname{Re}\psi(1+i\xi) + \gamma_E}{\pi}\,,
\end{equation}
the last equality by the digamma series~\eqref{eq.digamma-series}.  The
subtraction at the origin is what makes the sum converge, and it fixes
$a(0) = 0$.  Telescoping the definition gives the renewal identity and the
large-$|\xi|$ behaviour,
\begin{equation}\label{eq.a-properties}
  P*a - a = P\,,
  \qquad
  a(\xi) = \frac{\log|\xi| + \gamma_E}{\pi} + \mathcal O(\xi^{-2})\,.
\end{equation}
So the same function that Section~\ref{sec.inner} found from the Bose--Einstein
multiplier is the potential kernel of the walk.  The divergence that left the
inner solution undetermined up to a constant is the recurrence of the walk.

\begin{theorem}[Green representation]\label{thm:green}
For every $Q > 0$ and $|\xi| \leq Q$,
\begin{equation}\label{eq.green-formula}
  \tilde\rho(\xi;Q) = \mathbb{E}_\xi \left[a(S_{\tau_Q})\right] - a(\xi)\,.
\end{equation}
\end{theorem}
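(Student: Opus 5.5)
Our approach is to show that the right-hand side of \eqref{eq.green-formula}, call it $h(\xi)$, solves the resolvent equation \eqref{eq.lattice-love-form}, $(I-\mathcal L_Q)h = P$ on $[-Q,Q]$, and then invoke uniqueness. Uniqueness holds because $\mathcal L_Q$ has norm $\lambda_0(Q)/(2\pi)<1$ on $L^2([-Q,Q])$, or simply because the Neumann series converges. The natural tool is a one-step (first-step) decomposition of the stopped walk, combined with the renewal identity $P*a-a=P$ from \eqref{eq.a-properties}.

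\textbf{Step 1 (first-step analysis of the exit functional).} Define $u(\xi)=\mathbb E_\xi[a(S_{\tau_Q})]$ for all $\xi\in\mathbb R$ (started at $\xi$, with $\tau_Q\ge1$ as in \eqref{eq.exit-time}). Conditioning on $\xi_1$ gives
\begin{equation*}
u(\xi)=\int_{|\eta|>Q}P(\eta-\xi)\,a(\eta)\,d\eta+\int_{-Q}^{Q}P(\eta-\xi)\,u(\eta)\,d\eta .
\end{equation*}
The first integral converges since $a$ grows logarithmically and $P$ decays like $\eta^{-2}$. This also shows that $u$ is finite: one bounds $\mathbb E_\xi|a(S_{\tau_Q})|$ using $\mathbb P_\xi(\tau_Q>n)\le(\lambda_0/2\pi)^{n}$-type geometric decay, times the uniformly bounded log-moment of a single overshoot step.

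\textbf{Step 2 (renewal identity).} Split the renewal identity as $a(\xi)+P(\xi)=\int_{|\eta|>Q}P(\eta-\xi)a(\eta)\,d\eta+\int_{-Q}^{Q}P(\eta-\xi)a(\eta)\,d\eta$, using the evenness of $P$. Subtracting this from Step 1, the exterior integrals cancel and we get, for $|\xi|\le Q$,
\begin{equation*}
h(\xi)-P(\xi)\cdot(-1)\;=\;\ldots\quad\text{i.e.}\quad u-a+P=\mathcal L_Q(u-a),
\end{equation*}
so $h=u-a$ satisfies $(I-\mathcal L_Q)h=P$, which is exactly \eqref{eq.lattice-love-form}. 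Since $h$ is bounded on $[-Q,Q]$, hence lies in $L^2$, uniqueness gives $h=\tilde\rho(\cdot;Q)$.

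\textbf{Main obstacle.} The algebra above is routine. The delicate point is the integrability justification in Step 1, namely that $\mathbb E_\xi|a(S_{\tau_Q})|<\infty$ and that conditioning and Fubini are legitimate despite the heavy Cauchy tails. We would handle it by writing $\mathbb E_\xi|a(S_{\tau_Q})|=\sum_{n\ge0}\int_{-Q}^Q p_n^{Q}(\xi,\eta)\int_{|\zeta|>Q}P(\zeta-\eta)|a(\zeta)|\,d\zeta\,d\eta$. Here $p^Q_n$ is the killed $n$-step density, and its total mass decays geometrically because $\sup_{|\eta|\le Q}\mathbb P_\eta(|S_1|\le Q)<1$. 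The inner integral is bounded uniformly in $|\eta|\le Q$ by $\int P(\zeta)\,(\log(1+|\zeta|+Q)+C)\,d\zeta<\infty$. With this dominating sum, all interchanges are justified, and the argument holds for every $Q>0$, as the statement requires.
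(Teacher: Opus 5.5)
Your proposal is correct and follows the paper's proof essentially line by line. You use the same first-step decomposition of $\mathbb{E}_\xi[a(S_{\tau_Q})]$, subtract the renewal identity $a+P=P*a$ split into interior and exterior parts so the exterior integrals cancel, and get uniqueness from $\|\mathcal{L}_Q\|<1$. Your integrability bound via $\mathbb{P}_\xi(\tau_Q>n)\le(\tfrac{2}{\pi}\arctan Q)^n$ is a useful addition that the paper leaves implicit.

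One slip: the subtracted identity should read $u-a-P=\mathcal{L}_Q(u-a)$, not $u-a+P$. The corrected form is what gives $(I-\mathcal{L}_Q)h=P$, the conclusion you state.
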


\begin{proof}
Let $H(\xi) = \mathbb{E}_\xi[a(S_{\tau_Q})]$ for $|\xi|\leq Q$, and extend it by
$H = a$ outside.  Conditioning on the first step,
\begin{equation}
  H(\xi) = \int_{-Q}^{Q}P(\xi-\eta)H(\eta)\,d\eta
  + \int_{|\eta|>Q}P(\xi-\eta)\,a(\eta)\,d\eta\,,
  \qquad |\xi| \leq Q\,.
\end{equation}
Subtract the corresponding decomposition of $a$ itself, namely
$a(\xi) = (P*a)(\xi) - P(\xi)$ from~\eqref{eq.a-properties}, split as
$\int_{-Q}^{Q} + \int_{|\eta|>Q}$.  The exterior integrals cancel and
\begin{equation}
  (I-\mathcal{L}_Q)(H - a) = P\,.
\end{equation}
Uniqueness of the solution of~\eqref{eq.lattice-love-form}, which follows from
$\|\mathcal{L}_Q\| < 1$ for finite $Q$, gives~\eqref{eq.green-formula}.
\end{proof}

Since $a(0) = 0$, evaluating at the origin already isolates the constant:
$\tilde\rho(0;Q) = \mathbb{E}_0[a(S_{\tau_Q})]$.  With $|S_{\tau_Q}| > Q$ and
the expansion in~\eqref{eq.a-properties},
\begin{equation}\label{eq.rho0-exit}
  \tilde\rho(0;Q)
  = \frac{\log Q + \gamma_E}{\pi}
  + \frac{1}{\pi}\,\mathbb{E}_0\log\frac{|S_{\tau_Q}|}{Q}
  + \mathcal O(Q^{-2})\,.
\end{equation}
The walk does not land on $\pm Q$; it jumps past.  What $C - \gamma_E/\pi$
measures is the size of that overshoot, on a logarithmic scale.

\subsubsection{Exit law}
\label{sec.exit-law}

It remains to evaluate the limit of $\mathbb{E}_0\log(|S_{\tau_Q}|/Q)$.  Let
$(X_t)_{t\geq0}$ be the symmetric Cauchy process, $\mathbb{E}\,e^{ipX_t} =
e^{-t|p|}$, and $\tau = \inf\{t : |X_t| \geq 1\}$.  Because the increments are
exactly Cauchy rather than merely in its domain of attraction, the rescaled walk
is an exact time-discretisation of the process,
\begin{equation}\label{eq.embedding}
  \left(S_n/Q\right)_{n\geq0}
  \;\stackrel{d}{=}\;
  \left(X_{n/Q}\right)_{n\geq0}\,,
\end{equation}
with sampling interval $1/Q \to 0$.  A Cauchy process leaves an interval by a
jump and does not hit the endpoints, so the exit position of the sampled chain
converges to that of the process: $S_{\tau_Q}/Q \Rightarrow X_\tau$.

Convergence in distribution alone is not enough, because $\log$ is unbounded.
Uniform integrability follows from a one-big-jump bound: since
$|S_{\tau_Q - 1}| \leq Q$, an overshoot past $rQ$ forces a single increment
larger than $(r-1)Q$, so by Wald's identity
\begin{equation}\label{eq.overshoot-tail}
  \mathbb{P} \left(|S_{\tau_Q}| > rQ\right)
  \leq \mathbb{E}\,\tau_Q \;\mathbb{P} \left(|\xi_1| > (r-1)Q\right)
  \leq \frac{c}{r-1}\,,
  \qquad r > 2\,,
\end{equation}
using $\mathbb{P}(|\xi_1|>x) \sim 2/(\pi x)$ and $\mathbb{E}\,\tau_Q =
\mathcal O(Q)$, the latter because a block of $\lceil Q\rceil$ steps leaves the
interval with probability bounded below.  The tail $c/(r-1)$ makes
$\log(|S_{\tau_Q}|/Q)$ uniformly integrable, so the expectations converge.

The exit distribution of the Cauchy process from $(-1,1)$, started at $t$, is
classical~\cite{BanuelosKulczycki2004, KulczyckiKwasnickiMaleckiStos2010},
\begin{equation}\label{eq.poisson-kernel}
  \omega_t(y) = \frac{\sqrt{1-t^2}}
    {\pi\,\sqrt{y^2-1}\;|y-t|}\,,
  \qquad |t|<1\,,\quad |y|>1\,,
\end{equation}
and its logarithmic moment is evaluated in
Appendix~\ref{app.exit-moment}:
\begin{equation}\label{eq.log-moment}
  \mathbb{E}_t\log|X_\tau|
  = \int_{|y|>1}\log|y|\;\omega_t(y)\,dy
  = \log \left(1+\sqrt{1-t^2}\right)\,.
\end{equation}
At $t = 0$ this is $\log 2$.  Substituting into~\eqref{eq.rho0-exit} proves
\begin{equation}\label{eq.C-result}
  \tilde\rho(0;Q) = \frac{\log Q + \gamma_E + \log 2}{\pi} + o(1)\,,
\end{equation}
that is,~\eqref{eq.rho0-expansion} with $C = (\gamma_E + \log 2)/\pi$.  The two
contributions have different origins:
$\gamma_E/\pi$ comes from the discrete potential kernel, that is from the
step structure of the walk, and $\log 2/\pi$ from the overshoot of the limiting
process.

Two further limits come free from the same formula.  For fixed $\xi$, the two
starting points $\xi/Q$ and $0$ both tend to the origin, and the right side
of~\eqref{eq.log-moment} is continuous there, so the exit contributions agree in
the limit; subtracting two instances of~\eqref{eq.green-formula} gives the inner
profile~\eqref{eq.inner-profile-relative} locally uniformly.  For fixed
$0<|t|<1$, take
$\xi = Qt$ in~\eqref{eq.green-formula}: the terms $\log Q + \gamma_E$ cancel
between $\mathbb{E}_{Qt}[a(S_{\tau_Q})]$ and $a(Qt)$, and~\eqref{eq.log-moment}
gives the outer profile~\eqref{eq.outer-solution}.  That the same
$\operatorname{arcosh}$ appears here and in Section~\ref{sec.outer} as the Green
function of the half-Laplacian is not a coincidence: the half-Laplacian is the
generator of the Cauchy process, so its Green function and the harmonic measure
of the interval are two readings of one object.  What the probabilistic route
adds is that this convergence is a limit theorem for the solution
of~\eqref{eq.inner-eq} itself, whereas Section~\ref{sec.outer} solved a
truncated problem and left the relation to the true solution to matching.

Table~\ref{tab:Ceff} records the finite-$Q$ approach to~\eqref{eq.C-result},
obtained by solving~\eqref{eq.inner-eq} with the quadrature rule of
Appendix~\ref{subsec.num-eig} and evaluating the peak through the integral
equation at $\xi = 0$.  The deviation decreases like $\log Q/Q$, the rate
expected from the $\mathcal O(Q^{-2})$ remainder in~\eqref{eq.rho0-exit} together
with the finite-$Q$ correction to the exit law; a three-point Richardson step in
$\log Q/Q$ and $1/Q$ returns $C$ to five digits.

\begin{table}[htbp]
\centering
\begin{tabular}{r c c}
\hline\hline
$Q$ & $C_{\mathrm{eff}}(Q)$ & $C_{\mathrm{eff}} - C$ \\
\hline
  10 & 0.430375 & $2.60 \times 10^{-2}$ \\
  50 & 0.411246 & $6.88 \times 10^{-3}$ \\
  80 & 0.408971 & $4.60 \times 10^{-3}$ \\
 100 & 0.408166 & $3.80 \times 10^{-3}$ \\
 150 & 0.407039 & $2.67 \times 10^{-3}$ \\
 200 & 0.406446 & $2.08 \times 10^{-3}$ \\
 300 & 0.405823 & $1.45 \times 10^{-3}$ \\
\hline\hline
\end{tabular}
\caption{Effective constant
$C_{\mathrm{eff}}(Q) := \tilde\rho(0;Q) - (\log Q)/\pi$ and its deviation from
$C = (\gamma_E + \log 2)/\pi$ of~\eqref{eq.C-result}.}
\label{tab:Ceff}
\end{table}

\subsection{Total density expansion}

We derive the coefficient of the logarithmic correction in the total density expansion
\begin{equation}\label{eq.D-expansion}
  D(Q) = Q + a\log Q + b + \cdots\,.
\end{equation}
A key ingredient is the following integral identity, 
\begin{equation}\label{eq.digamma-identity}
  \int_0^{\infty}
  \frac{\operatorname{Re}\psi(1+i\xi) + \gamma_E}{1+\xi^2}\,d\xi
  = \frac{\pi}{2}\,,
\end{equation}
that we proceed to demonstrate. 
The digamma function admits the integral representation
(see, e.g.,~\cite{Gaudin2014}, or DLMF \S5.9)
\begin{equation}\label{eq.digamma-int-rep}
  \psi(z) = -\gamma_E + \int_0^{\infty}
  \frac{e^{-t} - e^{-zt}}{1 - e^{-t}}\,dt\,,
  \qquad \operatorname{Re} z > 0\,.
\end{equation}
Setting $z = 1 + i\xi$ gives $e^{-zt} = e^{-t}\,e^{-i\xi t}$, so
\begin{equation}
  \psi(1 + i\xi) + \gamma_E
  = \int_0^{\infty} \frac{e^{-t}(1 - e^{-i\xi t})}{1 - e^{-t}}\,dt
  = \int_0^{\infty} \frac{1 - e^{-i\xi t}}{e^{t} - 1}\,dt\,.
\end{equation}
Taking the real part:
\begin{equation}\label{eq.Re-psi-integral}
  \operatorname{Re}\psi(1 + i\xi) + \gamma_E
  = \int_0^{\infty} \frac{1 - \cos(\xi t)}{e^{t} - 1}\,dt\,.
\end{equation}
Substituting~\eqref{eq.Re-psi-integral} into the left-hand side
of~\eqref{eq.digamma-identity}:
\begin{equation}\label{eq.double-integral}
  \int_0^{\infty} \frac{\operatorname{Re}\psi(1+i\xi)+\gamma_E}{1+\xi^2}\,d\xi
  = \int_0^{\infty}  d\xi \int_0^{\infty}
    \frac{1 - \cos(\xi t)}{(1+\xi^2)(e^t - 1)}\,dt\,.
\end{equation}
The integrand is non-negative for all $\xi, t \geq 0$
(since $1 - \cos(\xi t) \geq 0$ and $e^t - 1 > 0$),
so Tonelli's theorem justifies exchanging the order of integration.
The inner $\xi$-integral decomposes as
\begin{equation}\label{eq.xi-integral}
  \int_0^{\infty} \frac{1 - \cos(\xi t)}{1+\xi^2}\,d\xi
  = \underbrace{\int_0^{\infty} \frac{d\xi}{1+\xi^2}}_{\pi/2}
  \;-\; \underbrace{\int_0^{\infty} \frac{\cos(\xi t)}{1+\xi^2}\,d\xi}_{\pi e^{-t}/2}
  = \frac{\pi}{2}\left(1 - e^{-t}\right),
\end{equation}
where the first integral is $\arctan\xi\big|_0^{\infty} = \pi/2$ and
the second follows from the standard contour integral
\begin{equation}
    \int_0^{\infty} \frac{\cos(\xi t)}{1+\xi^2}  \,d\xi =  \frac{\pi}{2}\,e^{-t}, \quad t>0\, .
\end{equation}
Substituting back into~\eqref{eq.double-integral}:
\begin{equation}
  \int_0^{\infty} \frac{\operatorname{Re}\psi(1+i\xi)+\gamma_E}{1+\xi^2}\,d\xi
  = \frac{\pi}{2}\int_0^{\infty}\frac{1-e^{-t}}{e^t - 1}\,dt\,.
\end{equation}
The ratio simplifies: $1 - e^{-t} = e^{-t}(e^t - 1)$, so
$(1 - e^{-t})/(e^t - 1) = e^{-t}$.  Therefore
\begin{equation}
  \frac{\pi}{2}\int_0^{\infty} e^{-t}\,dt
  = \frac{\pi}{2}\,.
\end{equation}  

A companion identity controls the integrated density rather than the
Lorentzian-weighted density.  Integrating the digamma series
\begin{equation}\label{eq.digamma-series}
  \psi(1+z)+\gamma_E = \sum_{n=1}^\infty\left(\frac{1}{n} - \frac{1}{n+z}\right)
\end{equation}
with $z = i\xi$ term by term gives
\begin{equation}\label{eq.log-gamma-antideriv}
  \int_0^L \operatorname{Re}\psi(1+i\xi)\,d\xi
  = \operatorname{Im}\log\Gamma(1+iL)\,,
\end{equation}
since
\begin{equation}
  \int_0^L \frac{d\xi}{n+i\xi}
  = -i\log\frac{n+iL}{n}\,,
\end{equation}
and, by the Weierstrass product,
\begin{equation}\label{eq.weierstrass}
  \log\Gamma(1+iL)
  = -i\gamma_E L + \sum_{n=1}^{\infty}
    \left[\frac{iL}{n} - \log\left(\frac{n+iL}{n}\right)\right]\,,
\end{equation}
whose imaginary part is
$-\gamma_E L + \sum_{n\ge1}\left[L/n - \arctan(L/n)\right]$, which is the
term-by-term integral of $\operatorname{Re}\psi(1+i\xi)$
from~\eqref{eq.digamma-series}.

From~\eqref{eq.log-gamma-antideriv} we can evaluate the integral of the profile
function
\begin{equation}\label{eq.Phi-def}
  \Phi(\xi) = \log|\xi| - \operatorname{Re}\psi(1+i\xi)\,,
\end{equation}
the difference between the overlap logarithm~\eqref{eq:outer-local} and the
inner profile~\eqref{eq.inner-profile-relative}.  From Stirling's formula 
\begin{equation}
  \operatorname{Im}\log\Gamma(1+iL)
  = L\log L - L + \frac{\pi}{4} +\mathcal  O(1/L)\,,
\end{equation}
we obtain
\begin{equation}\label{eq.profile-integral}
  \int_0^\infty \Phi(\xi)\,d\xi = -\frac{\pi}{4}\,.
\end{equation}
This identity is used below in the analysis of the constant $b$
in the total density expansion.

The duality~\eqref{eq.duality} decomposes exactly as
\begin{equation}\label{eq.D-split}
    D = \frac{2\arctan Q}{\pi}\,f(0) + \frac{R(Q)}{\pi},
    \qquad R(Q) = \int_{-Q}^{Q}\frac{f(\xi)-f(0)}{1+\xi^2}\,d\xi\,,
\end{equation}
where $2\arctan(Q)/\pi = 1 - 2/(\pi Q) + \mathcal O(Q^{-3})$; only the expanded
form is asymptotic.
The inner-region approximation for $f(\xi) - f(0)$ involves
$-\operatorname{Re}\psi(1+i\xi) - \gamma_E$, and after applying
the identity~\eqref{eq.digamma-identity} the leading $\mathcal O(Q)$ terms
cancel, which is what keeps $D \sim Q$ rather than $Q^2$.  The coefficient of
the surviving logarithm comes from the second order of the outer expansion,
worked out in Appendix~\ref{app.density}: the $-p^2/2$ term of the symbol drives
a correction $f_1$ to the Love function whose source is non-integrable at the
band edge.  A quick way to see the coefficient is to write $f_1(0)$ through the
Dirichlet Green function of $\mathcal A_t$, which is the $\operatorname{arcosh}$
profile again, giving
$\pi^{-1}\int_0^1\operatorname{arcosh}(1/u)(1-u^2)^{-3/2}du$; the integral is
$Q$-independent, diverges logarithmically at the band edge, and cut off at the
edge-layer scale gives
\begin{equation}\label{eq.a-result}
  a = \frac{1}{2\pi}\,.
\end{equation}
Where the cut is placed does not matter for $a$.  It does matter for the
constant, and the Dirichlet representation is in any case not available here:
the free mode $(1-t^2)^{-1/2}$ lies outside the Dirichlet domain, which is
exactly why the amplitude of $f_1$ is boundary data rather than something the
interior equation determines.  Fixing that amplitude against the edge layers, as
in Section~\ref{sec.edge}, replaces the cut by a boundary condition and delivers
$b$ along with $a$.  The classical small-separation expansion of Love's equation
gives the same coefficients, as we now describe.

In the original variables ($\lambda = \kappa\xi$, $Q = q/\kappa$):
\begin{equation}\label{eq.density-physical}
  D(\kappa) = \int_{-q}^{q}\rho(\lambda)\,d\lambda
  = \frac{q}{\kappa} + \frac{1}{2\pi}\log\frac{q}{\kappa}
  + b + \cdots\,,
\end{equation}
with $b$ given in closed form by~\eqref{eq.b-closed} below, and obtained from
Appendix~\ref{app.density} together with the two-barrier matching of
Section~\ref{sec.edge}.
How much of this comes from the solution itself can be read off from
the profile integral~\eqref{eq.profile-integral}.  Adding the inner and outer
solutions and subtracting their common overlap gives the composite
\begin{equation}\label{eq.composite}
  \tilde\rho_c(\xi) = \tilde\rho_{\mathrm{out}}(\xi) + \frac{\Phi(\xi)}{\pi}\,,
\end{equation}
whose integral, using~\eqref{eq.outer-integral}, is
\begin{equation}\label{eq.inner-half}
  \int_{-Q}^{Q}\tilde\rho_c(\xi)\,d\xi
  = Q + \frac{2}{\pi}\int_0^{Q}\Phi(\xi)\,d\xi
  = Q - \frac{1}{2} + \mathcal O(1/Q)\,.
\end{equation}
The inner and outer regions therefore account for $Q - \frac12$, and the
remaining $(2\pi)^{-1}\log Q + b + \frac12$ lies beyond leading order in the
expansion.  Where that remainder sits takes some care.
It is not carried by the edge layers themselves: those have width
$\mathcal O(1)$ and density $\mathcal O(Q^{-1/2})$ there, hence mass
$\mathcal O(Q^{-1/2})$.  Numerically, the difference
$\tilde\rho - \tilde\rho_{\mathrm{out}}$ is spread over the whole Fermi sea at
size $\mathcal O(\log Q/Q)$ (Figure~\ref{fig.solution-profile}a).  The remainder
belongs to the next order of the outer expansion, whose amplitude is fixed by
matching to the edge layers.  That is how the Wiener--Hopf data enter, without
any localised edge contribution to the mass.

The constant $b$ is not a Fisher--Hartwig constant and need not be left open.
It follows from carrying the outer expansion to second order in
Appendix~\ref{app.density} and fixing its one free amplitude against the edge
layer in Section~\ref{sec.edge}; the value comes from the two-term
expansion~\eqref{eq.V-expansion} of the ladder renewal function, and the
uniformity of the matching is~\eqref{eq.love-remainder-app}, taken from the
capacitor literature.  That literature also gives the answer directly,
because the identity $D = f(0)-1$ turns the question into a classical
small-separation problem: rescaling $\xi = Qt$ and $\alpha = 1/Q$
turns~\eqref{eq.Love} into Love's equation for two coaxial discs at separation
$\alpha$, whose interior expansion is~\cite{Hutson1963, FarinaLangMartin2022}
\begin{equation}\label{eq.hutson}
  f(Qt) = \frac{\sqrt{1-t^2}}{\alpha}
  + \frac{t\log\frac{1-t}{1+t} + \log(16\pi/\alpha) + 1}
         {2\pi\sqrt{1-t^2}}
  + o(1)\,,
\end{equation}
uniformly on compact subsets of $(-1,1)$.  Setting $t = 0$ and using~\eqref{eq.duality},
\begin{equation}\label{eq.b-closed}
  b = \frac{1+\log(16\pi)}{2\pi} - 1 \,,
\end{equation}
matching~\eqref{eq.K-value}.  The agreement is term by term and not
only at $t = 0$: the leading term of~\eqref{eq.hutson} is the
$f_0 = \sqrt{Q^2-\xi^2}$ of Appendix~\ref{app.density} and the second
is~\eqref{eq.f1-general-new} with~\eqref{eq.K-value}.  Both accounts rest on the
same remainder statement~\eqref{eq.love-remainder-app}, so the agreement checks
the ladder computation of $\mathcal N(Q)$ against the classical value.
The Nystr\"om solutions are consistent with~\eqref{eq.b-closed}: the compensated
quantity $D(Q) - Q - (\log Q)/2\pi$ approaches it from above, and the same
three-point fit in $\log Q/Q$ and $1/Q$ used for $C$ returns it to five digits.

At fixed particle density $D_0$, inverting~\eqref{eq.D-expansion} gives
\begin{equation}\label{eq.Q-of-D}
  Q(D_0) = D_0 - \frac{1}{2\pi}\log D_0 - b + \cdots\,,
\end{equation}
so the Fermi boundary is $q = \kappa Q(D_0)$, contracting linearly in $\kappa$
with a $\kappa$-independent coefficient.  In the Lieb--Liniger model at fixed
density the corresponding statement is $q = 2\sqrt{c D_0}$, so the Fermi
boundary contracts as $\sqrt{c}$ rather than linearly.

%==========================================================================
\section{Edge boundary layer}\label{sec.edge}
%==========================================================================

Near the Fermi boundary $\xi = Q$, the solution must drop from its bulk value
(of order $\log Q/\pi$ at the origin, $\sim 1/2$ in the outer region) to zero.
This transition occurs in a boundary layer of width $\mathcal O(1)$ in the rescaled
variable $\xi$ (width $\kappa$ in physical $\lambda$).

Let us define $s := Q - \xi \geq 0$, representing the distance from the right edge into the bulk.
For $s = \mathcal O(1)$ and $Q \gg 1$, the contributions from the left edge at
$\xi = -Q$ are exponentially small, and the integral equation reduces to a
half-line convolution problem, i.e.,
\begin{equation}\label{eq.half-line}
  2\pi\,\tilde\rho_{\mathrm{edge}}(s)
  \approx g(Q-s) + \int_0^{\infty} K(s - s')\,\tilde\rho_{\mathrm{edge}}(s')\,ds'\,,
\end{equation}
where $g(Q-s) \approx 2/Q^2$ is essentially zero for $Q \gg 1$.
Taking the Fourier transform on $s \geq 0$ produces a Wiener--Hopf equation
with the symbol
\begin{equation}\label{eq.WH-symbol}
  \Sigma(p) = 1 - e^{-|p|}\,,
\end{equation}
whose two-term expansion at its linear zero follows from
\eqref{eq:outer-symbol}.

The factorisation
\begin{equation}\label{eq.WH-factorisation}
    \Sigma(p) = K_+(p)\,K_-(p)\, ,
\end{equation}
with $K_+$ ($K_-$) analytic
and nonzero in the upper (lower) half-plane, can be obtained in closed form, resulting in\footnote{See Appendix~\ref{app.wienerhopf} for the proof.} 
\begin{align}
  K_+(z) &= \frac{\sqrt{-iz}}{\Gamma \left(1 - \frac{iz}{2\pi}\right)}\,
  \exp \left(-\frac{iz}{2\pi}\log(-iz)\right),
  \qquad \operatorname{Im} z > 0\,, \label{eq.Kplus}\\[6pt]
  K_-(z) &= \frac{\sqrt{iz}}{\Gamma \left(1 + \frac{iz}{2\pi}\right)}\,
  \exp \left(\frac{iz}{2\pi}\log(iz)\right),
  \qquad \operatorname{Im} z < 0\,. \label{eq.Kminus}
\end{align}
On the real axis,~\eqref{eq.WH-factorisation} reproduces
\eqref{eq.WH-symbol}.  Near the origin,
\begin{equation}\label{eq.K-origin}
  K_+(z) \sim (-iz)^{1/2}\,,\qquad
  K_-(z) \sim (iz)^{1/2}\qquad (z\to0)\,.
\end{equation}
Therefore, the linear zero of $\Sigma(p)$ is distributed as a square-root singularity
between the two factors.
This square-root behaviour controls the edge profile:
$\tilde\rho_{\mathrm{edge}}(s) \sim s^{1/2}$ as $s \to 0^+$.

The amplitude of the layer can be fixed, and with it the value of the density at
the Fermi boundary.  Set
\begin{equation}\label{eq.edge-scaling}
  F_Q(s) = \sqrt{Q}\;\tilde\rho(Q-s;\,Q)\,,
  \qquad s \geq 0\,,
\end{equation}
so that the outer profile~\eqref{eq.outer-solution}, expanded near the band edge
as $\pi^{-1}\sqrt{2(Q-|\xi|)/Q}$, prescribes the large-$s$ normalisation
$F_Q(s) \to \sqrt{2s}/\pi$.  In the probabilistic language of Section~\ref{sec.matching-C}, the half-line
limit of $F_Q$ solves the homogeneous equation
$F(s) = \int_0^{\infty}P(s-r)F(r)\,dr$ on $s>0$, whose nonnegative solution is
the renewal function $V$ of the ascending ladder heights of the Cauchy
walk~\cite{Spitzer1957}.  We need $V$ to two orders, and the Wiener--Hopf
factorisation already in hand supplies it.  Writing $H$ for the first strict
ascending ladder height and $\hat f(z) = \mathbb{E}\,e^{-zH}$, the
Pollaczek--Spitzer factorisation identifies $1-\hat f$ with the plus factor
$K_+$ of~\eqref{eq.Kplus} continued to the imaginary axis,
$K_+(iz) = \sqrt{z}\,z^{z/2\pi}/\Gamma(1+z/2\pi)$, normalised so that
$\hat f(z)\to 0$ as $z\to\infty$:
\begin{equation}\label{eq.ladder-transform}
  1 - \hat f(z)
  = \sqrt{z}\;
    \frac{\left(\dfrac{z}{2\pi e}\right)^{z/2\pi}}
         {\Gamma \left(1+\dfrac{z}{2\pi}\right)}\,,
  \qquad
  \int_0^{\infty}e^{-zs}V(s)\,ds = \frac{1}{z\left[1-\hat f(z)\right]}\,.
\end{equation}
Stirling's formula gives $1-\hat f(z) \to 1$ as $z \to \infty$, so the transform
of $V$ behaves as $1/z$ there and $V(0) = 1$.  For small $z$,
$\log[1-\hat f(z)] = \frac12\log z
+ (z/2\pi)\left[\log z + \gamma_E - 1 - \log 2\pi\right] + \mathcal O(z^2)$,
whence
\begin{equation}
  \int_0^{\infty}e^{-zs}V(s)\,ds
  = z^{-3/2} - \frac{\log z + \gamma_E - 1 - \log 2\pi}{2\pi}\,z^{-1/2}
  + \cdots\,,
\end{equation}
and term-by-term inversion, using
$\mathcal{L}^{-1}[z^{-1/2}\log z] = -(\log s + \gamma_E + 2\log 2)/\sqrt{\pi s}$,
gives
\begin{equation}\label{eq.V-expansion}
  V(s) = 2\sqrt{\frac{s}{\pi}}
  + \frac{\log(8\pi s) + 1}{2\pi\sqrt{\pi s}}
  + \mathcal O(s^{-3/2})\,,
  \qquad s \to \infty\,.
\end{equation}
At the other end, Stirling's formula applied to~\eqref{eq.ladder-transform}
gives $1-\hat f(z) = 1 - \pi/(6z) + \mathcal O(z^{-2})$, so that the transform of
$V$ is $z^{-1} + \pi/(6z^2) + \cdots$ and
\begin{equation}\label{eq.V-small-s}
  V(s) = 1 + \frac{\pi s}{6} + \mathcal O(s^2)\,,
  \qquad s \downarrow 0\,,
\end{equation}
which describes the approach to the endpoint from inside.
It remains to fix the amplitude.  We do so from the finite interval, because
the second barrier turns out to contribute at leading order.  Let $g_+(x,y)$ be the Green density
of the walk killed on entering $(-\infty,0)$, and let $U = \delta_0 + u(s)\,ds$
be its strict ascending-ladder renewal measure, so that $V(s) = U([0,s])$.  The
half-line factorisation in density--Stieltjes form~\cite{Spitzer1957},
\begin{equation}\label{eq.halfline-factorisation}
  g_+(x,y) = \int_{[0,y]}u(x-y+r)\,U(dr)\,,
  \qquad x \geq y \geq 0\,,
\end{equation}
together with $V(s) \sim 2\sqrt{s/\pi}$ and $u(s) \sim (\pi s)^{-1/2}$, gives
$\sqrt{x}\,g_+(x,y) \to V(y)/\sqrt{\pi}$ locally uniformly in $y$; the global
bound needed below is the half-line Green estimate for walks in the Cauchy
domain of attraction~\cite{DenisovWachtel2024}.

Now translate and reflect, $Y_n = Q - S_n$, so the walk starts at $Q$, the point
$Q-s$ becomes $s$, and the Fermi sea becomes $(0,2Q)$.  With
$\tau_0^- = \inf\{n\geq1 : Y_n<0\}$ and
$\sigma_{2Q}^+ = \inf\{n\geq1 : Y_n>2Q\}$, Hunt's decomposition at the upper
barrier reads
\begin{equation}\label{eq.hunt}
  \tilde\rho(Q-s;Q)
  = g_+(Q,s)
  - \mathbb{E}_Q \left[g_+ \left(Y_{\sigma^+_{2Q}},s\right);\,
    \sigma^+_{2Q} < \tau^-_0\right]\,.
\end{equation}
After division by $Q$ the exit position in the correction converges to the upper
exit position of a Cauchy process from $(0,2)$ started at $1$, whose density is
the translate of~\eqref{eq.poisson-kernel},
\begin{equation}\label{eq.upper-exit}
  \nu(z) = \frac{1}{\pi(z-1)\sqrt{z(z-2)}}\,,
  \qquad z>2\,,
  \qquad \int_2^{\infty}\nu = \frac12\,,
\end{equation}
and $g_+$ carries the factor $z^{-1/2}$ by the ratio limit above.  The
correction is therefore not negligible: by Appendix~\ref{app.exit-moment},
\begin{equation}\label{eq.overshoot-integral}
  \int_2^{\infty}z^{-1/2}\,\nu(z)\,dz = 1 - \frac{1}{\sqrt2}\,,
\end{equation}
so that the remote barrier removes a finite fraction of the half-line answer and
\begin{equation}\label{eq.F-from-hunt}
  \sqrt{Q}\;\tilde\rho(Q-s;Q)
  \longrightarrow \frac{V(s)}{\sqrt{\pi}}
    \left(1 - \int_2^{\infty}z^{-1/2}\nu\right)
  = \frac{V(s)}{\sqrt{\pi}}\cdot\frac{1}{\sqrt2}
  = \frac{V(s)}{\sqrt{2\pi}}\,,
\end{equation}
the interchange being justified by the global Green bound.  This is $F$, and it
does satisfy $F(s) \sim \sqrt{2s}/\pi$, which is the outer profile at the band
edge; that now serves as a consistency check.  In particular
\begin{equation}\label{eq.edge-amplitude}
  \sqrt{Q}\;\tilde\rho(Q;Q) \longrightarrow \frac{V(0)}{\sqrt{2\pi}}
  = \frac{1}{\sqrt{2\pi}} \, .
\end{equation}
The subleading term of~\eqref{eq.V-expansion} is what fixes
the constant in the density expansion, below.  The statement is
one about the leading term of the scaled edge problem, not about the solution
at finite $Q$: the kernel and the driving term of~\eqref{eq.inner-eq} are
strictly positive, so $\tilde\rho(\xi;Q) > 0$ on the closed interval and in
particular $\tilde\rho(Q;Q) > 0$ for every finite $Q$.  Vanishing at the
endpoint is recovered only in the limit, and matches the
$\pi^{-1}\sqrt{2(Q-|\xi|)/Q}$ onset of the outer
profile~\eqref{eq.outer-solution}.

It is convenient to define the regularised symbol
\begin{equation}\label{eq.reg_G}
    G(p) =  \frac{1-e^{-|p|}}{|p|}\, ,\qquad G = G_+\,G_-\, , \qquad G_\pm = \frac{K_\pm}{\sqrt{\mp iz}} \, .
\end{equation}
Since the logarithmic exponentials tend to unity as $z \to 0$, the regular part
of the symbol is normalised to unity at the origin,
\begin{equation}\label{eq.G-norm}
  G_+(0) = G_-(0) = 1\,.
\end{equation}
The role of this normalisation in the ladder-renewal
expansion is made explicit in Appendix~\ref{app.wienerhopf}.

\medskip
The same two-term expansion~\eqref{eq.V-expansion} fixes the constant in the
total density expansion~\eqref{eq.D-expansion}.
Appendix~\ref{app.density} carries the outer expansion to next order, and the
interior equation there determines the profile only up to one free amplitude:
the multiple $\mathcal N(Q)$ of $(1-t^2)^{-1/2}$ in~\eqref{eq.f1-general-new},
which is boundary data the interior equation cannot see.  The two edge layers
supply it.  Their one-sided factors combine into
$\frac{\pi}{4}V(Q+\xi)V(Q-\xi)$, which fixes the edge normalisation but does not
by itself solve the two-barrier problem; the interaction between the barriers is
\begin{equation}\label{eq.J-def-new}
  \mathcal J(t)=
  \frac{(1+t)\log\frac{2}{1+t}+(1-t)\log\frac{2}{1-t}}
       {4\pi\sqrt{1-t^2}}\,.
\end{equation}
Applying $\mathcal A_t$ to the sum $\frac{\pi}{4}VV+\mathcal J$ returns the
interior equation~\eqref{eq.f1-scaled-new}, and $\mathcal J$ vanishes at both
endpoints, so the interaction term leaves either one-sided normalisation
undisturbed; no free multiple of $(1-t^2)^{-1/2}$ survives beyond the one the
composite already carries.  Expanding the product and adding $\mathcal J$ gives,
uniformly on compact subsets of $(-1,1)$,
\begin{equation}\label{eq.matched-subleading-app}
  \frac\pi4V(Q(1+t))V(Q(1-t))+\mathcal J(t) =Q\sqrt{1-t^2}
  +\frac{t\log\frac{1-t}{1+t}+1+\log(16\pi Q)}
        {2\pi\sqrt{1-t^2}}+o(1)\,,
\end{equation}
so that
\begin{equation}\label{eq.K-value}
  \mathcal N(Q)=1+\log(16\pi Q)\,.
\end{equation}
The one-sided renewals contribute $\log(8\pi Q)+1$; the interaction between the
barriers supplies the remaining $\log 2$.

The error estimate comes from elsewhere.  The analysis of the Love
equation, in the formulation of~\cite{Hutson1963,FarinaLangMartin2022}, gives
for every $0<\delta<1$,
\begin{equation}\label{eq.love-remainder-app}
  \sup_{|t|\leq1-\delta}\left|
  f(Qt)-\frac\pi4V(Q(1+t))V(Q(1-t))-\mathcal J(t)
  \right|=o(1)\,.
\end{equation}
The two ingredients play distinct roles.  The ladder construction produces the
value of $\mathcal N(Q)$ from the walk, with the two-term expansion of $V$ as its
only input; the uniformity that licenses reading it off
is~\eqref{eq.love-remainder-app}, quoted from the classical literature.
At $t=0$, equations~\eqref{eq.love-peak}, \eqref{eq.matched-subleading-app}
and~\eqref{eq.love-remainder-app} give the total density
expansion~\eqref{eq.D-expansion} with the coefficients~\eqref{eq.a-result}
and~\eqref{eq.b-closed}: the logarithm comes from the endpoint-singular source
in the next-order outer equation, the constant from the two-barrier
normalisation.

Away from the real axis the symbol carries a second scale.  On the real axis
$\Sigma$ vanishes only at $p = 0$, but in the gamma-function
representation~\eqref{eq.symbol_gamma_product} the factor
$[\Gamma(1 + ip/2\pi)\,\Gamma(1 - ip/2\pi)]^{-1}$ is entire and vanishes at the
poles of the two gamma functions, that is at $p = \pm 2\pi i m$ with
$m \geq 1$, the remaining factors being nonzero away from the origin.  The
nearest pair lies at distance $2\pi$.  For truncated convolution operators on a
finite interval~\cite{Sakhnovich2015} such zeros generate corrections to the
resolvent that are exponentially small in the length of the interval, here of
order
\begin{equation}\label{eq.A-value}
  e^{-AQ}\,,\qquad A = 2\pi\,,
\end{equation}
which lies beyond every order of the $1/Q$ expansion.  The same distance
controls the exponential corrections to the Fredholm determinant
$\mathcal F(Q) = \det_F(I - \mathcal L_Q)$, which enter as
$\sum_m d_m e^{-2\pi m Q}$ (Appendix~\ref{sec.fredholm-app}).

The rate agrees with the one found for the circular disc capacitor
in~\cite{BajnokBalogHegedusVona2022}, which is no coincidence: by
Section~\ref{sec.duality-Love} the two equations differ only in the driving
term and share the symbol~\eqref{eq.WH-symbol}.  The Lieb--Liniger symbol
$1 - e^{-\kappa|p|}$ has its zeros at $2\pi i n/\kappa$, so after the rapidity
axis is rescaled by the coupling the same distance appears there as well; the
value quoted in~\cite{MarinoReis2019} is that distance measured in the units of
$\sqrt{\gamma}$ natural to the continuum problem.  Whether the corrections
of size~\eqref{eq.A-value} organise into a resurgent transseries in $1/Q$, with
$k$-instanton sectors weighted by $e^{-2\pi k Q}$ and factorially growing
perturbative coefficients, is a separate question, and settling it requires the
Borel transform and the Stokes data.  Even in the closest analogue the outcome
turns on the details.  Working from the Wiener--Hopf solution,
Ref.~\cite{BajnokBalogHegedusVona2022} obtained the leading exponential terms
for the capacitor and found median resummation to reproduce them, while the
same construction fails for the $O(3)$ sigma model, where instanton
contributions are not captured by the asymptotics of the perturbative
coefficients.  Numerically the scale lies out of reach at double precision.  The
machine-epsilon threshold $e^{-2\pi Q}\approx10^{-16}$ falls at $Q\approx6$,
where the perturbative remainder is still of order unity, while at the larger
$Q$ where the $1/Q$ series is accurate that remainder exceeds $e^{-2\pi Q}$ by
nine or more orders of magnitude; a direct subtraction would need
extended-precision arithmetic, and the indirect route through the large-order
behaviour would need many more than the five perturbative coefficients that a
double-precision fit over $Q\in[20,500]$ resolves.

The three asymptotic regions developed in Sections~\ref{sec.inner}--\ref{sec.edge}---inner, outer, and edge---are summarised schematically in Figure~\ref{fig.regions-schematic}.

% ======================================================================
\begin{figure}[htbp]
\centering
\begin{tikzpicture}[
    >=Stealth,
    font=\small,
]

\def\xmin{-7.3}
\def\xmax{7.3}
\def\ymin{-0.15}
\def\ymax{2.4}
\def\Q{6.0}
\def\edgeW{0.55}
\def\Lambda{1.8}

\fill[blue!7] (-\Lambda, 0) rectangle (\Lambda, \ymax);
\fill[green!7] (-\Q+\edgeW, 0) rectangle (-\Lambda, \ymax);
\fill[green!7] (\Lambda, 0) rectangle (\Q-\edgeW, \ymax);
\fill[red!8] (-\Q-\edgeW, 0) rectangle (-\Q+\edgeW, \ymax);
\fill[red!8] (\Q-\edgeW, 0) rectangle (\Q+\edgeW, \ymax);

\draw[->, thick] (\xmin, 0) -- (\xmax+0.5, 0)
    node[right] {$\xi$};
\draw[->, thick] (0, \ymin) -- (0, \ymax+0.3)
    node[above=2pt] {$\tilde\rho(\xi)$};

\draw[densely dashed, gray!70, thin] (\xmin+0.3, 0.5) -- (\xmax, 0.5);
\node[left, gray!70, font=\footnotesize] at (\xmin+0.2, 0.5)
    {$\langle\tilde\rho\rangle=\frac{1}{2}$};

\draw[thin, densely dashed, red!50!black]
    (-\Q, 0) -- (-\Q, \ymax);
\draw[thin, densely dashed, red!50!black]
    (\Q, 0) -- (\Q, \ymax);
\node[below, red!50!black, font=\footnotesize] at (-\Q, -0.05) {$-Q$};
\node[below, red!50!black, font=\footnotesize] at (\Q, -0.05) {$+Q$};

\draw[thin, densely dotted, blue!40!black]
    (-\Lambda, 0) -- (-\Lambda, \ymax);
\draw[thin, densely dotted, blue!40!black]
    (\Lambda, 0) -- (\Lambda, \ymax);

\draw[very thick, blue!70!black]
    plot[domain=-\Q:-\Q+0.50, samples=30, smooth]
    ({\x}, {0.135*sqrt((\x + \Q)/0.5)});
\draw[very thick, blue!70!black]
    plot[domain=-\Q+0.50:\Q-0.50, samples=200, smooth]
    ({\x}, {(1/3.14159265)*ln((\Q + sqrt(\Q*\Q - \x*\x))/sqrt(\x*\x + 0.00094))});
\draw[very thick, blue!70!black]
    plot[domain=\Q-0.50:\Q, samples=30, smooth]
    ({\x}, {0.135*sqrt((\Q - \x)/0.5)});

\pgfmathsetmacro{\peakAtArrow}{0.5 + 1.45/(1 + 0.45*0.04)}
\draw[<->, thin, black!60] (-0.2, 0.5) -- (-0.2, \peakAtArrow);

\pgfmathsetmacro{\labelY}{0.5*(\peakAtArrow + 0.5)}
\node[black!60, font=\footnotesize] at (0, \labelY) {$\sim$};
\node[right=1pt, black!60, font=\footnotesize]
    at (0.1, \labelY) {$\dfrac{\log Q}{\pi}$};

\node[blue!70!black, font=\footnotesize\bfseries] at (0, \ymax+0.15)
    {\textsc{Inner}};
\pgfmathsetmacro{\outerMid}{0.5*(\Lambda + \Q - \edgeW)}
\node[green!50!black, font=\footnotesize\bfseries] at (\outerMid, \ymax+0.15)
    {\textsc{Outer}};
\node[green!50!black, font=\footnotesize\bfseries] at (-\outerMid, \ymax+0.15)
    {\textsc{Outer}};
\node[red!60!black, font=\footnotesize\bfseries] at (\Q, \ymax+0.15)
    {\textsc{Edge}};
\node[red!60!black, font=\footnotesize\bfseries] at (-\Q, \ymax+0.15)
    {\textsc{Edge}};

\draw[<->, thick, black!50] (-\Q, \ymax+1.25) -- (\Q, \ymax+1.25);
\node[above=1pt, black!50, font=\footnotesize] at (0, \ymax+1.25)
    {$[-Q,\; Q]$, \quad $Q = q/\kappa \to \infty$};

\node[blue!70!black, font=\footnotesize, text width=2.8cm, align=center,
      anchor=north] at (0, -0.50)
    {$|\xi| \lesssim \mathcal O(1)$\\[3pt]
     Bose-Einstein peak:\\[1pt]
     $\hat{\tilde\rho}(p) = \dfrac{1}{e^{|p|} - 1}$};
\node[green!50!black, font=\footnotesize, text width=2.9cm, align=center,
      anchor=north] at (\outerMid, -0.50)
    {$1 \ll |\xi| \ll Q$\\[3pt]
     symbol $2\pi|p|$:\\[1pt]
     $\tilde\rho = \frac{1}{\pi}\operatorname{arcosh}\frac{Q}{|\xi|}$};
\node[red!60!black, font=\footnotesize, text width=2.2cm, align=center,
      anchor=north] at (\Q, -0.70)
    {width $\mathcal O(1)$\\[3pt]
     Wiener--Hopf:\\[1pt]
     $\Sigma = 1 - e^{-|p|}$};

\draw[<->, thick, blue!50!black] (-\Lambda, -0.30) -- (\Lambda, -0.30);
\draw[<->, thick, red!50!black]
    (\Q-\edgeW, -0.50) -- (\Q+\edgeW, -0.50);

\draw[->, thick, blue!40!black, decorate,
      decoration={snake, amplitude=1.2pt, segment length=6pt, post length=3pt}]
    (\Lambda+0.15, 0.75) -- (\Lambda+0.85, 0.58);
\draw[->, thick, blue!40!black, decorate,
      decoration={snake, amplitude=1.2pt, segment length=6pt, post length=3pt}]
    (-\Lambda-0.15, 0.75) -- (-\Lambda-0.85, 0.58);
\draw[->, thick, red!40!black, decorate,
      decoration={snake, amplitude=1.2pt, segment length=6pt, post length=3pt}]
    (\Q-\edgeW-0.15, 0.56) -- (\Q-\edgeW-0.85, 0.53);
\draw[->, thick, red!40!black, decorate,
      decoration={snake, amplitude=1.2pt, segment length=6pt, post length=3pt}]
    (-\Q+\edgeW+0.15, 0.56) -- (-\Q+\edgeW+0.85, 0.53);

\end{tikzpicture}

\caption{Schematic structure of the rescaled density
$\tilde\rho(\xi;\,Q)$ showing the three asymptotic regions: the inner
Bose--Einstein peak of height $\sim\log Q/\pi$ ($|\xi| \lesssim \mathcal O(1)$,
blue shading), the outer profile~\eqref{eq.outer-solution}
($1 \ll |\xi| \ll Q$, green shading), and the edge boundary layers
($|\xi \mp Q| \lesssim \mathcal O(1)$, red shading).  The dashed line marks the
mean~\eqref{eq.bulk-value} of the outer profile across the Fermi sea, which the
profile crosses at $|\xi|/Q = 1/\cosh(\pi/2)$.  Wavy arrows indicate the
asymptotic correspondence between adjacent regions.}
\label{fig.regions-schematic}
\end{figure}
%===========================================END OF FIGURE==================================================

%==========================================================================
\section{Energy and physical predictions}
\label{sec.observables}
%==========================================================================

Having determined the constant $C$ in the peak density expansion~\eqref{eq.rho0-expansion}, we now derive the ground-state energy and discuss the physical consequences.
The central tool is the identity relating the energy integral to the peak density, which holds for all $Q > 0$ and reduces the energy calculation to a single quantity already determined in Section~\ref{sec.constant}.

\subsection{Energy identity}
\label{sec.energy-identity}

In the rescaled variables~\eqref{eq.rescaling}, denote the integral in
\eqref{eq.observables-rescaled} by $E_{\mathrm{inner}}(Q)$, so that
$e(\kappa)=-E_{\mathrm{inner}}(Q)/\kappa$.  Its integrand involves the same
Lorentzian weight~\eqref{eq.rescaled_kernel} that
appears as the driving term, which is what makes the following identity possible.

Setting $\xi = 0$ in the rescaled integral equation~\eqref{eq.inner-eq}, the kernel $K(0-\eta) = 2/(1+\eta^2)$ reduces to the driving term $K(\eta)$, so that
\begin{equation}
  2\pi\,\tilde\rho(0;\,Q)
  = 2
  + \int_{-Q}^{Q}\frac{2}{1+\eta^2}\,\tilde\rho(\eta;\,Q)\,d\eta
  = 2 + E_{\mathrm{inner}}(Q)\,,
\end{equation}
where the last step uses this notation. Therefore, we have the identity
\begin{equation}\label{eq.E_inner}
  E_{\mathrm{inner}}(Q) = 2\pi\,\tilde\rho(0;\,Q) - 2\,,
\end{equation}
valid for all $Q > 0$.  This identity reduces the energy calculation to the peak density $\tilde\rho(0;\,Q)$.

The origin of~\eqref{eq.E_inner} is structural.  In the Lieb--Liniger equation~\eqref{eq.LL}, the driving term is the constant~$1$, while the energy integral involves the Lorentzian kernel $K(\lambda)$; these are different functions, and no analogous identity holds.  In the lattice NLS equation, by contrast, the driving term and the kernel are both Lorentzians of the same width, related by $K(\xi) = K(\xi - 0)$.  Evaluating the integral equation at the origin therefore automatically produces the energy integral on the right-hand side.  Such a correspondence is specific to the lattice model and is ultimately a consequence of the fact that the energy~\eqref{eq.energy-per-site} and the Bethe ansatz equation share the same kernel---a structural feature of the quantum inverse scattering construction (Section~\ref{sec.model}).

Substituting the peak density~\eqref{eq.rho0-expansion} with~\eqref{eq.C-result}
into the identity~\eqref{eq.E_inner}:
\begin{equation}\label{eq.E_inner_full}
  E_{\mathrm{inner}}(Q) = 2\log Q + 2(\log 2 + \gamma_E - 1)
  + o(1)\,.
\end{equation}

From~\eqref{eq.rho0-expansion} and~\eqref{eq.C-result}, the peak density in
physical variables is
\begin{equation}\label{eq.rho-physical}
  \rho(0;\,\kappa)
  = \frac{1}{\pi\kappa} \left[\log\left( \frac{2q}{\kap} \right) + \gamma_E \right]\,.
\end{equation}
The physical ground-state energy per site is:
\begin{equation}\label{eq.egs}
  e(\kap) = -\frac{E_{\mathrm{inner}}}{\kap}
  = -\frac{2}{\kappa}\left[\log\left( \frac{2q}{\kap} \right) + \gamma_E - 1\right]\,.
\end{equation}
The $e(\kappa) \sim -2\log(1/\kappa)/\kappa$ scaling is a distinctive prediction of the lattice model.\footnote{
In the continuous Lieb--Liniger model, $e(\gamma) = \gamma - \frac{4}{3\pi}\gamma^{3/2} + \mathcal O(\gamma^2)$. The lattice case has a qualitatively different expansion involving logarithms rather than fractional powers.
}

\subsection{Energy at fixed particle density and comparison with Lieb--Liniger}
At fixed density $D_0$ the Fermi boundary is $q = \kappa\,Q(D_0)$
with $Q(D_0)$ given by~\eqref{eq.Q-of-D}, so $2q/\kappa = 2D_0 + \cdots$ and the
logarithm of $\kappa$ in~\eqref{eq.egs} cancels:
\begin{equation}\label{eq.egs-fixed-D}
  e(\kap) = -\frac{2}{\kappa}\left[\log(2D_0) + \gamma_E - 1
  + \mathcal O \left(\frac{\log D_0}{D_0}\right)\right]\,.
\end{equation}
The bracket is $\kappa$-independent, so at fixed density the whole $\kappa$
dependence of $e$ is the prefactor $1/\kappa$, which came from the
normalisation of $H$ in~\eqref{eq.H-from-transfer} and carries no dynamical
information.  The $1/\kappa$ growth therefore belongs to that normalisation, and
the content of~\eqref{eq.egs-fixed-D} is the bracket: the rescaled
energy grows logarithmically with the filling, $E_{\mathrm{inner}} = 2\log D_0 +
2(\gamma_E + \log 2 - 1) + \cdots$, and the subleading correction requires the
constant $b$ of~\eqref{eq.density-physical}, given in closed form
by~\eqref{eq.b-closed}.

The comparison with Lieb--Liniger is best drawn at fixed density in terms of
quantities that are free of this normalisation.  In the weak-coupling limit of
the Lieb--Liniger equation the density is the semicircle
$\rho(\lambda) = \sqrt{q^2-\lambda^2}/(2\pi c)$, giving
$q = 2\sqrt{cD_0} = 2D_0\sqrt{\gamma}$ and
$\rho_{\rm LL}(0) = 1/(\pi\sqrt{\gamma})$; both are recovered by direct
numerical solution.  Table~\ref{tab:comparison} collects the scalings.

\begin{table}[htbp]
\centering
\begin{tabular}{l c c}
\hline\hline
Observable & Lattice NLS ($\kappa \to 0$) & Lieb--Liniger ($\gamma \to 0$) \\
\hline
Bulk profile & $\pi^{-1}\operatorname{arcosh}(q/|\lambda|)$ & semicircle $\sqrt{q^2-\lambda^2}/(2\pi c)$ \\[3pt]
Peak density $\rho(0)$ & $\sim \log(1/\kappa)/(\pi\kappa)$ & $\sim 1/(\pi\sqrt{\gamma})$ \\[3pt]
Fermi boundary $q$ (fixed $D_0$) & $\sim D_0\kappa$ & $\sim 2D_0\sqrt{\gamma}$ \\[3pt]
Total density $D$ (fixed $q$) & $\sim q/\kappa$ & $\sim q^2/(4c)$ \\[3pt]
Energy $e$ (fixed $D_0$) & $\sim -2\log(2D_0)/\kappa$ & $\sim \gamma$ \\[3pt]
Expansion structure & $\log Q$, $(\log Q)^m/Q^n$ & $\gamma^{n/2}$, $\zeta(2k+1)$ \\
\hline\hline
\end{tabular}
\caption{Weak-coupling scalings for the lattice NLS model and the continuous
Lieb--Liniger model at fixed particle density~$D_0$, with $\gamma = c/D_0$.
Both Fermi boundaries contract, but with different powers, and both peak
densities diverge, the lattice one faster.  The difference in the bulk profile
is the structural one: the Lorentzian driving term of the lattice equation
injects weight at the origin, whereas the constant driving term of the
Lieb--Liniger equation spreads it across the sea.  The lattice entries carry the
overall $1/\kappa$ of~\eqref{eq.H-from-transfer} in the energy row.}
\label{tab:comparison}
\end{table}

At fixed density both Fermi boundaries contract, the lattice one linearly in
$\kappa$ and the Lieb--Liniger one as $\sqrt{c}$, and both peak densities
diverge, as $\log(1/\kappa)/(\pi\kappa)$ against $1/(\pi\sqrt{\gamma})$.  The
qualitative difference is not that one quantity diverges and the other does not;
it is the shape of the sea.  The Lorentzian driving term deposits all of its
weight within $|\lambda| \lesssim \kappa$ and lets the near-critical kernel
redistribute it, producing the logarithmic peak on top of the
$\operatorname{arcosh}$ profile, whereas the constant driving term of the
Lieb--Liniger equation produces a semicircle with no separate inner region.  The
expansions differ accordingly: powers of $\log Q$ and $1/Q$ here, half-integer
powers of $\gamma$ with odd zeta values there.

%==========================================================================
\section{Conclusions}
\label{sec.conclusions}
%==========================================================================

We have carried out a systematic asymptotic analysis of the integral equation~\eqref{eq.latticeNLS} 
governing the ground state of the quantum lattice nonlinear Schr\"odinger model
(equivalently the isotropic Heisenberg XXX spin chain with spin $s=-1$),
in the weak-coupling limit $\kappa\to 0$ at fixed Fermi rapidity.
Unlike the well-studied continuous Lieb--Liniger equation, whose driving term is a
constant, the lattice equation is doubly singular: both the driving term and
the integral kernel degenerate into $\delta$-functions as $\kappa\to 0$.

After rescaling, the equation carries the single parameter $Q = q/\kappa$, and
the analysis is an expansion in $1/Q$. 
Thus, both the rescaled kernel and the driving term are given by the Cauchy density, 
allowing us to identify the root density with the Green kernel of a Cauchy random walk
killed upon leaving the interval $[-Q,Q]$. The different asymptotic regions
then acquire a common probabilistic interpretation. The potential kernel
describes the inner profile, the exit law determines the bulk solution, and
the ascending-ladder renewal function controls the boundary layer. In this
way the inner, outer and edge regimes are obtained from different limits of
the same finite-interval problem.

At the centre of the distribution, the peak density is given
by~\eqref{eq.rho0-expansion} and~\eqref{eq.C-result}. The Euler--Mascheroni
contribution comes from the potential kernel of the discrete walk, whereas
$\log 2$ is the logarithmic moment of the limiting exit distribution. The
latter is an overshoot effect: the Cauchy walk leaves the Fermi interval by a
jump rather than by reaching its boundary continuously. The same Green
representation gives the inner digamma profile~\eqref{eq.inner-profile-relative} and
the outer limit~\eqref{eq.outer-solution}. The integral of the outer profile,
evaluated in~\eqref{eq.outer-integral}, gives the leading contribution
$D(Q)\sim Q$.

Near the Fermi points, the density is governed by the ladder-renewal function.
The normalisation of this edge solution cannot be fixed from a single
half-line problem because the opposite boundary contributes at the same
order. Including both barriers gives the endpoint amplitude
in~\eqref{eq.edge-amplitude}. The two-term expansion of the renewal function
also supplies the boundary datum left undetermined by the next-order outer
equation. Combining the interior calculation with this two-barrier
normalisation gives the density expansion~\eqref{eq.D-expansion}, with the
coefficients determined in~\eqref{eq.b-closed}. The logarithmic coefficient
comes from the endpoint-singular source in the next-order outer equation,
while the constant term depends on the interaction between the two boundary
layers. The resulting expression agrees with the small-separation expansion
of the Love equation~\eqref{eq.hutson}.

The energy is related directly to the peak density
by~\eqref{eq.E_inner}. Substitution of
\eqref{eq.rho0-expansion}--\eqref{eq.C-result} therefore gives the
weak-coupling ground-state energy. This relation also makes clear why the
lattice model differs from the continuous Lieb--Liniger gas. The constant
driving term of the latter produces a semicircular Fermi sea, whereas the
Lorentzian driving term produces the outer profile
in~\eqref{eq.outer-solution}, together with a separate logarithmic inner
region.

The emergence of the Bose--Einstein distribution~\eqref{eq.BE-distribution} in the
inner region is suggestive of a deeper connection to the Yang--Yang
thermodynamic Bethe ansatz~\cite{YangYang1969,Zamolodchikov1990}, where
the same function governs the thermal occupation of quasiparticle modes.
It would be natural to ask whether the methods developed here extend to finite temperature, where the driving term
of the Yang--Yang equation also involves the Lorentzian kernel.
Such an extension could provide a systematic weak-coupling expansion for
the free energy and specific heat of the lattice NLS model, complementing
the recent finite-temperature analysis of the XXX$_{s=-1}$ chain~\cite{ZhongEtAl2025}.

\section*{Data availability}
All numerical computations and data reported in this paper are available at
\href{https://github.com/ftahas/lattice-NLS}{github.com/ftahas/lattice-NLS}.

\bibliographystyle{JHEP}
\bibliography{refs}

\appendix

%=====================================%=====================================%============================%==========================================================================
\section{Eigenvalue analysis of the truncated kernel}
\label{app.eigenvalues}
%==========================================================================

We present a detailed analysis of the spectral properties of the truncated
convolution operator $\mathcal{K}_Q$ on $L^2([-Q,Q])$ defined in
\eqref{eq.KQ-def}, with the Lorentzian kernel~\eqref{eq.rescaled_kernel}.
This operator governs the rescaled inner equation~\eqref{eq.inner-eq}
and its spectral properties---in particular the closing of the spectral gap
as $Q\to\infty$---are directly responsible for the logarithmic growth of the
solution.

%---------------------------------------------------------------
\subsection{Basic spectral properties}
%---------------------------------------------------------------

The operator $\mathcal{K}_Q$ enjoys several standard properties that follow from the structure of the Lorentzian kernel.

\begin{proposition}\label{prop:basic-spectral}
For every $Q > 0$, the operator $\mathcal{K}_Q$ is a compact, self-adjoint, positive, Hilbert--Schmidt operator on $L^2([-Q,Q])$.
\end{proposition}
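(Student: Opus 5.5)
The argument runs through the kernel $k(\xi,\eta)=2/(1+(\xi-\eta)^2)$ on the bounded square $[-Q,Q]^2$, taking the four properties in order of increasing effort.

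\emph{Hilbert--Schmidt.} The kernel satisfies $0<k\le 2$, so
\[
\iint_{[-Q,Q]^2} k(\xi,\eta)^2\,d\xi\,d\eta \le 4\,(2Q)^2 < \infty .
\]
Hence $\mathcal K_Q$ is Hilbert--Schmidt on $L^2([-Q,Q])$.

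\emph{Compactness.} This is immediate, since every Hilbert--Schmidt operator is compact.

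\emph{Self-adjointness.} The kernel is real and symmetric, $k(\xi,\eta)=k(\eta,\xi)$. By Fubini, justified by the square-integrability just established, this gives $\langle \mathcal K_Q f,h\rangle=\langle f,\mathcal K_Q h\rangle$ for all $f,h\in L^2([-Q,Q])$.

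\emph{Positivity.} This is the only step needing an idea. Extend $f\in L^2([-Q,Q])$ by zero to a function $f_0\in L^1\cap L^2(\mathbb R)$. Then
\[
\langle \mathcal K_Q f,f\rangle=\iint K(\xi-\eta)f_0(\eta)\overline{f_0(\xi)}\,d\eta\,d\xi .
\]
By Plancherel and the convolution theorem, together with the transform $\hat K(p)=2\pi e^{-|p|}$ from \eqref{eq.K-fourier} at $\kappa=1$, this equals
\[
\frac{1}{2\pi}\int_{-\infty}^{\infty}2\pi e^{-|p|}\,|\hat f_0(p)|^2\,dp=\int_{-\infty}^{\infty} e^{-|p|}\,|\hat f_0(p)|^2\,dp\;\ge\;0 .
\]
The convolution $K*f_0$ lies in $L^2$ because $K\in L^1$, so the Fourier manipulation is legitimate. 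The integral is strictly positive for $f\ne0$: in that case $\hat f_0$ is not a.e.\ zero, and $e^{-|p|}>0$ everywhere. So $\mathcal K_Q$ is strictly positive definite, and in particular has trivial kernel.

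The same computation also yields the norm bound $\langle\mathcal K_Q f,f\rangle\le\|\hat f_0\|^2=2\pi\|f\|^2$. This is consistent with the operator norm $2\pi$ stated in Section~\ref{sec.eigenvalue-summary}.

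\emph{Consequences for the spectrum.} By the spectral theorem for compact self-adjoint operators, the spectrum of $\mathcal K_Q$ is discrete, real and nonnegative. Since the kernel is trivial, every eigenvalue is in fact positive, which gives the ordering $\lambda_n(Q)>0$ used in \eqref{eq.KQ-def}.

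The strict ordering $\lambda_0>\lambda_1$ (simplicity of the top eigenvalue) is not part of this statement. If it is needed, it would follow from Jentzsch's (Perron--Frobenius) theorem, since the kernel is strictly positive.

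The main obstacle is modest: it is justifying the Fourier identity for positivity. I would handle it exactly as above, using the zero extension, the fact that $K\in L^1$, and Plancherel.
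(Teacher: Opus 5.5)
Your proof is correct and follows essentially the paper's route: real symmetric kernel for self-adjointness, square-integrable kernel for Hilbert--Schmidt (hence compactness), and Plancherel with $\hat K(p)=2\pi e^{-|p|}$ applied to the zero extension for positivity. The differences are minor. The paper integrates the squared kernel over the full line to get the sharper, linear-in-$Q$ bound $\|\mathcal K_Q\|_{\mathrm{HS}}^2\le 4\pi Q$, where you use the cruder $16Q^2$; and your identity $\langle \mathcal K_Q f,f\rangle=\int e^{-|p|}|\hat f_0(p)|^2\,dp$ has the correct normalisation for the paper's Fourier convention, whereas the paper's displayed formula carries a stray factor $2\pi$ that does not affect the sign.
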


Self-adjointness follows from the symmetry $K(\xi - \eta) = K(\eta - \xi)$.
The Hilbert--Schmidt norm is bounded by $\|\mathcal{K}_Q\|_{\mathrm{HS}}^2 \leq 4\pi Q$ since
\begin{equation}
  \int_{-\infty}^{\infty} \frac{1}{\left(1+x^2\right)^2}  \,dx = \frac{\pi}{2}\,.
\end{equation}
Positivity follows from the Fourier representation
\begin{equation}
  \langle f, \mathcal{K}_Q f\rangle = \int_{-\infty}^{\infty} 2\pi\,e^{-|p|}\,|\hat{\tilde f}(p)|^2\,dp \geq 0\,,
\end{equation}
where $\tilde f$ is the extension by zero to $\mathbb{R}$.

By the spectral theorem for compact self-adjoint operators, $\mathcal{K}_Q$
possesses a countable sequence of positive eigenvalues
\begin{equation}\label{eq.eig-ordering}
  \lambda_0(Q) > \lambda_1(Q) > \lambda_2(Q) > \cdots > 0\,,
  \qquad
  \lambda_n(Q) \to 0 \text{ as } n \to \infty\,,
\end{equation}
with an associated orthonormal basis $\{\phi_n(\,\cdot\,; Q)\}_{n=0}^{\infty}$
of $L^2([-Q,Q])$, satisfying $\mathcal{K}_Q\, \phi_n = \lambda_n\, \phi_n$.
The simplicity of each eigenvalue follows from the parity decomposition and the oscillation theorem for totally positive kernels: since $K(-x) = K(x)$, the operator commutes with parity $P: f(\xi) \mapsto f(-\xi)$, and the eigenfunctions have definite parity $\phi_n(-\xi) = (-1)^n \phi_n(\xi)$, with even/odd sectors interleaving.

Since $K(\xi-\eta)$ is continuous on $[-Q,Q]^2$ and $\mathcal{K}_Q$ is positive, Mercer's theorem yields the pointwise-convergent eigenfunction expansion
\begin{equation}\label{eq.mercer}
  K(\xi-\eta)
  \;=\;
  \sum_{n=0}^{\infty} \lambda_n(Q)\,
    \phi_n(\xi;Q)\, \phi_n(\eta;Q)\,,
  \qquad
  \xi,\eta \in [-Q,Q]\,,
\end{equation}
and setting $\xi = \eta$ and integrating produces the trace identity $\sum_{n=0}^{\infty}\lambda_n(Q) = 4Q$.

The eigenvalues satisfy the Courant--Fischer characterisation
\begin{equation}\label{eq.minmax}
    \lambda_n(Q)
    \;=\;
    \max_{\substack{V \subset L^2([-Q,Q]) \\[1pt] \dim V = n+1}}
    \;\min_{f \in V,\; \|f\|=1}
    \;\langle f, \mathcal{K}_Q\, f \rangle
\end{equation}
(see, e.g.,~\cite{RS4}), from which several further properties follow: (i)~\emph{uniform upper bound} $\lambda_n(Q) < 2\pi$ for all $n$ and $Q$, since $\hat K(p) = 2\pi\,e^{-|p|} < 2\pi$ for $p \neq 0$ and functions supported on $[-Q,Q]$ have entire Fourier transforms, so the supremum $2\pi$ is never attained; (ii)~\emph{domain monotonicity}: $\lambda_n(Q)$ is strictly increasing in $Q$, by inclusion of trial subspaces; (iii)~\emph{convergence}: $\lambda_n(Q) \to 2\pi$ as $Q \to \infty$ for each fixed $n$, proved by constructing $(n+1)$-dimensional trial subspaces of spatially separated bump functions whose Rayleigh quotients approach $2\pi$. In particular, the entire spectrum accumulates at $2\pi$ from below as $Q \to \infty$, the discretised precursor of the continuous spectrum $(0, 2\pi]$ of the full-line operator.

For a truncated convolution operator on $[-Q,Q]$ with symbol
$\widehat{K}(p) = 2\pi\, e^{-|p|}$, the first Szeg\H{o} limit theorem
(see~\cite{Sakhnovich2015,bottcher2006analysis}) gives the asymptotic distribution of
eigenvalues. Defining the symbol of the Love operator
$\mathcal{L}_Q = (2\pi)^{-1}\mathcal{K}_Q$ as
$\sigma(p) = 1 - e^{-|p|}$, then it follows that the eigenvalues of $\mathcal{L}_Q$ are
$\mu_n = \lambda_n/(2\pi) \in (0,1)$.

\begin{theorem}[Szeg\H{o}'s first limit theorem]\label{thm:szego1}
  For any continuous function $F: [0,1] \to \mathbb{R}$,
  \begin{equation}\label{eq.szego1}
    \lim_{Q\to\infty}\;
    \frac{1}{2Q}
    \sum_{n=0}^{\infty} F(\mu_n(Q))
    \;=\;
    \frac{1}{2\pi}
    \int_{-\infty}^{\infty} F\left(e^{-|p|}\right)\, dp
    \;=\;
    \frac{1}{\pi}
    \int_0^{\infty} F\left(e^{-p}\right)\, dp\,.
  \end{equation}
\end{theorem}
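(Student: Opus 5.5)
The plan is to reduce the theorem to power functions, compute those exactly through traces of $\mathcal L_Q^k$, and then extend by density. First a caveat on scope. The sum $\sum_n F(\mu_n(Q))$ runs over infinitely many eigenvalues accumulating at $0$, and $\int_0^\infty F(e^{-p})\,dp$ diverges unless $F$ vanishes at $0$. I will therefore prove \eqref{eq.szego1} for the class $F(x)=x\,G(x)$ with $G$ continuous on $[0,1]$, which is the class for which both sides are finite.

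\emph{Step 1: monomials.} Take $F(x)=x^k$ with $k\ge1$. By Mercer's expansion \eqref{eq.mercer}, $\sum_n \mu_n^k=\operatorname{tr}\mathcal L_Q^k$, and
\begin{equation*}
\operatorname{tr}\mathcal L_Q^k=\int_{[-Q,Q]^k}P(x_1-x_2)\cdots P(x_k-x_1)\,dx_1\cdots dx_k ,
\end{equation*}
with $P$ the Cauchy density of Section~\ref{sec.matching-C}. Removing the constraint on $x_2,\dots,x_k$ gives the full-line value $P^{*k}(0)=1/(\pi k)$ by \eqref{eq.n-step}. The unconstrained expression is therefore $2Q/(\pi k)$, which is exactly $2Q\cdot\frac1\pi\int_0^\infty e^{-kp}\,dp$.

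The deficit is $\int_{-Q}^{Q}P^{*k}(0)\,\mathbb P\bigl(\text{the $k$-step Cauchy bridge from $x_1$ to $x_1$ leaves }[-Q,Q]\bigr)\,dx_1$. For fixed $k$ the bridge is tight. Hence for any $\varepsilon>0$ there is $R$ such that the exit probability is at most $\varepsilon$ whenever $\operatorname{dist}(x_1,\pm Q)>R$. This bounds the deficit by $2R\,P^{*k}(0)+2Q\varepsilon P^{*k}(0)$, which is $o(Q)$. Dividing by $2Q$ proves \eqref{eq.szego1} for $x^k$, and by linearity for every polynomial with zero constant term.

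\emph{Step 2: density.} Let $F=xG$ and take $\varepsilon>0$. Weierstrass gives a polynomial $p$ with $\sup_{[0,1]}|G-p|\le\varepsilon$; then $xp(x)$ is a polynomial without constant term, and $|F(x)-xp(x)|\le\varepsilon x$ on $[0,1]$. The trace identity $\sum_n\lambda_n=4Q$ from Mercer gives $\frac1{2Q}\sum_n\mu_n=\frac1\pi$ exactly. So the left side of \eqref{eq.szego1} changes by at most $\varepsilon/\pi$ when $F$ is replaced by $xp$. On the right side, the same change is bounded by $\frac{\varepsilon}{\pi}\int_0^\infty e^{-p}\,dp=\varepsilon/\pi$. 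Since $\varepsilon$ is arbitrary, the limit exists and equals the right side.

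\emph{Main obstacle.} The only analytic input is the $o(Q)$ boundary estimate in Step 1, and specifically the uniformity in $x_1$ of the bridge-exit bound; the heavy Cauchy tails are the concern. I would handle it directly from the kernel. The bridge density factorises into $k$ Cauchy factors, so the event that some intermediate point satisfies $|x_j|>Q$ while $x_1$ sits at depth $d$ inside the interval costs at most
\begin{equation*}
k\,\frac{\sup_{|y|>d}(P*P^{*(k-1)})(y)}{P^{*k}(0)}=\mathcal O(k^2/d),
\end{equation*}
by a one-big-jump bound as in \eqref{eq.overshoot-tail}. Integrating this $\mathcal O(1/d)$ bound over $d\in[1,Q]$ gives a boundary contribution $\mathcal O(\log Q)$, which is $o(Q)$ as required.
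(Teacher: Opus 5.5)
Your proof is correct, and it takes a different route from the paper's. The paper gives only a sketch. It regards $\mathcal L_Q$ as the compression of the full-line multiplier $e^{-|p|}$ and cites the local trace asymptotics of truncated convolution operators (Sakhnovich, Ch.~4) for $(2Q)^{-1}\operatorname{tr}F(\mathcal L_Q)\to(2\pi)^{-1}\int F(e^{-|p|})\,dp$, placing no restriction on $F$. You instead carry out the classical moment method by hand, in three steps:
\begin{itemize}
\item the identity $P^{*k}(0)=1/(\pi k)$ gives the full-line value of $\operatorname{tr}\mathcal L_Q^k$ exactly;
\item the boundary deficit becomes a bridge-exit probability;
\item the exact trace $\sum_n\mu_n=2Q/\pi$ supplies the uniform control that makes the Weierstrass step work.
\end{itemize}

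Your route gives a self-contained proof and, more importantly, a correct hypothesis. As you note, $\mathcal L_Q$ has infinitely many eigenvalues accumulating at $0$. So for $F(0)\neq0$ both sides of \eqref{eq.szego1} are infinite, and the theorem as printed fails (e.g.\ for $F\equiv1$). Some class such as $F(x)=xG(x)$ is needed, and the paper's statement and sketch overlook this. Nothing downstream is lost: the indicator $\mathbf 1_{(\mu,1]}$, $\mu>0$, used in Corollary~\ref{cor:counting} is sandwiched between continuous functions vanishing near $0$, and those lie in your class. The cited route has the advantages of brevity and of sitting inside the same Szeg\H{o}--Widom machinery the paper invokes for the Fredholm determinant.

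Finally, the uniformity in $x_1$ that you single out as the main obstacle is automatic. By translation invariance, the law of the bridge increments $(x_j-x_1)_j$ does not depend on $x_1$. Tightness of that single probability measure on $\mathbb R^{k-1}$ therefore already bounds the exit probability by $\varepsilon$ at every depth exceeding $R$. Your quantitative one-big-jump estimate is valid but not needed.
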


\begin{proof}[Proof sketch]
This is a classical result for truncated Wiener--Hopf operators. The key
idea: the operator $\mathcal{L}_Q$ is the compression $P_Q\, \mathcal{L}\,
P_Q$ of the full-line operator $\mathcal{L}$ (multiplication by $e^{-|p|}$
in Fourier space) to the interval $[-Q,Q]$.

In the large-$Q$ limit, the operator is locally equivalent  to
$\mathcal{L}$, and for a self-adjoint operator whose symbol takes values in
$[0,1]$, the eigenvalue distribution converges (in the sense
of~\eqref{eq.szego1}) to the push-forward of the Lebesgue measure on
momentum space under the symbol map $p \mapsto e^{-|p|}$. The
normalisation factor $1/(2Q)$ is the reciprocal of the interval length,
reflecting the density of states per unit length.

The formal proof uses the strong operator convergence of
$(2Q)^{-1} \operatorname{tr} F(\mathcal{L}_Q) \to (2\pi)^{-1}
\int F(e^{-|p|})\, dp$, which follows from the local trace asymptotics of
truncated convolution operators (see~\cite{Sakhnovich2015}, Chapter~4).
\end{proof}

\begin{corollary}[Eigenvalue counting function]\label{cor:counting}
  Setting $F = \mathbf{1}_{(\mu,1]}$ in~\eqref{eq.szego1} (and
  approximating by continuous functions), the number of eigenvalues of
  $\mathcal{L}_Q$ exceeding $\mu \in (0,1)$ satisfies
  \begin{equation}\label{eq.counting}
    \#\{n : \mu_n(Q) > \mu\}
    \;\sim\;
    \frac{2Q}{\pi}\, \log\frac{1}{\mu}
    \qquad \text{as } Q \to \infty\,.
  \end{equation}
\end{corollary}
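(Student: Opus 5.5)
We obtain the corollary from Szeg\H{o}'s first limit theorem (Theorem~\ref{thm:szego1}) by a sandwich argument, since the indicator $\mathbf{1}_{(\mu,1]}$ is not continuous. First we compute the target. With $F=\mathbf{1}_{(\mu,1]}$, the right-hand side of~\eqref{eq.szego1} is
\begin{equation*}
\frac{1}{\pi}\int_0^\infty \mathbf{1}\{e^{-p}>\mu\}\,dp=\frac{1}{\pi}\log\frac{1}{\mu}.
\end{equation*}
Multiplying by $2Q$ gives the claimed $\frac{2Q}{\pi}\log(1/\mu)$. Since this limit is strictly positive for every $\mu\in(0,1)$, relative and absolute errors of order $o(Q)$ are the same thing, and ``$\sim$'' makes sense.

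Next we approximate. Fix $\varepsilon>0$ small, with $\mu\pm\varepsilon\in(0,1)$. Let $F_\varepsilon^-$ be continuous, equal to $0$ on $[0,\mu]$, equal to $1$ on $[\mu+\varepsilon,1]$, and linear in between. Let $F_\varepsilon^+$ be continuous, equal to $0$ on $[0,\mu-\varepsilon]$, equal to $1$ on $[\mu,1]$, and linear in between. Then
\begin{equation*}
F_\varepsilon^-\le \mathbf{1}_{(\mu,1]}\le F_\varepsilon^+
\end{equation*}
pointwise on $[0,1]$. All three functions vanish near $0$, so each sum $\sum_n F(\mu_n(Q))$ is finite: the $\mu_n\to0$ and only finitely many exceed $\mu-\varepsilon$, because $\mathcal L_Q$ is compact (Proposition~\ref{prop:basic-spectral}). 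We also use that the eigenvalues lie in $(0,1)$, by the uniform bound (i). Applying Theorem~\ref{thm:szego1} to $F_\varepsilon^\pm$ gives
\begin{equation*}
\frac1\pi\int_0^\infty F_\varepsilon^-(e^{-p})\,dp\;\le\;\liminf_{Q\to\infty}\frac{N_Q(\mu)}{2Q}\;\le\;\limsup_{Q\to\infty}\frac{N_Q(\mu)}{2Q}\;\le\;\frac1\pi\int_0^\infty F_\varepsilon^+(e^{-p})\,dp,
\end{equation*}
where $N_Q(\mu)=\#\{n:\mu_n(Q)>\mu\}$.

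Finally we let $\varepsilon\to0$. The two outer integrals are squeezed between $\pi^{-1}\log(1/(\mu+\varepsilon))$ and $\pi^{-1}\log(1/(\mu-\varepsilon))$. Both converge to $\pi^{-1}\log(1/\mu)$ because $\log$ is continuous, which proves~\eqref{eq.counting}. The key point is that the limiting measure, the push-forward of $dp/\pi$ under $p\mapsto e^{-p}$, has density $1/(\pi u)$ on $(0,1)$, so it puts no mass at the point $\mu$.

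The main obstacle is the admissible class of test functions in Theorem~\ref{thm:szego1}. A nonzero constant $F$ would give a divergent sum, because there are infinitely many eigenvalues. The theorem therefore has to be read for continuous $F$ with $F(0)=0$, or at least $F$ vanishing near $0$, which is what the proof sketch via $\operatorname{tr}F(\mathcal L_Q)$ requires. I would check that the cited local trace asymptotics~\cite{Sakhnovich2015} apply to such $F$. The simplest route is to verify the theorem for polynomials without constant term, where it follows from the trace asymptotics of $\operatorname{tr}\mathcal L_Q^k$, and then extend by Weierstrass approximation on $[0,1]$ within that class. Our $F_\varepsilon^\pm$ belong to it, so the sandwich goes through unchanged.
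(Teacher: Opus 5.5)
Your argument is correct and is essentially the paper's: the paper simply evaluates the right side of \eqref{eq.szego1} at the indicator and multiplies by $2Q$. You additionally carry out the sandwich by $F_\varepsilon^{\pm}$, which the paper only alludes to, and you rightly observe that Theorem~\ref{thm:szego1} must be restricted to $F$ vanishing near $0$, since for $F(0)\neq 0$ both sides diverge.

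One small fix to your side remark: uniform Weierstrass approximation alone does not control a sum over infinitely many eigenvalues. Approximate $F(u)/u$ by polynomials instead, and bound the error by $\varepsilon\sum_n\mu_n(Q)/(2Q)=\varepsilon/\pi$ using the trace identity $\sum_n\mu_n(Q)=2Q/\pi$.
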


\begin{proof}
The right side of~\eqref{eq.szego1} with $F = \mathbf{1}_{(e^{-|p|} > \mu)}
= \mathbf{1}_{(|p| < \log(1/\mu))}$ gives
\begin{equation}
  \frac{1}{\pi}\int_0^{\log(1/\mu)} dp
  \;=\; \frac{1}{\pi}\, \log\frac{1}{\mu}\,,
\end{equation}
and multiplying by $2Q$ yields~\eqref{eq.counting}.
\end{proof}

In particular, the number of near-critical eigenvalues (those with
$\mu_n > 1 - \varepsilon$) scales as
\begin{equation}
  \#\{n : \mu_n > 1 - \varepsilon\}
  \;\sim\;
  \frac{2Q}{\pi}\, \varepsilon
  \qquad (\varepsilon \to 0)\,,
\end{equation}
since $\log(1/(1-\varepsilon)) \sim \varepsilon$ for small $\varepsilon$.
This confirms that $\mathcal O(Q)$ eigenvalues accumulate at $2\pi$ for large $Q$.

The key quantity controlling the solution is the spectral gap
\begin{equation}
  \Delta_0(Q) \;:=\; 2\pi - \lambda_0(Q)\,,
\end{equation}
which measures the distance from the largest eigenvalue to the operator
norm of the full-line operator.

\begin{proposition}[Two-sided bounds on the spectral gap]\label{prop:gap}
  For all $Q \geq \pi/4$,
  \begin{equation}\label{eq.gap-upper}
    \frac{\pi^2\left(1 - e^{-1}\right)}{4\,Q}
    \;\leq\;
    \Delta_0(Q)
    \;\leq\;
    \frac{2\pi\lambda_1}{Q}\,,
    \qquad \lambda_1 = 1.157774\ldots\,,
  \end{equation}
  where $\lambda_1$ is the smallest eigenvalue of the half-Laplacian
  $\mathcal{A} = (-d^2/d\xi^2)^{1/2}$ on $(-1,1)$ with exterior Dirichlet
  condition.
\end{proposition}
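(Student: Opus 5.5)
We prove both bounds from the Rayleigh-quotient characterisation~\eqref{eq.minmax} with $n=0$, written in Fourier space. With the transform convention of Section~\ref{sec.inner}, for $f\in L^2([-Q,Q])$ extended by zero, Plancherel gives $2\pi\|f\|^2=\int|\hat f(p)|^2\,dp$ and $\langle f,\mathcal K_Q f\rangle=\int e^{-|p|}|\hat f(p)|^2\,dp$. Hence
\begin{equation*}
  2\pi\|f\|^2-\langle f,\mathcal K_Q f\rangle=\int_{-\infty}^{\infty}\left(1-e^{-|p|}\right)|\hat f(p)|^2\,dp\,,
\end{equation*}
and therefore
\begin{equation*}
  \Delta_0(Q)=\inf_{f\neq0}\frac{\int(1-e^{-|p|})|\hat f|^2\,dp}{\|f\|^2}\,.
\end{equation*}
Each bound then reduces to an elementary inequality for the symbol $1-e^{-|p|}$.

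\emph{Upper bound.} We insert a trial function. Let $\phi_1$ be the first Dirichlet eigenfunction of $\mathcal A$ on $(-1,1)$, which lies in the form domain $\tilde H^{1/2}(-1,1)$, and set $f(\xi)=\phi_1(\xi/Q)$. The elementary inequality $1-e^{-|p|}\le|p|$ gives
\begin{equation*}
  \Delta_0(Q)\le\frac{\int|p|\,|\hat f|^2\,dp}{\|f\|^2}=\frac{2\pi\langle f,\mathcal A f\rangle}{\|f\|^2}\,.
\end{equation*}
Since $\mathcal A$ is homogeneous of degree one under dilation, the last quotient equals $2\pi\lambda_1/Q$. This bound is also sharp, because the ratio $(1-e^{-|p|})/|p|$ tends to $1$ on the momenta $|p|=\mathcal O(1/Q)$ where $\hat f$ concentrates.

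\emph{Lower bound.} Here an uncertainty-principle argument is needed, and this is the main step.

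First, by concavity of $x\mapsto1-e^{-x}$ we have $1-e^{-x}\ge(1-e^{-1})\min(x,1)$ for $x\ge0$.

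Second, because $f$ is supported on $[-Q,Q]$, its transform is uniformly bounded:
\begin{equation*}
  |\hat f(p)|\le\|f\|_1\le\sqrt{2Q}\,\|f\|\,.
\end{equation*}
Consequently the Fourier mass in the window $|p|<\delta$ is at most $2\delta\cdot2Q\|f\|^2$.

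We choose $\delta=\pi/(4Q)$. The low-momentum mass is then at most $\pi\|f\|^2$, which is half of the total $2\pi\|f\|^2$. So at least $\pi\|f\|^2$ of the mass lies in $|p|\ge\delta$.

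The hypothesis $Q\ge\pi/4$ is exactly the condition $\delta\le1$. Under it, $\min(|p|,1)\ge\delta$ on that region. Therefore
\begin{equation*}
  \int\left(1-e^{-|p|}\right)|\hat f|^2\,dp\ge(1-e^{-1})\,\delta\,\pi\|f\|^2=\frac{\pi^2(1-e^{-1})}{4Q}\|f\|^2\,.
\end{equation*}
Taking the infimum over $f$ yields the left inequality in~\eqref{eq.gap-upper}.

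The only point requiring care is the bookkeeping of $2\pi$ factors in Plancherel. If those factors are handled differently, the constants change by powers of $2\pi$. The structure of the argument (a trial function for the upper bound, an $L^\infty$ bound on $\hat f$ for the lower bound) is robust to this.
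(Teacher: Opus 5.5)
Your proof is correct and follows the paper's own argument essentially line by line: the upper bound comes from $1-e^{-|p|}\le|p|$ and the dilation homogeneity of $\mathcal{A}$ (you plug in the rescaled eigenfunction $\phi_1(\xi/Q)$, the paper states the same thing as a form inequality plus min-max), and the lower bound comes from $1-e^{-|p|}\ge(1-e^{-1})\min(|p|,1)$, the Cauchy--Schwarz bound $|\hat f(p)|^2\le 2Q\|f\|^2$, and the cutoff $\pi/(4Q)$, which gives the same constant. Your Plancherel normalisation agrees with the paper's, so the closing caveat about $2\pi$ factors is unnecessary; one small correction is that for $|p|\ge 1$ the minorant follows from monotonicity of $1-e^{-x}$, not from concavity.
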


\begin{proof}
Both bounds are comparisons of symbols under the min-max
principle~\eqref{eq.minmax}, applied to
$\langle u, (2\pi I - \mathcal{K}_Q)u\rangle
= \int \sigma(p)\,|\hat u(p)|^2\,dp$ for $u$ supported in $[-Q,Q]$ with
$\|u\| = 1$, where $\sigma(p) = 1-e^{-|p|}$ and the Fourier normalisation is
that of~\eqref{eq.K-fourier}.

\emph{Upper bound.}  $1-e^{-x} \leq x$ for $x \geq 0$, so
$\sigma(p) \leq |p|$ pointwise and $2\pi I - \mathcal{K}_Q \leq 2\pi\mathcal{A}$
as quadratic forms on functions supported in $[-Q,Q]$.  The eigenvalues of
$\mathcal{A}$ on $(-Q,Q)$ are $\lambda_n/Q$, by the homogeneity
$\lambda_n(kD) = k^{-1}\lambda_n(D)$ of a symbol of degree one, so
$\Delta_0(Q) \leq 2\pi\lambda_1/Q$.

\emph{Lower bound.}  Since $(1-e^{-x})/x$ decreases,
$\sigma(p) \geq (1-e^{-1})\min(|p|,1)$ for all $p$.  For $u$ supported in
$[-Q,Q]$ with $\|u\|=1$, Cauchy--Schwarz gives $|\hat u(p)|^2 \leq 2Q$, hence
for any $\epsilon \leq 1$
\begin{equation}
  \frac{1}{2\pi}\int \min(|p|,1)\,|\hat u|^2\,dp
  \;\geq\; \epsilon\left[1 - \frac{1}{2\pi}\int_{|p|<\epsilon}|\hat u|^2 dp\right]
  \;\geq\; \epsilon\left[1 - \frac{2\epsilon Q}{\pi}\right]\,.
\end{equation}
Choosing $\epsilon = \pi/(4Q)$, which is admissible for $Q \geq \pi/4$, the
right-hand side is $\pi/(8Q)$, and multiplying by $2\pi(1-e^{-1})$ gives the
stated bound.
\end{proof}

The bound is sharp, and the limit it saturates can be proved for every fixed
index by the same comparison, made monotone.

\begin{proposition}[Fixed-index spectral deficits]\label{prop:deficits}
  Let $\lambda_n(\mathcal{K}_Q)$ be ordered decreasingly from $n = 0$, and let
  $0 < \lambda_1 < \lambda_2 < \cdots$ be the Dirichlet eigenvalues of
  $\mathcal{A}$ on $(-1,1)$.  Then for every fixed $n$ the asymptotic law
  \eqref{eq.gap-identified} holds.
\end{proposition}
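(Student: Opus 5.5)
The plan is to prove the two inequalities $\limsup_{Q\to\infty} Q\,\Delta_n(Q)\le 2\pi\lambda_{n+1}$ and $\liminf_{Q\to\infty} Q\,\Delta_n(Q)\ge 2\pi\lambda_{n+1}$ separately, both from min-max~\eqref{eq.minmax}. Rescale $\xi=Qt$, and let $U_Q$ be the unitary map $L^2(-1,1)\to L^2(-Q,Q)$, $(U_Qv)(\xi)=Q^{-1/2}v(\xi/Q)$. Then $Q\,U_Q^*(2\pi I-\mathcal K_Q)U_Q$ has quadratic form
\[
q_Q[v]=2\pi\int Q\bigl(1-e^{-|k|/Q}\bigr)|\hat v(k)|^2\,\frac{dk}{2\pi},
\]
for $v$ extended by zero. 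The multiplier $m_Q(k)=Q(1-e^{-|k|/Q})$ increases monotonically to $|k|$ as $Q\to\infty$, since $x\mapsto(1-e^{-x})/x$ is decreasing. So $q_Q$ is a monotone increasing family of closed forms converging pointwise to the form $2\pi\,\langle v,\mathcal A v\rangle$ of the Dirichlet half-Laplacian on $(-1,1)$. This is the "same comparison, made monotone" of Proposition~\ref{prop:gap}. With this normalisation, the $(n+1)$-st smallest eigenvalue of $q_Q$ is $Q\Delta_n(Q)$.

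\emph{Upper bound.} The pointwise bound $m_Q(k)\le|k|$ gives $q_Q\le 2\pi\mathcal A$ as forms on the form domain $\tilde H^{1/2}(-1,1)$. Min-max then gives $Q\Delta_n(Q)\le 2\pi\lambda_{n+1}$ for every $Q$, extending the case $n=0$ already used in Proposition~\ref{prop:gap}.

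\emph{Lower bound.} Here I will invoke the monotone convergence theorem for increasing sequences of closed semibounded forms (Kato; Reed--Simon~\cite{RS4}). Since $q_Q\uparrow q_\infty=2\pi\langle\cdot,\mathcal A\,\cdot\rangle$ and $q_\infty$ is densely defined and closed, the associated operators converge in strong resolvent sense. This alone does not give eigenvalue convergence, so I add compactness.

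The key estimate is a uniform lower bound on $q_Q$ that controls a compactly embedded norm. From $1-e^{-x}\ge(1-e^{-1})\min(x,1)$ we get
\[
m_Q(k)\ge(1-e^{-1})\min(|k|,Q)\ge(1-e^{-1})\min(|k|,Q_0)\quad\text{for }Q\ge Q_0 .
\]
This bound is dominated from below by a fixed unbounded multiplier. Its form domain restricted to functions supported in $[-1,1]$ embeds compactly in $L^2(-1,1)$, because high frequencies with $|k|$ large are suppressed while the Fourier transforms of functions in a bounded $L^2$ set supported in $[-1,1]$ are equicontinuous.

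With this, the argument is the standard one. Suppose $\liminf_{Q\to\infty} Q\Delta_n(Q)<2\pi\lambda_{n+1}$ along a subsequence. Take an orthonormal family $v^{(0)}_Q,\dots,v^{(n)}_Q$ of the first $n+1$ eigenfunctions of $q_Q$, which have bounded form values. By compactness, pass to an $L^2$-convergent subsequence, whose limit is still orthonormal. Lower semicontinuity of the monotone family, i.e.\ $q_\infty[v]\le\liminf q_Q[v_Q]$ whenever $v_Q\to v$ in $L^2$, then follows from Fatou's lemma in Fourier space. Concretely, $\hat v_Q\to\hat v$ pointwise and $m_Q(k)\to|k|$, so $\int|k||\hat v|^2\le\liminf\int m_Q|\hat v_Q|^2$. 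Applying this to the whole $(n+1)$-dimensional span via min-max contradicts the $n+1$ lowest Dirichlet eigenvalues, giving $\liminf Q\Delta_n(Q)\ge2\pi\lambda_{n+1}$. Combining the two bounds yields~\eqref{eq.gap-identified}.

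The main obstacle is this $\Gamma$-liminf step. The multiplier $m_Q$ saturates at height $Q$, so for fixed $Q$ the form does not control $\tilde H^{1/2}$ norms. Eigenfunctions could therefore, a priori, leak mass to frequencies $|k|\gtrsim Q$, where $m_Q\approx Q$. My first approach is to show this leakage is harmless: on that frequency range $m_Q\ge Q(1-e^{-1})$, so boundedness of $q_Q[v_Q]$ forces the mass there to be $O(1/Q)\to0$. This supplies the tightness needed for the compactness argument. The rest is routine spectral theory. Note also that simplicity of eigenvalues, from the parity/oscillation remark, is not needed.
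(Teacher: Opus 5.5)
Your proposal is correct and follows the paper's route. It uses the same unitary rescaling to $(-1,1)$, the same monotone increase of $Q(1-e^{-|k|/Q})$ to $|k|$ giving form convergence to the Dirichlet form of $\mathcal A$, and then compactness plus min-max; your pointwise upper bound and Fatou-based $\liminf$ argument spell out what the paper compresses into one sentence.

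One slip to fix: $(1-e^{-1})\min(|k|,Q_0)$ is a bounded multiplier, so no $Q$-independent form yields compactness. Compactness comes instead from the $Q$-dependent tail bound $\int_{|k|>R}|\hat v_Q|^2\,dk\le M/[(1-e^{-1})\min(R,Q)]$ of your last paragraph, combined with Arzel\`a--Ascoli for the $\hat v_Q$. This uniform tightness is really needed, because monotone form convergence to a limit with compact resolvent does not by itself exclude spurious low eigenvalues, and the paper's bare appeal to the compact embedding of the limiting form domain leaves this point implicit.
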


\begin{proof}
Rescale $(-Q,Q)$ to $(-1,1)$ unitarily and set $\alpha = 1/Q$, so that
$\mathcal{L}_Q$ becomes the operator $\mathcal{P}_\alpha$ with kernel
$\alpha/\{\pi[\alpha^2+(t-s)^2]\}$.  Extending $g$ by zero and applying
Plancherel,
\begin{equation}\label{eq.monotone-form}
  \frac{\langle g, (I-\mathcal{P}_\alpha)g\rangle}{\alpha}
  = \frac{1}{2\pi}\int_{\mathbb{R}}
    \frac{1-e^{-\alpha|p|}}{\alpha}\,|\hat g(p)|^2\,dp\,.
\end{equation}
Since $(1-e^{-u})/u$ decreases, the integrand increases monotonically to
$|p|\,|\hat g(p)|^2$ as $\alpha \downarrow 0$.  The forms therefore converge
monotonically to the Dirichlet form of $\mathcal{A}$ on $(-1,1)$, which gives
strong resolvent convergence; the limiting form domain embeds compactly in
$L^2(-1,1)$, so the min-max principle~\cite{ReedSimonIV} transfers the
convergence to each fixed eigenvalue, which proves~\eqref{eq.gap-identified}.
\end{proof}

Monotonicity also explains a feature of the numerics: the forms increase to the
limit, so $Q\Delta_n(Q)$ approaches $2\pi\lambda_{n+1}$ from below, as
Figure~\ref{fig.spectral-gap}(b) shows.

\begin{remark}\label{rem:gap-numerics}
The numerical data are consistent with the remainder in
\eqref{eq.gap-identified} being $\mathcal O(Q^{-2})$.
The eigenvalues $\lambda_n$ satisfy the two-term Weyl law
$\lambda_n = n\pi/2 - \pi/8 + \mathcal O(1/n)$~\cite{Kwasnicki2012}; the first
three are listed in Table~\ref{tab:gap-extrapolation}, which checks~\eqref{eq.gap-identified} against the
computed spectrum; a single Richardson step in $1/Q$ recovers the first three
$\lambda_n$ to five digits, so the gap closes as $c_0/Q$ with no logarithmic
modulation.  A fit of the form
$c + c'\log Q$ over a restricted window in $Q$ is hard to distinguish from this
by eye, but it is not the correct asymptotic form.
Proposition~\ref{prop:deficits} is a statement about each fixed index only.  It
does not say that the leading eigenfunction dominates the
resolvent---Section~\ref{sec.eigenvalue-summary} shows it does not---because the
logarithm in $\tilde\rho(0;Q)$ comes from indices growing with $Q$, which the
fixed-index limit does not reach.
\end{remark}

\begin{table}[htbp]
\centering
\begin{tabular}{r c c c}
\hline\hline
$Q$ & $Q\Delta_0/2\pi$ & $Q\Delta_1/2\pi$ & $Q\Delta_2/2\pi$ \\
\hline
25   & 1.094174 & 2.538642 & 3.870534 \\
50   & 1.122828 & 2.637699 & 4.076046 \\
100  & 1.138842 & 2.692331 & 4.189334 \\
200  & 1.147615 & 2.721778 & 4.250034 \\
400  & 1.152359 & 2.737437 & 4.282047 \\
\hline
Richardson & 1.157103 & 2.753097 & 4.314060 \\
$\lambda_{n+1}$ & 1.157774 & 2.754755 & 4.316801 \\
\hline\hline
\end{tabular}
\caption{Compensated gaps $Q\Delta_n(Q)/2\pi$ against the Dirichlet eigenvalues
$\lambda_{n+1}$ of the half-Laplacian on $(-1,1)$.  The last computed row is
extrapolated by one Richardson step in $1/Q$, using the pair $Q = 200, 400$.
The $\lambda_n$ were obtained from the Gegenbauer relation
$\mathcal{A}\left[(1-x^2)^{1/2}U_n(x)\right] = (n+1)U_n(x)$ for Chebyshev
polynomials of the second kind, which turns the eigenvalue problem into a
generalised matrix problem in that basis.}
\label{tab:gap-extrapolation}
\end{table}

%---------------------------------------------------------------
\subsection{Resolvent amplification and the origin of logarithmic growth}
\label{subsec.resolvent}
%---------------------------------------------------------------

The connection between the spectral gap and the divergence of the solution
is made precise by the eigenfunction expansion. Writing the rescaled integral
equation as $\rho = (2\pi I - \mathcal{K}_Q)^{-1} g$ with driving term
$g(\xi) = 2/(1+\xi^2)$, the resolvent expands as
\begin{equation}\label{eq.resolvent-expand}
  \tilde\rho(\xi;Q)
  \;=\;
  \sum_{n=0}^{\infty}
  \frac{\langle \phi_n, g \rangle}{2\pi - \lambda_n(Q)}\;
  \phi_n(\xi;Q)\,.
\end{equation}

It is tempting to attribute the logarithmic growth of $\tilde\rho(0;Q)$ to the
leading mode alone.  That attribution fails: the logarithm is produced by an
accumulation of near-critical modes rather than by a single eigenvalue.

Consider first the $n=0$ term of~\eqref{eq.resolvent-expand} at $\xi=0$.  As
$Q\to\infty$ the leading eigenfunction approaches a normalised constant,
$\phi_0(\xi) \to (2Q)^{-1/2}$, so
\begin{equation}\label{eq.mode0-projection}
  \langle \phi_0, g\rangle
  \approx
  \frac{1}{\sqrt{2Q}} \int_{-Q}^{Q} \frac{2\,d\xi}{1+\xi^2}
  = \frac{1}{\sqrt{2Q}}\left(2\pi - \frac{4}{Q}\right)\,,
  \qquad \phi_0(0) \approx \frac{1}{\sqrt{2Q}}\,,
\end{equation}
and the term equals $2\pi/[2Q\,\Delta_0(Q)]$.  With
$\Delta_0 \simeq c_0/Q$ from~\eqref{eq.gap-identified} this is
$\pi/c_0 \approx 0.43$: an $\mathcal O(1)$ quantity, not a logarithm.  The $1/Q$
of the gap is cancelled by the normalisation of the eigenfunction, so the
leading mode contributes a constant.  At $Q = 100$, where
$\tilde\rho(0;Q) = 1.874$, the leading mode supplies $0.44$ of it.

The tail cannot be discarded either.  The natural bound on it uses
$\sum_n \phi_n(0)^2$, which the Mercer expansion~\eqref{eq.mercer} does not
control: setting $\xi=\eta=0$ gives $\sum_n \lambda_n \phi_n(0)^2 = K(0) = 2$,
and since $\lambda_n \to 0$ the weights degrade and $\sum_n \phi_n(0)^2$
diverges.  Nor is $\Delta_1$ bounded away from zero; numerically
$\Delta_1/\Delta_0$ saturates near $2.38$, so the subleading gaps close at the
same rate as the leading one.

What produces the logarithm is the accumulation of $\mathcal O(Q)$ near-critical
modes.  For the low-lying spectrum the quantisation of the truncated
convolution operator gives $\mu_n \approx e^{-p_n}$ with $p_n \approx \pi n/(2Q)$,
so $1-\mu_n \approx \pi n /(2Q)$, while $\phi_n(0)^2 \approx 1/Q$ for even $n$ and
vanishes for odd $n$ by parity.  Then
\begin{equation}\label{eq.harmonic-mechanism}
  \tilde\rho(0;Q)
  = \sum_{n\geq0} \frac{\mu_n\,\phi_n(0)^2}{1-\mu_n}
  \;\approx\;
  \sum_{n\ \mathrm{even}}^{n \lesssim Q} \frac{2}{\pi n}
  \;=\;
  \frac{1}{\pi}\log Q + \mathcal O(1)\,,
\end{equation}
a harmonic sum cut off by the finite domain.  This spectral argument explains
the logarithmic coefficient but does not determine the finite part; the latter
is obtained from the exit-law calculation in Subsection~\ref{sec.constant}.

%---------------------------------------------------------------
\subsection{Fredholm determinant and eigenvalue product formula}
\label{sec.fredholm-app}
%---------------------------------------------------------------

The spectral data of $\mathcal{L}_Q$ is encoded globally in the Fredholm
determinant
\begin{equation}\label{eq.fredholm-def}
  \mathcal{F}(Q)
  \;:=\;
  \det_F(I - \mathcal{L}_Q)
  \;=\;
  \prod_{n=0}^{\infty} (1 - \mu_n(Q))
  \;=\;
  \prod_{n=0}^{\infty}
  \frac{\Delta_n(Q)}{2\pi}\,,
\end{equation}
where $\mu_n = \lambda_n/(2\pi)$ and $\Delta_n = 2\pi(1-\mu_n)$.

The product converges absolutely because
$\sum (1-\mu_n) = \sum \Delta_n/(2\pi)$, which converges by the trace
inequality $\sum \Delta_n \leq 4\pi Q$ (from the trace identity
$\sum \lambda_n = 4Q$ and $\lambda_n < 2\pi$, giving
$\sum(2\pi - \lambda_n) = 2\pi \cdot N_{\mathrm{eff}} - 4Q$ for an
appropriate effective count---see below).

The Szeg\H{o}--Widom theory~\cite{bottcher2006analysis} provides the asymptotics
of $\mathcal{F}(Q)$, taking into account the zero of
$\sigma(p)$ at $p = 0$:

\begin{theorem}[Fredholm determinant asymptotics]\label{thm:fredholm}
  \begin{equation}\label{eq.fredholm-asymp}
    \log \mathcal{F}(Q)
    \;=\;
    -\frac{\pi Q}{3}
    \;+\;
    \alpha_{\mathrm{FH}}\, \log Q
    \;+\;
    \beta_{\mathrm{FH}}
    \;+\;
    \mathcal O \left(Q^{-1}\right)\,,
  \end{equation}
  where $-\pi/3$ is the Szeg\H{o} coefficient and $\alpha_{\mathrm{FH}} = 1/4$ is the Fisher--Hartwig exponent
  arising from the simple zero of $\sigma$ at $p = 0$.
\end{theorem}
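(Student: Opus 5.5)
The plan is to differentiate in $Q$ rather than attack the product~\eqref{eq.fredholm-def} directly, since eigenvalue-by-eigenvalue control (Proposition~\ref{prop:deficits}) does not reach indices growing with $Q$. Write $\mathcal R_Q=\mathcal L_Q(I-\mathcal L_Q)^{-1}$ with continuous kernel $R_Q(\xi,\eta)$. The variation of a truncated convolution operator with respect to the endpoints gives the Jacobi-type formula
\begin{equation*}
  \frac{d}{dQ}\log\mathcal F(Q) = -\bigl[R_Q(Q,Q)+R_Q(-Q,-Q)\bigr] = -2R_Q(Q,Q),
\end{equation*}
where the last step uses parity. This reduces Theorem~\ref{thm:fredholm} to proving
\begin{equation*}
  R_Q(Q,Q)=\frac{\pi}{6}-\frac{1}{8Q}+\mathcal O(Q^{-2}).
\end{equation*}
Integrating then gives $-\pi Q/3+\tfrac14\log Q+\beta_{\mathrm{FH}}+\mathcal O(Q^{-1})$, with $\beta_{\mathrm{FH}}$ appearing as a constant of integration. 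This step determines $\beta_{\mathrm{FH}}$ only as the limit $\lim_{Q\to\infty}\bigl[\log\mathcal F(Q)+\pi Q/3-\tfrac14\log Q\bigr]$, whose existence follows from the integrable remainder. As a check, the leading coefficient must be the Szeg\H{o} value $\frac{Q}{\pi}\int_{\mathbb R}\log(1-e^{-|p|})\,dp=-\pi Q/3$.

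The diagonal $R_Q(Q,Q)$ is the Green density of the killed Cauchy walk of Section~\ref{sec.matching-C}, evaluated from the edge to the edge. I will use the reflected coordinates $Y_n=Q-S_n$ and Hunt's decomposition~\eqref{eq.hunt} at the upper barrier, now with source and target both at $s=0$:
\begin{equation*}
  R_Q(Q,Q)=g_+(0,0)-\mathbb E_0\bigl[g_+(Y_{\sigma^+_{2Q}},0);\,\sigma^+_{2Q}<\tau^-_0\bigr].
\end{equation*}
For the first term, the half-line factorisation~\eqref{eq.halfline-factorisation} gives $g_+(0,0)=u(0^+)$, the ladder renewal density at the origin. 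By~\eqref{eq.V-small-s} this equals $V'(0)=\pi/6$, which is exactly the Szeg\H{o} rate, so the leading term requires no further work.

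The second term carries the Fisher--Hartwig correction.
\begin{itemize}
  \item The event $\{\sigma^+_{2Q}<\tau^-_0\}$ for a walk started at the barrier has probability of order $V(0)\,Q^{-1/2}$. This comes from the same ratio limit $\sqrt{x}\,g_+(x,y)\to V(y)/\sqrt\pi$, by duality.
  \item On that event, the overshoot position $Y_{\sigma}/Q$ converges to a law on $(2,\infty)$ that is the entrance-law analogue of~\eqref{eq.upper-exit}.
  \item The weight $g_+(Y_\sigma,0)\sim V(0)(\pi Y_\sigma)^{-1/2}$ contributes a further $Q^{-1/2}$.
\end{itemize}
The product is $c/Q$. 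The task is to show $c=1/8$ by an explicit integral of the type in Appendix~\ref{app.exit-moment}, analogous to~\eqref{eq.overshoot-integral}. As an independent check on this constant, I would note that the symbol $\sigma(p)=|p|\,G(p)$ has a pure Fisher--Hartwig zero $|p|^{2\alpha}$ with $\alpha=1/2$, times a regular factor normalised by~\eqref{eq.G-norm}. The classical exponent is then $\alpha^2=1/4$, which is consistent with $-2\cdot(-1/8)$.

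The main obstacle is the remainder. Interchanging limit and expectation in the correction term only yields $o(Q^{-1})$, which after integration gives $o(\log Q)$ and is not enough. I need the second-order expansion of $g_+$ to reach $\mathcal O(Q^{-2})$, or at least an integrable $\mathcal O(Q^{-1-\delta})$.
\begin{itemize}
  \item I would first try the two-term expansion~\eqref{eq.V-expansion} together with the corresponding expansion of $u$, inserted into~\eqref{eq.halfline-factorisation}.
  \item I would then use the uniform half-line Green bound of~\cite{DenisovWachtel2024} to control the tails of the overshoot.
\end{itemize}
If that fails, the fallback is the Wiener--Hopf route. I would compute $R_Q(Q,Q)$ from the explicit factors~\eqref{eq.Kplus}--\eqref{eq.Kminus} via the Akhiezer--Kac formula for truncated Wiener--Hopf operators with a degenerate symbol. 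The poles at $\pm2\pi i$ then bound the error by $e^{-2\pi Q}$ beyond the algebraic terms.
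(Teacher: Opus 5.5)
Your route is different from the paper's. The paper only applies Szeg\H{o}'s first limit theorem to get the linear term and then quotes the Fisher--Hartwig exponent $\alpha^2=1/4$ for a zero $|p|^{2\alpha}$ with $\alpha=1/2$; it never addresses the remainder. Your reduction via $\frac{d}{dQ}\log\mathcal F=-2R_Q(Q,Q)$ and Hunt's decomposition at the edge is sound, and both leading constants come out right:
\begin{itemize}
\item $g_+(0,0)=u(0^+)=\pi/6$.
\item The Fisher--Hartwig constant is exactly $1/8$. With $\mathbb P_0(\sigma^+_{2Q}<\tau^-_0)\simeq(2\pi Q)^{-1/2}$ and the conditional limit law $z^{-3/2}(z-2)^{-1/2}\,dz$ of $Y_\sigma/Q$ on $(2,\infty)$, the correction is $\frac{1}{\pi\sqrt2\,Q}\int_2^\infty z^{-2}(z-2)^{-1/2}\,dz=\frac{1}{8Q}$.
\end{itemize}
So the exponent $1/4$ follows from the walk itself, which is more than the paper's sketch provides.

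\textbf{The gap is the remainder, and it is not a technicality.} Your target $R_Q(Q,Q)=\pi/6-1/(8Q)+O(Q^{-2})$ is false. The $\log s$ in \eqref{eq.V-expansion} gives
\[
u(x)=(\pi x)^{-1/2}\Bigl[1+\frac{1-\log(8\pi x)}{4\pi x}+\cdots\Bigr],
\]
so the expansion you propose to insert produces a relative $\log Q/Q$ correction, and that correction does not cancel.

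You can confirm this with Krein's identities on $[0,L]$, $L=2Q$. Write $\Gamma_L$ for the resolvent kernel and $f_L=(I-\mathcal L)^{-1}1$ for the Love function. Then
\[
\partial_L\Gamma_L(0,0)=\Gamma_L(0,L)^2,\qquad \Gamma_L(0,L)=\partial_L\log f_L(0),\qquad f_L(0)^2=\partial_L\int_0^L f_L .
\]
Insert Kirchhoff's expansion $\int_{-Q}^{Q}f=\frac{\pi Q^2}{2}+\frac{Q}{2}\bigl(\log(16\pi Q)-1\bigr)+O(\log^2Q)$, which is \eqref{eq.hutson} integrated. This gives $\Gamma_L(0,L)=\frac{1}{2L}-\frac{\log(8\pi L)-1}{2\pi L^2}+\cdots$, and hence
\[
R_Q(Q,Q)=\frac{\pi}{6}-\frac{1}{8Q}+\frac{\log Q}{16\pi Q^2}+O(Q^{-2}),\qquad
\log\mathcal F(Q)=-\frac{\pi Q}{3}+\frac14\log Q+\beta_{\mathrm{FH}}+\frac{\log Q}{8\pi Q}+O(Q^{-1}).
\]
Incidentally, this is also a much shorter proof of the $\tfrac14$: $f_L(0)\sim\sqrt{\pi L/4}$ gives $\Gamma_L(0,L)\sim 1/(2L)$.

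Consequences for your plan:
\begin{itemize}
\item Your first remainder strategy cannot succeed, since the $O(Q^{-2})$ it aims for is false.
\item Your fallback of an integrable remainder delivers only $O(\log Q/Q)$ after integration.
\item The Akhiezer--Kac fallback does not reduce to powers plus $e^{-2\pi Q}$. The factors $G_\pm$ have a logarithmic branch point at $z=0$: $\log G_+=w\log w+(\gamma_E+\log 2\pi)w+\cdots$ with $w=-iz/2\pi$. This branch point is precisely the source of the $\log Q/Q$ term.
\end{itemize}

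The $O(Q^{-1})$ in the statement is not attainable because it is not correct; it should read $O(\log Q/Q)$, and that is what your method can actually prove. The paper's sketch does not justify $O(Q^{-1})$ either. The regular factor $G(p)=1-|p|/2+\cdots$ is not smooth at the Fisher--Hartwig point, so quoting Fisher--Hartwig theory does not by itself produce a clean $1/Q$ remainder.
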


\begin{proof}[Proof sketch]
The leading term follows from Szeg\H{o}'s first theorem applied to the
logarithm:
\begin{equation}
  \frac{1}{2Q}\, \log\mathcal{F}(Q)
  \;\to\;
  \frac{1}{2\pi}
  \int_{-\infty}^{\infty}
    \log\sigma(p)\, dp
  \;=\;
  \frac{1}{\pi}
  \int_0^{\infty}
    \log(1 - e^{-p})\, dp
  \;=\;
  -\frac{\pi}{6}\,.
\end{equation}
Multiplying by $2Q$ provides the leading term $-\pi Q/3$.

The logarithmic correction $\alpha_{\mathrm{FH}} \log Q$ arises because
$\sigma(0) = 0$: the standard strong Szeg\H{o} theorem requires
$\log\sigma \in L^1$, which fails at $p = 0$ where $\sigma(p) \sim |p|$.
The Fisher--Hartwig theory~\cite{bottcher2006analysis} handles such zeros systematically.
For a simple zero $\sigma(p) \sim c|p|$ (here $c = 1$), the Fisher--Hartwig
exponent is $\alpha_{\mathrm{FH}} = 1/4$, as computed from the general
Fisher--Hartwig formula for continuous analogues of Toeplitz
determinants.
\end{proof}

\begin{remark}
The exponential corrections to the Fredholm determinant,
\begin{equation}
  \log \mathcal{F}(Q)
  \;=\;
  -\frac{\pi Q}{3}
  + \frac{1}{4}\log Q
  + \beta_{\mathrm{FH}}
  + \sum_{m=1}^{\infty} d_m\, e^{-2\pi m Q}
  + \mathcal O(Q^{-1})\,,
\end{equation}
are controlled by the complex zeros of $\sigma(p) = 1-e^{-p}$ at
$p_n = 2\pi i n$ ($n \in \mathbb{Z}\setminus\{0\}$), located at
distance $2\pi$ from the real axis, and so carry the same
rate~\eqref{eq.A-value} as the resolvent corrections of
Section~\ref{sec.edge}.
\end{remark}

%---------------------------------------------------------------
\subsection{Numerical computation of eigenvalues}\label{subsec.num-eig}
%---------------------------------------------------------------

We discretize the eigenvalue problem $\mathcal{K}_Q\, \phi = \lambda\, \phi$
by a composite Gauss--Legendre rule on $[-Q,Q]$: the interval is split into
panels of length $\mathcal O(1)$, so that the kernel is resolved on its own
scale, with dyadic refinement towards $\xi = 0$ and $\xi = \pm Q$ to capture the
cusp of the driving term and the square-root behaviour at the band edges.  A
single global Gauss--Legendre rule is adequate at small $Q$ but degrades once
$2Q$ greatly exceeds the number of nodes per unit length.  The continuous
eigenvalue equation becomes the matrix eigenvalue problem
\begin{equation}
  \mathbf{K}\, \boldsymbol{\phi} = \lambda\, \boldsymbol{\phi}\,,
  \qquad
  K_{ij} = K(\xi_i - \xi_j)\, \omega_j = \frac{2\omega_j}{1+(\xi_i - \xi_j)^2}\,,
\end{equation}
where $\{\xi_j, \omega_j\}_{j=1}^N$ are the quadrature nodes and weights.
Note that $\mathbf{K}$ is not symmetric due to the weights, but the
symmetrised matrix $\tilde{K}_{ij} = \omega_i^{1/2}\, K(\xi_i - \xi_j)\,
\omega_j^{1/2}$ is real symmetric and has the same eigenvalues.
Table~\ref{tab:eigenvalues} lists the four largest eigenvalues computed.

\begin{table}[htbp]
\centering
\begin{tabular}{r c c c c c c}
\hline\hline
$Q$ & $\lambda_0$ & $\lambda_1$ & $\lambda_2$ & $\lambda_3$ & $\Delta_0 = 2\pi - \lambda_0$ & $\Delta_1/\Delta_0$ \\
\hline
5   & 5.1177 & 3.8112 & 2.8486 & 2.1180 & $1.165\times 10^{0}$  & 2.121 \\
10  & 5.6409 & 4.8450 & 4.1716 & 3.5843 & $6.423\times 10^{-1}$ & 2.239 \\
20  & 5.9436 & 5.5001 & 5.0970 & 4.7193 & $3.396\times 10^{-1}$ & 2.306 \\
50  & 6.1421 & 5.9517 & 5.7710 & 5.5940 & $1.411\times 10^{-1}$ & 2.349 \\
100 & 6.2116 & 6.1140 & 6.0200 & 5.9265 & $7.156\times 10^{-2}$ & 2.364 \\
200 & 6.2471 & 6.1977 & 6.1497 & 6.1016 & $3.605\times 10^{-2}$ & 2.372 \\
\hline\hline
\end{tabular}
\caption{Eigenvalues $\lambda_n(Q)$ of $\mathcal{K}_Q$, computed with the
composite Gauss--Legendre rule described above and converged to the digits
shown; the values are stable under refinement of the panel structure and under
doubling the number of nodes.  All eigenvalues lie below $2\pi$ and approach it
from below as $Q \to \infty$.  The last column shows $\Delta_1/\Delta_0$ approaching
$\lambda_2/\lambda_1 = 2.3794$, so the leading eigenvalue is not isolated from
the rest of the spectrum in the limit.}
\label{tab:eigenvalues}
\end{table}

%==========================================================================
\section{Logarithmic moment of the Cauchy exit law}
\label{app.exit-moment}

We evaluate~\eqref{eq.log-moment}.  By symmetry take $0 \leq t < 1$ and set
$\alpha = \sqrt{1-t^2}$.  Combining the two halves $y>1$ and $y<-1$
of~\eqref{eq.poisson-kernel} and substituting $y = \sec\theta$,
\begin{equation}
  \int_{|y|>1}\log|y|\,\omega_t(y)\,dy
  = \frac{2\alpha}{\pi}\int_0^{\pi/2}
    \frac{\log(\sec\theta)}{\sin^2\theta + \alpha^2\cos^2\theta}\,d\theta
  = \frac{\alpha}{\pi}\int_0^{\infty}
    \frac{\log(1+u^2)}{u^2+\alpha^2}\,du\,,
\end{equation}
the last step by $u = \tan\theta$.  The parameter integral
\begin{equation}\label{eq.param-integral}
  \int_0^{\infty}\frac{\log(1+u^2)}{u^2+\alpha^2}\,du
  = \frac{\pi}{\alpha}\log(1+\alpha)\,,
  \qquad \alpha>0\,,
\end{equation}
follows by differentiating
$I(\beta) = \int_0^\infty \log(\beta^2+u^2)(u^2+\alpha^2)^{-1}du$ with respect
to $\beta$, giving $I'(\beta) = \pi/[\alpha(\alpha+\beta)]$, and integrating
from $\beta = 0$, where $I(0) = (\pi/\alpha)\log\alpha$.  Hence
$\mathbb{E}_t\log|X_\tau| = \log(1+\alpha)$, and at $t=0$,
$\mathbb{E}_0\log|X_\tau| = \log 2$.

A check on the normalisation: the same computation with $\log|y|$ replaced by
$1$ gives $\int_{|y|>1}\omega_t(y)\,dy = 1$, as it must.

The same exit law supplies the overshoot integral~\eqref{eq.overshoot-integral}
used for the endpoint amplitude.  Translating $\omega_0$ to the interval $(0,2)$
gives the upper-exit density~\eqref{eq.upper-exit}, and the substitution
$z = v^2+2$ turns the integral into an elementary one,
\begin{equation}
  \int_2^{\infty}\frac{z^{-1/2}\,dz}{\pi(z-1)\sqrt{z(z-2)}}
  = \frac{2}{\pi}\int_0^{\infty}\frac{dv}{(v^2+1)(v^2+2)}
  = \frac{2}{\pi}\cdot\frac{\pi}{2}\left(1 - \frac{1}{\sqrt2}\right)
  = 1 - \frac{1}{\sqrt2}\,,
\end{equation}
by partial fractions.  The same substitution with the numerator $z^{-1/2}$
removed returns $1/2$, the probability that the walk leaves through the upper
barrier.  The two-barrier version of the same computation enters the density
constant in Section~\ref{sec.edge}.

%==========================================================================
\section{The next-order outer equation}
\label{app.density}
%==========================================================================
This appendix carries the outer expansion of the Love
equation~\eqref{eq.Love} to second order.  The interior equation fixes the
next-order profile only up to a singular homogeneous component; its
coefficient is boundary data, and is fixed against the two edge layers in
Section~\ref{sec.edge}.

On compact subsets of $(-1,1)$ put $t=\xi/Q$ and write
\begin{equation}\label{eq.f-outer-expansion-app}
  f(Qt)=Q\sqrt{1-t^2}+f_1(t;Q)+o(1).
\end{equation}
The two-term symbol expansion in~\eqref{eq:outer-symbol} and homogeneity of
the half-Laplacian give
\begin{equation}\label{eq.f1-scaled-new}
  \mathcal A_t f_1=\frac{1}{2(1-t^2)^{3/2}},\qquad -1<t<1.
\end{equation}
This is an interior equation: its source is non-integrable at the endpoints,
so its boundary datum must be supplied by the edge layers.

For a function $u$ with locally integrable boundary values define
\begin{equation}\label{eq.cauchy-transform-app}
  \mathscr C_u(z)=\int_{-1}^{1}\frac{u(s)}{z-s}\,ds,
  \qquad z\in\mathbb C\setminus[-1,1].
\end{equation}
We choose $\sqrt{z^2-1}>0$ for real $z>1$ and continue it through
$\mathbb C\setminus[-1,1]$; equivalently $\sqrt{z-1}\sqrt{z+1}$, which is not
the principal branch of $\sqrt{z^2-1}$.  If $\mathscr C_{u,\pm}$ are the upper and lower
boundary values, the Plemelj formulae imply
\begin{equation}\label{eq.A-cauchy-new}
  \mathcal A_tu=
  \frac{\mathscr C_{u,+}'(t)+\mathscr C_{u,-}'(t)}{2\pi}.
\end{equation}
The formula is first applied with the endpoints cut off and then continued in
the weighted distributional class $\sqrt{1-t^2}\,u\in L^1(-1,1)$.  This
qualification is essential because the homogeneous profile below is not in
the Dirichlet $L^2$ domain of $\mathcal A_t$.

Direct evaluation of the transforms gives
\begin{align}
 \mathscr C_{\sqrt{1-t^2}}(z)
   &=\pi\left(z-\sqrt{z^2-1}\right),\nonumber\\
 \mathscr C_{(1-t^2)^{-1/2}}(z)
   &=\frac{\pi}{\sqrt{z^2-1}},\label{eq.three-transforms-new}\\
 \mathscr C_{\frac{t}{\pi\sqrt{1-t^2}}
             \log\frac{1-t}{1+t}}(z)
   &=\frac{z}{\sqrt{z^2-1}}\log\frac{z-1}{z+1}.
\end{align}
The logarithm is the branch continued from real $z>1$.  Substitution
in~\eqref{eq.A-cauchy-new} yields
\begin{equation}\label{eq.homogeneous-new}
  \mathcal A_t(1-t^2)^{-1/2}=0,
  \qquad
  \mathcal A_t\left[
  \frac{t}{\pi\sqrt{1-t^2}}\log\frac{1-t}{1+t}\right]
  =\frac{1}{(1-t^2)^{3/2}}.
\end{equation}
The first identity describes a singular homogeneous solution in the enlarged
weighted space, not a zero mode of the Dirichlet operator.  Hence the general
even interior solution with the admissible singularity is
\begin{equation}\label{eq.f1-general-new}
  f_1(t;Q)=\frac{1}{2\pi\sqrt{1-t^2}}
  \left[t\log\frac{1-t}{1+t}+\mathcal N(Q)\right].
\end{equation}

%==========================================================================
\section{Wiener--Hopf factorisation}
\label{app.wienerhopf}
%==========================================================================

We derive the multiplicative factorisation~\eqref{eq.WH-factorisation} of the
Wiener--Hopf symbol~\eqref{eq.WH-symbol}, where $K_+$ ($K_-$) is analytic and
nonzero in the upper (lower) half of the complex $p$-plane.  The derivation
proceeds in five steps: we first express
$\Sigma(p)$ as a product of three elementary factors, then split each factor
between the two half-planes, and finally verify the result.

\medskip
\noindent\textbf{Step 1.  Gamma-function identity.}
The Euler reflection formula states
\begin{equation}\label{eq.reflection}
  \Gamma(z)\,\Gamma(1 - z) = \frac{\pi}{\sin(\pi z)}
\end{equation}
for all $z \notin \mathbb{Z}$.  Setting $z = 1 + ix$ with $x \in \mathbb{R}$, the left side becomes
$\Gamma(1 + ix)\,\Gamma(-ix)$.  Using $\Gamma(1 - ix) = (-ix)\,\Gamma(-ix)$
to replace $\Gamma(-ix)$, we obtain
\begin{equation}
  \Gamma(1 + ix)\,\frac{\Gamma(1 - ix)}{-ix}
  = \frac{\pi}{\sin(\pi + i\pi x)}\,.
\end{equation}
Since $\sin(\pi + i\pi x) = -\sin(i\pi x) = -i\sinh(\pi x)$, this gives
\begin{equation}\label{eq.gamma-reflection-result}
  \Gamma(1 + ix)\,\Gamma(1 - ix) = \frac{\pi x}{\sinh(\pi x)}\,.
\end{equation}

\medskip
\noindent\textbf{Step 2.  Three-factor decomposition of $\Sigma$.}
Substituting $x = p/(2\pi)$ into~\eqref{eq.gamma-reflection-result}:
\begin{equation}\label{eq.gamma-sub}
  \Gamma \left(1 + \frac{ip}{2\pi}\right)
  \Gamma \left(1 - \frac{ip}{2\pi}\right)
  = \frac{p/2}{\sinh(p/2)}\,.
\end{equation}
Inverting:
\begin{equation}\label{eq.gamma-inverted}
  \frac{1}{\Gamma \left(1 + \frac{ip}{2\pi}\right)
  \Gamma \left(1 - \frac{ip}{2\pi}\right)}
  = \frac{\sinh(p/2)}{p/2}\,.
\end{equation}
Now we use the identity
\begin{equation}\label{eq.sinh-identity}
  \sinh \left(\frac{|p|}{2}\right)
  = \frac{e^{|p|/2}}{2}\left(1 - e^{-|p|}\right)
\end{equation}
to write
\begin{equation}
  \frac{\sinh(|p|/2)}{|p|/2}
  = \frac{e^{|p|/2}\,(1 - e^{-|p|})}{|p|}\,.
\end{equation}
Substituting into~\eqref{eq.gamma-inverted} and solving for $1 - e^{-|p|}$:
\begin{equation}\label{eq.symbol_gamma_product}
  1 - e^{-|p|}
  = |p|\,e^{-|p|/2}\,
  \frac{1}{\Gamma \left(1 + \frac{ip}{2\pi}\right)
  \Gamma \left(1 - \frac{ip}{2\pi}\right)}\,.
\end{equation}
This expresses $\Sigma(p)$ as a product of three factors: the linear zero
$|p|$, the exponential $e^{-|p|/2}$, and the reciprocal gamma-function product.
Each must be split between the upper and lower half-planes.

\medskip
\noindent\textbf{Step 3.  Splitting $|p|$.}
The function $|p|$ is not analytic on the real axis, but admits the
half-plane decomposition
\begin{equation}\label{eq.abs-split}
  |p| = (-ip)^{1/2}\,(ip)^{1/2}\,,
\end{equation}
where the branch cuts of $(-iz)^{1/2}$ and $(iz)^{1/2}$ are chosen along the
negative and positive imaginary axes respectively.  To verify
analyticity: when $\operatorname{Im} z > 0$, the quantity $-iz$ has
$\operatorname{Re}(-iz) = \operatorname{Im}(z) > 0$, so $-iz$ lies in the
open right half-plane, where the principal square root is analytic.
Similarly, when $\operatorname{Im} z < 0$,
$\operatorname{Re}(iz) = -\operatorname{Im}(z) > 0$, so $(iz)^{1/2}$ is
analytic in the lower half-plane.  On the real axis, for $p > 0$:
$(-ip)^{1/2} = p^{1/2}\,e^{-i\pi/4}$ and $(ip)^{1/2} = p^{1/2}\,e^{i\pi/4}$,
so the product gives $p\,e^0 = p = |p|$.  The case $p < 0$ is analogous.

\medskip
\noindent\textbf{Step 4.  Splitting $e^{-|p|/2}$.}
We claim
\begin{equation}\label{eq.exp_split}
  e^{-|p|/2}
  = \exp \left[-\frac{ip}{2\pi}\log(-ip)\right]
    \cdot\exp \left[\frac{ip}{2\pi}\log(ip)\right],
\end{equation}
where $\log$ denotes the principal logarithm (branch cut on
$(-\infty,0]$).  To verify, consider $p > 0$.  Then
\begin{equation}
  \log(-ip) = \log p - \frac{i\pi}{2}\,,\qquad
  \log(ip) = \log p + \frac{i\pi}{2}\,,
\end{equation}
so the exponent of the right-hand side of~\eqref{eq.exp_split} is
\begin{equation}
  -\frac{ip}{2\pi}\left(\log p - \frac{i\pi}{2}\right)
  + \frac{ip}{2\pi}\left(\log p + \frac{i\pi}{2}\right)
  = -\frac{ip}{2\pi}\cdot(-i\pi)
  = -\frac{p}{2}\,,
\end{equation}
as required (the $\log p$ terms cancel).  For $p < 0$, writing $p = -|p|$:
\begin{equation}
  \log(-ip) = \log|p| + \frac{i\pi}{2}\,,\qquad
  \log(ip) = \log|p| - \frac{i\pi}{2}\,,
\end{equation}
and the same cancellation yields the exponent $-|p|/2$.

The first factor in~\eqref{eq.exp_split} is analytic in the upper half-plane:
when $\operatorname{Im} z > 0$, $-iz$ lies in the open right half-plane (as
noted in Step~3), where $\log(-iz)$ is analytic; the product
$z\log(-iz)$ is therefore analytic, and so is its exponential.
By the same argument, the second factor is analytic in the lower
half-plane.

\medskip
\noindent\textbf{Step 5.  Splitting the gamma-function product.}
The reciprocal gamma-function product in~\eqref{eq.symbol_gamma_product}
already appears in factored form:
\begin{equation}\label{eq.gamma-split}
  \frac{1}{\Gamma \left(1 + \frac{ip}{2\pi}\right)
  \Gamma \left(1 - \frac{ip}{2\pi}\right)}
  = \frac{1}{\Gamma \left(1 - \frac{iz}{2\pi}\right)}
  \cdot \frac{1}{\Gamma \left(1 + \frac{iz}{2\pi}\right)}
  \Bigg|_{z=p}\,.
\end{equation}
We assign the first factor to $K_+$ and the second to $K_-$.  To verify
analyticity: the gamma function $\Gamma(w)$ is meromorphic with poles at
$w = 0, -1, -2, \ldots$ and no zeros.  For the first factor, setting
$w = 1 - iz/(2\pi)$, we have
\begin{equation}
  \operatorname{Re}(w) = 1 + \frac{\operatorname{Im}(z)}{2\pi}\,.
\end{equation}
When $\operatorname{Im} z > 0$, this gives $\operatorname{Re}(w) > 1 > 0$,
so $w$ cannot be a non-positive integer.  Hence $\Gamma(w) \neq 0, \infty$
and $1/\Gamma(1 - iz/(2\pi))$ is analytic and nonzero in the upper
half-plane.  By the analogous argument, $1/\Gamma(1 + iz/(2\pi))$ is analytic
and nonzero in the lower half-plane.

\medskip
\noindent\textbf{Combining the factors.}
Collecting one factor from each of Steps 3--5 into $K_+$ and the other
into $K_-$ gives the factors~\eqref{eq.Kplus} and~\eqref{eq.Kminus}, and hence
the factorisation~\eqref{eq.WH-factorisation}.
By the analyticity established in Steps 3--5, $K_+$ is analytic and nonzero
in the upper half-plane, and $K_-$ is analytic and nonzero in the lower
half-plane.

\medskip
\noindent\textbf{Behaviour near the origin.}
As $z \to 0$, we have $\Gamma(1 \pm iz/(2\pi)) = 1 + \mathcal{O}(z)$ and
the logarithmic exponentials tend to unity, which proves~\eqref{eq.K-origin}.
Their product gives $K_+(p)\,K_-(p) \sim |p|$ as $p \to 0$, consistently
with the expansion~\eqref{eq:outer-symbol}.
The linear zero of $\Sigma(p)$ is thus distributed as a square-root singularity
between the two factors.

\medskip
\noindent\textbf{The regularised symbol.}
Dividing out the square-root factors in~\eqref{eq.Kplus} and
\eqref{eq.Kminus} gives the regularised factors defined in~\eqref{eq.reg_G}:
\begin{equation}
\begin{aligned}
  G_+(z) &= \frac{K_+(z)}{\sqrt{-iz}}
  = \frac{1}{\Gamma \left(1 - \frac{iz}{2\pi}\right)}\,
    \exp \left[-\frac{iz}{2\pi}\log(-iz)\right], \\
  G_-(z) &= \frac{K_-(z)}{\sqrt{iz}} = \frac{1}{\Gamma \left(1 + \frac{iz}{2\pi}\right)}\,
  \exp \left[\frac{iz}{2\pi}\log(iz)\right] \,.
\end{aligned}
\end{equation}
The normalisation~\eqref{eq.G-norm} fixes the one-sided ladder factors used in
the two-barrier matching of Section~\ref{sec.edge}.

\end{document}